\newtheorem{theorem}{Theorem}
\newtheorem{lemma}{Lemma}
\begin{document}

\title{Optimal Beamforming for Uplink Covert Communication in MIMO GEO\\ Satellite-Terrestrial Systems}

\author{Zewei Guo,~\IEEEmembership{Graduate Student Member,~IEEE,} Ranran Sun, Yulong Shen,~\IEEEmembership{Member,~IEEE,} \\ and Xiaohong Jiang,~\IEEEmembership{Senior Member,~IEEE}
}

\markboth{}%
{Shell \MakeLowercase{\textit{et al.}}: A Sample Article Using IEEEtran.cls for IEEE Journals}


\maketitle

\begin{abstract}
This paper investigates the uplink covert communication in a multiple-input multiple-output (MIMO) satellite-terrestrial system consisting of an Earth station transmitter Alice, a geosynchronous Earth orbit (GEO) satellite receiver Bob, and multiple GEO satellite wardens around Bob, where each node in the system is equipped with an array of directional antennas. Based on beamforming and the default antenna orientation setting, we first propose a scheme for covert Alice-Bob uplink transmission. Under the perfect channel estimation scenario, we provide theoretical modeling for the system performance in terms of detection error probability (DEP), transmission outage probability (TOP) and covert rate (CR), and then explore the optimal beamforming (OB) design as well as the joint optimal beamforming and antenna orientation (JO-BA) design for CR maximization. We then extend our study to the imperfect channel estimation scenario, and conduct related performance modeling and OB/JO-BA designs for CR maximization. We also apply the techniques of semidefinite relaxation, alternating optimization, Rodrigues' rotation formula and 1-D search algorithm to develop efficient algorithms to solve the above optimization problems. Finally, extensive numerical results are presented to verify our theoretical results and to illustrate the efficiency of beamforming and antenna orientation design for supporting the uplink covert communication in MIMO GEO satellite-terrestrial systems.
\end{abstract}
\begin{IEEEkeywords}
Covert transmission, beamforming, MIMO communication, multiple wardens, satellite communication systems, directional antennas, Rician channel.
\end{IEEEkeywords}

\section{Introduction}
\IEEEPARstart{S}{atellite-terrestrial} systems rely on satellite constellations to provide seamless and ubiquitous wireless connectivity worldwide, which is particularly urgent for terrestrial users in remote and underserved areas without terrestrial network coverage \cite{Kodheli2021}. Such systems can complement and extend the existing terrestrial wireless communication networks to achieve enhanced throughput, improved energy efficiency and broader coverage, so they are recognized as an indispensable component for future Sixth-Generation (6G) networks to support various critical applications like disaster recovery, maritime communications, and global navigation \cite{Zhu2021}.

Due to the openness of space and the broadcast nature of wireless communications, satellite-terrestrial systems face serious security threats caused by various attacks from potential adversaries. Covert communication (or low probability of detection communication), which hides communication signals in background noise to conceal the presence of wireless transmission from malicious wardens, is a highly promising solution to provide a stronger form of security guarantee for satellite-terrestrial systems. Specially, covert communication is essential for satellite-terrestrial systems to support some security-critical applications, such as sensitive telemetry, coastline surveillance, and deep-sea exploration~\cite{An2024}. Therefore, how to achieve efficient covert communication in satellite-terrestrial systems now becomes an increasingly important research topic in both academia and industry\cite{chen2023,Guo2024}.

Some works have been conducted on the downlink covert communication in satellite-terrestrial systems~\cite{Yu2024,Mu2025,Jia2025,Feng2024,Song2023}. 
The authors in \cite{Yu2024} develop a random variance technique for covert signals to enable covert communication in low Earth orbit (LEO) satellite-terrestrial systems, and further study the optimal random variance design for covert signals to improve the covert rate. The works in \cite{Mu2025} and \cite{Jia2025} explore the jamming-based covert communication schemes for satellite-terrestrial systems. In particular, \cite{Mu2025} designs both the uninformed jamming scheme and cognitive jamming scheme for covert communication in space-air-ground integrated networks, while \cite{Jia2025} applies the techniques of rate-splitting multiple access and jamming to develop a novel covert communication scheme for a LEO satellite-terrestrial system. The work in \cite{Feng2024} proposes a power control scheme for covert communication in a large-scale LEO satellite-terrestrial system with multiple satellites and multiple terrestrial base stations. 
The work in \cite{Song2023} explores the reconfigurable intelligent surface (RIS)-assisted covert communication in a geosynchronous Earth orbit (GEO) satellite-terrestrial system, and further identifies the optimal design of RIS phase shifts and covert transmit power for covert rate maximization in the system.

It is notable that the uplink transmission in satellite-terrestrial systems is also crucial for various services like emergency communication, disaster response, and Internet connectivity~\cite{Bakhsh2024}. 
Recently, some initial works are devoted to the study of uplink covert communication in satellite-terrestrial systems \cite{Wang2022,Yu2025}.
The authors in \cite{Wang2022} develop an uplink covert communication scheme for space-air-ground integrated networks, and also explore the joint optimal design of transmit power and covert signaling factor for covertness performance enhancement there.
The work in \cite{Yu2025} deals with the uplink covert communication in a GEO satellite-UAV system, and examines the joint 3D beamforming and UAV trajectory design for covert rate maximization in such a system.

Although the aforementioned works provide valuable insights into uplink covert communication in satellite-terrestrial systems, several critical issues remain largely unexplored. First, these works mainly focus on the single-input single-output (SISO) or multiple-input single-output (MISO) satellite–terrestrial systems. Recently, multiple-input multiple-output (MIMO) techniques have been widely employed in satellite–terrestrial systems to provide reliable and high-throughput uplink communication \cite{Heo2023}, but how to take advantage of MIMO for more efficient uplink covert communication in satellite–terrestrial systems has not been investigated yet.
Second, the above studies consider only the single-warden scenario. In a practical satellite–terrestrial system, multiple wardens may exist to perform more aggressive detection attacks. However, how to achieve effective uplink covert communication under this practical multi-warden scenario remains an unexplored issue.
Third, the available uplink covert communication designs are based on the perfect estimation of channel state information (CSI). Due to the intrinsic channel impairments (e.g., propagation delays and frequency offsets), the channel estimation in satellite-terrestrial systems may be imperfect~\cite{Li2022}, while the uplink covert communication design under imperfect channel estimation remains to be explored.

As a first attempt to address the above issues, this paper considers a general MIMO GEO satellite-terrestrial system consisting of an Earth station transmitter, a GEO satellite receiver, and multiple GEO satellite wardens around the receiver, where each node in the system is equipped with an array of directional antennas. For this system, we investigate the optimal beamforming design to support uplink covert communication under both perfect and imperfect channel estimations. The contributions of this paper are as follows:
\begin{itemize}
\item 
{We focus on the uplink covert communication in a MIMO satellite-terrestrial system consisting of an Earth station transmitter Alice, a satellite receiver Bob, and multiple satellite wardens around Bob, where each node in the system is equipped with an array of directional antennas. By exploring the beamforming design and default antenna orientation setting, we first propose a scheme for covert Alice-Bob uplink transmission in the considered system. }
\item 
{Under the perfect channel estimation scenario, we provide theoretical modeling for the detection error probability (DEP), transmission outage probability (TOP), and covert rate (CR), and then investigate the optimal beamforming (OB) design as well as the joint optimal beamforming and antenna orientation (JO-BA) design for CR maximization.
Based on the techniques of semidefinite relaxation, alternating optimization, Rodrigues’ rotation formula and 1-D search algorithm, efficient algorithms are developed to solve these optimization problems.
}
\item
{Considering channel estimation errors, we further explore the related performance modeling and the CR maximization problems under the imperfect channel estimation scenario, and also develop the related efficient algorithms to solve these maximization problems.
}
\item
{Finally, we present extensive numerical results to verify our theoretical results and to illustrate the efficiency of beamforming and antenna orientation design for supporting the uplink covert communication in MIMO satellite-terrestrial systems.}
\end{itemize}

{The remainder of this paper is organized as follows. Section~\ref{sec:system_model} gives the system model and preliminaries. 
Section~\ref{sec:covert_scheme_perfect} introduces the OB and JO-BA designs under perfect channel estimation, while Section~\ref{sec:covert_scheme_imp} presents these two designs under imperfect channel estimation.
Section~\ref{sec:Simu} presents simulation and numerical results. The conclusions are drawn in Section~\ref{sec:Conclusion}}.

\emph{Notation}: Lower-case, lower-case bold-face, and upper-case bold-face letters represent scalars, vectors, and matrices (e.g., $a$, $\mathbf{a}$ and $\mathbf{A}$), respectively. $\otimes$, ${\rm log}(\cdot)$, ${\rm log}_2(\cdot)$, and ${\rm log}_{10}(\cdot)$ are the Kronecker product operator, the natural logarithm, the binary logarithm, and the common logarithm, respectively. ${\rm Pr}\{\cdot \}$ and $\mathbb{E}[\cdot]$ denote the probability and the expectation operators. Let $\mathrm{Tr}({\mathbf A})$, ${\mathbf A}^{-1}$, $||{\mathbf A}||$, ${\mathbf A}^{T}$, and ${\mathbf A}^\dagger$ denote the trace, inverse, Frobenius norm, transpose, and conjugate transpose of matrix ${\mathbf A}$, respectively.
$\mathbf{A}\succeq \mathbf{0}$ denotes that matrix $\mathbf{A}$ is a semi-definite matrix. 
Denote $\mathbb{C}^{M \times N}$ as the $M\times N$ dimensional complex-valued matrix space, $\mathbf I_{M}$ as the $M\times M$ identity matrix, and $|\mathscr{c}|$ denotes the modulus of the complex number~$\mathscr{c}$. 

\begin{figure}[tbp]
\centering
\includegraphics[width=0.9\linewidth]{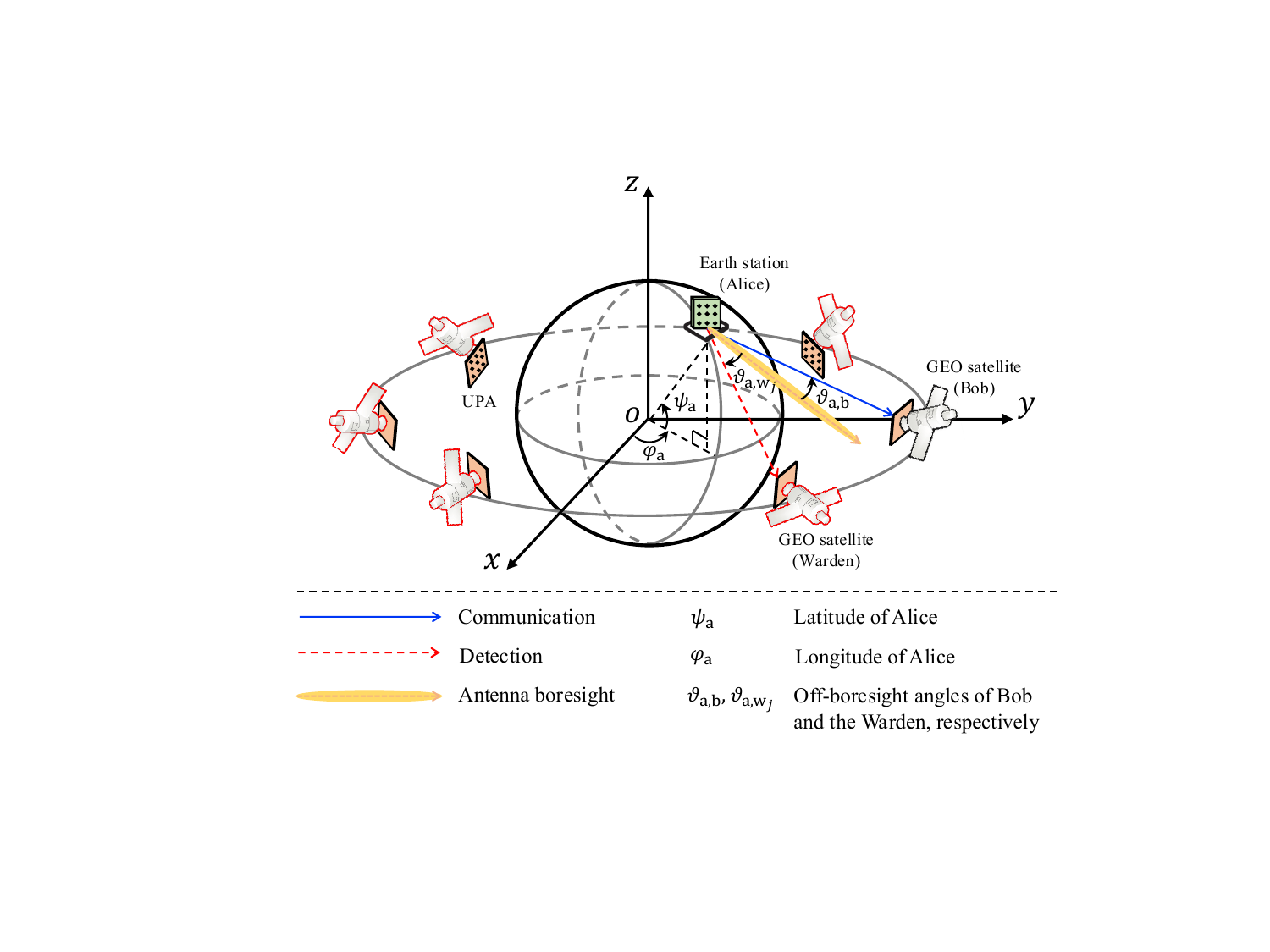}
\vspace{-12pt}
\caption{System Model.}
\label{system_model}
\vspace{-15pt}
\end{figure}

\vspace{-5pt}
\section{System Model and Preliminaries}
\label{sec:system_model}
\vspace{-3pt}
\subsection{System Model}
\vspace{-2pt}
As illustrated in Fig. \ref{system_model}, we consider a MIMO satellite-terrestrial covert communication system consisting of an Earth station transmitter Alice (``$\mathrm{a}$"), a GEO satellite receiver Bob~(``$\mathrm{b}$"), and $N_\mathrm{w}$ GEO satellite wardens around Bob. 
Alice is equipped with a uniform planar array (UPA) of $M_{\mathrm{a}}=M_{\mathrm{a},x}\times M_{\mathrm{a},z}$ directional antenna elements. Both Bob and each warden are equipped with a UPA of $M_{\mathrm{sat}} = M_{\mathrm{sat},x}\times M_{\mathrm{sat},z}$ directional antenna elements. 
In the system, Alice tries to perform covert uplink transmission to Bob by utilizing the beamforming technique, while $n_\mathrm{w}$ ($1\leq n_\mathrm{w} \leq N_\mathrm{w}$) wardens simultaneously receive the signal from Alice to determine whether the Alice-Bob transmission exists or not.
Both Bob and the $n_\mathrm{w}$ wardens are assumed to be located within Alice’s visible region.
To model the location of all nodes in the considered system, we utilize the three-dimensional Cartesian coordinate system by setting the center of the Earth as the origin. We use $\mathbf{q}_\mathrm{a}$, $\mathbf{q}_\mathrm{b}$, and $\mathbf{q}_{\mathrm{w}_j}$ to denote the locations of Alice, Bob, and the $j^{\mathrm{th}}$ warden ($1\leq j \leq n_\mathrm{w}$), respectively.
Following \cite{Jung2024}, the locations of Alice and the satellite $s\in\{\mathrm{b},
\mathrm{w}_j\}$ are given~by,

\vspace{-3mm}
\begin{footnotesize}
\begin{align}
        \mathbf{q}_\mathrm{a} &= d_{\mathrm{a}}\left[
 \cos \psi_\mathrm{a} \cos \varphi_\mathrm{a},\,
 \cos \psi_\mathrm{a} \sin \varphi_\mathrm{a},\,
 \sin \psi_\mathrm{a} \right]^T,\\
    \mathbf{q}_s &= d_{\mathrm{sat}}\left[
  \cos \varphi_s, \,
  \sin \varphi_s, \,
0  \right]^T,
\end{align}    
\end{footnotesize}where $d_{\mathrm{a}}$ is the radius of the Earth, $d_{\mathrm{sat}} \triangleq d_{\mathrm{a}} + h$ denotes the distance from the Earth’s center to the satellites, $h$ is the altitude of the GEO satellites, $\psi_\mathrm{a}$ denotes the latitude of Alice, $\varphi_\mathrm{a}$ and $\varphi_s$ represent the longitude of Alice and the satellite $s$, respectively. Consider a worst-case scenario for covert transmission, all $n_\mathrm{w}$ wardens are assumed to have access to the exact locations of Alice and Bob through passive localization technologies, such as optical rangefinders~\cite{Medvedev2001} and satellite pass predictors~\cite{Vallado2008}.

\vspace{-7pt}
\subsection{Channel Model}
\vspace{-2pt}
The concerned system involves multiple satellite–terrestrial uplinks.
To model the uplink between the Earth station Alice and the satellite $s$, the antenna patterns of Alice and the satellite $s$, path loss, and channel fading are taken into account, which are modeled as follows.

1) \textbf{Earth station antenna pattern} refers to the antenna gain of Alice in different orientations. We use ${G_{\mathrm{a},s}^{\text{dBi}}}$ to denote the antenna gain of Alice in the direction of the satellite~$s$, measured in decibels relative to isotropic (dBi),
which is given by~[\citenum{ITU465}, P. 2] 
\begin{equation}
\footnotesize
	{G}_{\mathrm{a},s}^{\text{dBi}}=
	\begin{cases}
		G_{\mathrm{a},\mathrm{max}}^{\text{dBi}}, & 0 \leq \vartheta_{\mathrm{a},s}<\vartheta_{0}, \\
G_{\mathrm{a},\mathrm{max}}^{\text{dBi}} - 25\log_{10} \left( {\vartheta_{\mathrm{a},s}} \right),  & {\vartheta_{0}}\leq \vartheta_{\mathrm{a},s} \leq 48^{\circ}, \\
-10, &  48^{\circ}<\vartheta_{\mathrm{a},s}<180^{\circ},
	\end{cases}
\label{antenna_gain_alice}
\end{equation}
where $G_{\mathrm{a},\mathrm{max}}^{\text{dBi}}$ is the maximum antenna gain of the Earth station Alice, $\vartheta_{0}$ is the minimum off-axis angle, {$\vartheta_{\mathrm{a},s}$ is the off-boresight angle of the satellite $s$ with respect to the boresight of Alice. Here, $\vartheta_{\mathrm{a},s}$ is expressed as}
\begin{equation}
\footnotesize
    \vartheta_{\mathrm{a},s} = \mathrm{arccos}\left(\frac{\mathbf{o}_{\mathrm{a}} \cdot \mathbf{u}_{\mathrm{a},s}}{\left \|\mathbf{o}_{\mathrm{a}}\right \| \cdot \left \|\mathbf{u}_{\mathrm{a},s}\right \|}\right),
\end{equation}
where $\mathbf{o}_{\mathrm{a}}$ is the antenna boresight vector of Alice and $\mathbf{u}_{\mathrm{a},s}\triangleq\mathbf{q}_s - \mathbf{q}_\mathrm{a}$ denotes the vector from Alice to the satellite~$s$.

2) \textbf{Satellite antenna pattern} refers to the satellite antenna gain in different orientations. We use $G_{s,\mathrm{a}}^{\text{dBi}}$ to denote the receiver antenna gain of the satellite $s$, measured in dBi, which can be expressed as~[\citenum{ITU672}, P. 2]
\begin{equation}
\footnotesize
    \label{antenna_gain_satellite}
    {G_{s,\mathrm{a}}^{\text{dBi}}}=
    \begin{cases}
        G_{\mathrm{sat},\mathrm{max}}^{\text{dBi}}-3\left(\dfrac{\varPhi_{s,\mathrm{a}}}{\varPhi_{3{\text {dB}}}} \right)^2, & \hspace{-5pt} 0 \leq \varPhi_{s,\mathrm{a}}<\alpha \varPhi_{3{\text {dB}}}, \\
        G_{\mathrm{sat},\mathrm{max}}^{\text{dBi}}+L_S,  & \hspace{-5pt} \alpha\varPhi_{3{\text {dB}}}\leq \varPhi_{s,\mathrm{a}} \leq \beta\varPhi_{3{\text {dB}}}, \\
        G_{\mathrm{sat},\mathrm{max}}^{\text{dBi}}+L_S-25 \log_{10} \left( \dfrac{\varPhi_{s,\mathrm{a}}}{\varPhi_{3{\text {dB}}}} \right),  & \hspace{-5pt} \beta\varPhi_{3{\text {dB}}}\leq \varPhi_{s,\mathrm{a}} \leq 90^{\circ}, \\
        L_F, & \hspace{-5pt} 90^{\circ}<\varPhi_{s,\mathrm{a}}<180^{\circ},
    \end{cases}
\end{equation}
where $G_{\mathrm{sat},\mathrm{max}}^{\text{dBi}}$ is the maximum antenna gain of the satellite's UPA, $\varPhi_{3{\text {dB}}}$ is one-half the $3{\text {dB}}$ beamwidth of the satellite antenna, $\alpha$ and $\beta$ are the antenna parameters,  $L_S$ and $L_F$ are the near-in side-lobe level and the far-out side-lobe level, respectively, {and $\varPhi_{s,\mathrm{a}}$ is the off-boresight angle of Alice with respect to the boresight of the satellite $s$.
Here, $\varPhi_{s,\mathrm{a}}$ is given~by}

\begin{equation}
\footnotesize
    \varPhi_{s,\mathrm{a}} = \mathrm{arccos}\left(\frac{\mathbf{o}_{s} \cdot \mathbf{u}_{s,\mathrm{a}}}{\left \|\mathbf{o}_{s}\right \| \cdot \left \|\mathbf{u}_{s,\mathrm{a}}\right \|}\right),
\end{equation} 
where $\mathbf{o}_{s}$ is the antenna boresight vector of the satellite $s$ and $\mathbf{u}_{s,\mathrm{a}}\triangleq\mathbf{q}_\mathrm{a} - \mathbf{q}_s$ is the vector from the satellite $s$ to Alice.

3) \textbf{Free space loss} denotes the attenuation of signal propagation along the path from Alice to the satellite $s$, which can be given by $F_{\mathrm{a},s} = \left({\lambda}/{4 \pi d_{\mathrm{a},s}}\right)^2$,
where $d_{\mathrm{a},s}$ is the distance from Alice to the satellite $s$.

\begin{figure}[tb]
\centering
\includegraphics[width=0.85\linewidth]{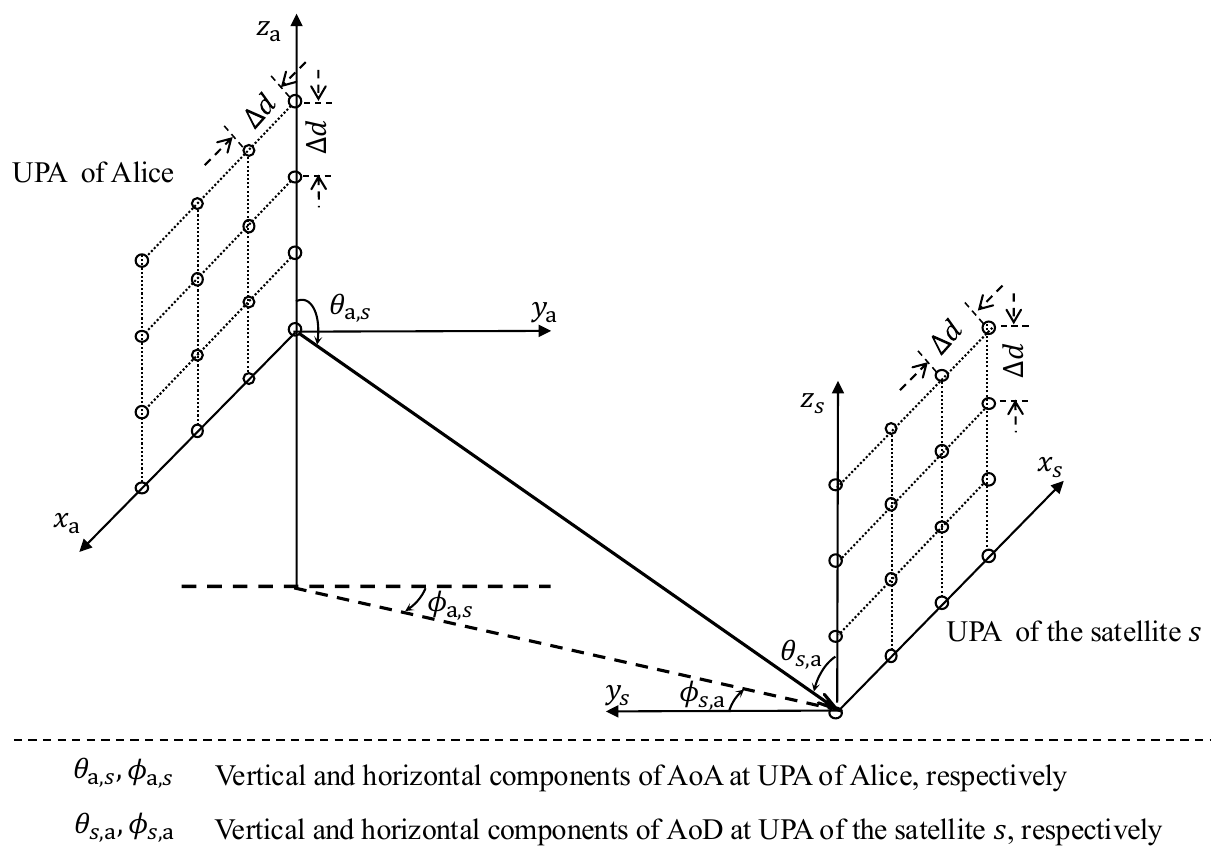}
\vspace{-10pt}
\caption{Geometrical relation between UPA of Alice and that of any satellite.}
\vspace{-17pt}
\label{UPA_model}
\end{figure}

4) \textbf{Channel fading} is caused by multipath effects. 
Same as \cite{Yu2025}, we consider all satellite-terrestrial uplinks in the considered system as quasi-static Rician fading channels, which remain constant in a time slot while changing independently from one time slot to another. Let ${\bf H}_{\mathrm{a},s} \in \mathbb{C}^{M_{\mathrm{sat}} \times M_{\mathrm{a}}}$ denote the channel fading matrix from Alice to the satellite $s$, then ${\bf H}_{\mathrm{a},s}$ is determined by
\begin{equation}
\footnotesize
\label{eq:Rician_channel}
	\mathbf{H}_{\mathrm{a},s}=\sqrt{\frac{K}{K+1}} \overline{\mathbf{H}}_{\mathrm{a},s}+\sqrt{\frac{1}{K+1}} \widetilde{\mathbf{H}}_{\mathrm{a},s}.
\end{equation} 
Here, $K$ denotes the Rician factor, which represents the ratio of the signal power in the line-of-sight (LoS) component to that in the multipath scattered component. 
$\overline{\mathbf{H}}_{\mathrm{a},s}$ and $\widetilde{\mathbf{H}}_{\mathrm{a},s}$ denote the channel fading matrices of the LoS component and the multipath scattered component, respectively. 
Due to the randomness of multipath scattering, the entries of $\widetilde{\mathbf{H}}_{\mathrm{a},s}$ are independent and identically distributed (i.i.d.) complex Gaussian random variables with zero mean and unit variance.
As illustrated in Fig.~\ref{UPA_model}, both Alice and the satellite $s$ are equipped with UPA antennas. The channel fading matrix of the LoS component $\overline{\mathbf{H}}_{\mathrm{a},s} \in \mathbb{C}^{M_{\mathrm{sat}}\times M_{\mathrm{a}}}$ is determined by
\begin{equation}
\footnotesize
    \overline{\mathbf{H}}_{\mathrm{a},s} = \mathbf{g}_{s,\mathrm{a}}\mathbf{d}_{\mathrm{a},s}^\dagger,
\end{equation}
where $\mathbf{g}_{s,\mathrm{a}} \in \mathbb{C}^{M_{\mathrm{sat}}\times1}$ is the steering vector of the UPA of the satellite $s$, formulated as
\begin{equation}
    \footnotesize
    \mathbf{g}_{s,\mathrm{a}} =\mathbf{a}_{M_{\mathrm{sat},x}}\left(\sin\theta_{s,\mathrm{a}}\right) \otimes \mathbf{a}_{M_{\mathrm{sat},z}}\left(\cos\theta_{s,\mathrm{a}}\sin\phi_{s,\mathrm{a}}\right).
\end{equation}
Here, $\theta_{s,\mathrm{a}}$ and $\phi_{s,\mathrm{a}}$ denote the vertical and horizontal components of the angle of arrival (AoA) at the UPA of the satellite~$s$, respectively. $\mathbf{a}_{M_{\mathrm{sat},x}}(\sin\theta_{s,\mathrm{a}})$ and $\mathbf{a}_{M_{\mathrm{sat},z}}(\cos\theta_{s,\mathrm{a}}\sin\phi_{s,\mathrm{a}})$ denote the corresponding horizontal and vertical components of the steering vector of the UPA of the satellite~$s$, respectively. $\mathbf{a}_{n_v}\left(x\right)$ ($n_v\in \{M_{\mathrm{sat},x},M_{\mathrm{sat},z},M_{\mathrm{a},x},M_{\mathrm{a},z}\}$) denotes the spatial signature vector at the UPA, which is given by
\begin{equation}
    \footnotesize
    \mathbf{a}_{n_v}\left(x\right)=\Big[1, e^{-{{\jmath}} 2 \pi \frac{\Delta d}{\lambda} x}, \ldots, 
        e^{-{{\jmath}} 2 \pi(n_v-1) \frac{\Delta d}{\lambda} x} \Big]^T,
\end{equation}
where ${\jmath}=\sqrt{-1}$, $\lambda$ is the carrier wavelength, and $\Delta d$ is the distance between adjacent antenna elements.
Similarly, $\mathbf{d}_{\mathrm{a},s}\in \mathbb{C}^{M_{\mathrm{a}}\times1} $ refers to the steering vector of the UPA of Alice, which can be given by
\begin{equation}
    \footnotesize
    \mathbf{d}_{\mathrm{a},s} =\mathbf{a}_{M_{\mathrm{a},x}}\left(\sin\theta_{\mathrm{a},s}\right) \otimes \mathbf{a}_{M_{\mathrm{a},z}}\left(\sin\theta_{\mathrm{a},s}\sin\phi_{\mathrm{a},s}\right),
\end{equation}
where $\theta_{\mathrm{a},s}$ and $\phi_{\mathrm{a},s}$ are the vertical and horizontal components of the angle of departure (AoD) of Alice’s UPA, respectively. $\mathbf{a}_{M_{\mathrm{a},x}}\left(\sin\theta_{\mathrm{a},s}\right)$ and $\mathbf{a}_{M_{\mathrm{a},z}}\left(\sin\theta_{\mathrm{a},s}\sin\phi_{\mathrm{a},s}\right)$ are the corresponding horizontal component and vertical component of the steering vector of Alice's UPA, respectively.

If we use ${\bf C}_{\mathrm{a},s}\in \mathbb{C}^{M_{\mathrm{sat}}\times M_{\mathrm{a}}}$ to denote the channel coefficient matrix of the satellite-terrestrial uplink between Alice and the satellite $s$, then ${\bf C}_{\mathrm{a},s}$ is determined by
\begin{equation}
	{\bf C}_{\mathrm{a},s} = \sqrt{F_{\mathrm{a},s}G_{\mathrm{a},s}G_{s,\mathrm{a}}}{\bf H}_{\mathrm{a},s},
	\label{s-t channel model}
\footnotesize
\end{equation}
where ${G}_{\mathrm{a},s}=10^{{G_{\mathrm{a},s}^{\text{dBi}}}/10}$ and ${G}_{s,\mathrm{a}}=10^{{G_{s,\mathrm{a}}^{\text{dBi}}}/10}$.

Following \cite{Wang2021,Wang2024}, we assume that Bob periodically broadcasts pilot signals to Alice before transmission, so that Alice can estimate the channel coefficient matrices of the Alice-Bob link (i.e., ${\bf C}_{\mathrm{a},\mathrm{b}}$). Since Alice and the wardens remain silent during the entire channel estimation process, they can only acquire the statistical CSI of the link from Alice to the $j^{\mathrm{th}}$ warden (i.e., ${\bf C}_{\mathrm{a},\mathrm{w}_j}$).

\vspace{-8pt}
\subsection{Noise Uncertainty at Bob and Wardens}
\vspace{-3pt}

In practice, many factors such as temperature variation, calibration errors, and environmental changes make the accurate estimation of background noise difficult~\cite{He2017}.
Note that the noise uncertainty at Bob {affects the reliability of the Alice-Bob covert transmission}, while that at the wardens {impairs their ability to detect} the covert transmission.
To this end, we consider the background noise uncertainty at Bob and the wardens, which can be modeled by the bounded uncertainty model \cite{He2017}.
Let $\sigma_s^2$ denote the received noise power at the satellite~$s$, which {lies within a finite interval around} the nominal value $\overline{\sigma}_s^2$ and {follows a log-uniform distribution} over $[\overline{\sigma}_s^2/\rho,\,\rho\overline{\sigma}_s^2]$.
Here, $\rho$ quantifies the size of the noise uncertainty and it satisfies $\rho\geq1$.
Thus, the probability density function (PDF) of $\sigma_s^2$ {is given by}
\begin{equation}
\label{eq:noise_PDF}
\footnotesize
f_{\sigma_{s}^2}(x)= \begin{cases}\frac{1}{2 \log (\rho) x}, &  \hat{\sigma}_{s}^{\mathrm{lb}} < x <  \hat{\sigma}_{s}^{\mathrm{ub}}, \\ 0, & \text {otherwise},\end{cases}
\end{equation}
where $\hat{\sigma}_{s}^{\mathrm{lb}} \triangleq \overline{\sigma}_s^2/\rho$ and $\hat{\sigma}_{s}^{\mathrm{ub}} \triangleq \rho\overline{\sigma}_s^2$.

\vspace{-8pt}
\subsection{Detection Model}
\vspace{-3pt}
In the considered system, the $n_\mathrm{w}$ wardens aim to determine whether Alice transmits to Bob or not based on their received signals.
Each warden performs a binary hypothesis test {consisting of} a null hypothesis $\mathcal{H}_0$ and an alternative hypothesis $\mathcal{H}_1$.
The former $\mathcal{H}_0$ indicates that Alice does not covertly transmit to Bob, while the latter one $\mathcal{H}_1$ indicates that Alice does the covert transmission. 
{Assume that the LoS component-based equal gain combining (LoS-EGC) technique is adopted at each satellite to combine the received signal, which has low computational complexity and does not require instantaneous CSI of transmission links~\cite{Yue2015,Jin2018}. 
Let $\mathbf{v}_{\mathrm{w}_j}=\mathbf{g}_{\mathrm{w}_j,\mathrm{a}}/\sqrt{M_{\mathrm{sat}}}$ denote the LoS-EGC vector at the $j^{\mathrm{th}}$ warden.}
Then, the received signals ${y}_{\mathrm{w}_j}[k]$ at the $j^{\mathrm{th}}$ warden can be expressed~as
\begin{equation}
\footnotesize
{y}_{\mathrm{w}_{j}}[k]= 
\begin{cases}
\mathbf{v}_{\mathrm{w}_j}^{\dagger}\mathbf{n}_{\mathrm{w}_{j}}[k], & \mathcal{H}_0, \\[2mm]  
    \sqrt{F_{\mathrm{a},\mathrm{w}_{j}} G_{\mathrm{a},\mathrm{w}_{j}}G_{\mathrm{w}_{j},\mathrm{a}}P_\mathrm{a}}\mathbf{v}_{\mathrm{w}_j}^{\dagger}\mathbf{H}_{\mathrm{a},\mathrm{w}_{j}}\mathbf{w}_\mathrm{a} f_{\mathrm{a}}[k] \\ \quad+ \mathbf{v}_{\mathrm{w}_j}^{\dagger}\mathbf{n}_{\mathrm{w}_{j}}[k], & \mathcal{H}_1,
    \label{willie_recieve}
\end{cases}
\end{equation}
where $k=1,2,\dots,N$ {denotes the time-slot index}. $\mathbf{n}_{\mathrm{w}_j}[k]\in \mathbb{C}^{M_{\mathrm{sat}} \times 1}$ {denotes} the additive white Gaussian noise {vector} at {the} $j^{\mathrm{th}}$ warden, {where each element of $\mathbf{n}_{\mathrm{w}_j}[k]$ is i.i.d. circularly symmetric complex Gaussian random variables with zero mean and variance} $\sigma_{\mathrm{w}_j}^{2}$, i.e., $\mathbf{n}_{\mathrm{w}_j}[k] \sim \mathcal{CN}(\mathbf{0}, \sigma_{\mathrm{w}_j}^2 \mathbf{I}_{M_{\mathrm{sat}}})$.
$P_\mathrm{a}$ is the transmit power of Alice. 
$\mathbf{w}_\mathrm{a} \in \mathbb{C}^{M_{\mathrm{a}} \times 1}$ is the beamforming vector of Alice and satisfies $\|\mathbf{w}_\mathrm{a}\|=1$.
$f_{\mathrm{a}}[k]$ is the covert signal transmitted by Alice at the $k^{\mathrm{th}}$ time slot and satisfies $\mathbb{E}[|f_{\mathrm{a}}[k]|^2]=1$.

According to~(\ref{willie_recieve}), the average received power $T_{\mathrm{w}_j}$ at the $j^{\mathrm{th}}$ warden is denoted by~$T_{\mathrm{w}_j} \triangleq \frac{1}{N} \sum_{k=1}^{N}\left|{y}_{\mathrm{w}_j}[k]\right|^2$. Considering an infinite number of signal observations at each warden, by using the strong law of large numbers, $T_{\mathrm{w}_j}$ is given by~\cite{Xia2024,Guo2025,Guo2025a}
\begin{equation}
\footnotesize
    T_{\mathrm{w}_j} \to 
        \begin{cases} \sigma^2_{\mathrm{w}_{j}}, & \mathcal{H}_0, \\ 
        S_{\mathrm{w}_{j}}+\sigma^2_{\mathrm{w}_{j}}, & \mathcal{H}_1,
        \end{cases}
    \label{Tw}
\end{equation}
{where $S_{\mathrm{w}_{j}}\triangleq{F_{\mathrm{a},\mathrm{w}_{j}} G_{\mathrm{a},\mathrm{w}_{j}}G_{\mathrm{w}_{j},\mathrm{a}} P_\mathrm{a}}|\mathbf{v}_{\mathrm{w}_j}^{\dagger}\mathbf{H}_{\mathrm{a},\mathrm{w}_{j}}\mathbf{w}_\mathrm{a}|^2$ denotes the received signal power at the $j^{\mathrm th}$ warden.}

To detect the covert communication, each warden performs a threshold test based on $T_{\mathrm{w}_j}$~[\citenum{Xia2024}, Eq. (6)], which is given by 
$T_{\mathrm{w}_j}  \underset{\mathcal{D}_0}{\stackrel{\mathcal{D}_1}{\gtrless}} \tau_{j}$. Here, $\tau_{j}$ denotes the detection threshold of the $j^{\mathrm{th}}$ warden, $\mathcal{D}_0$ and $\mathcal{D}_1$ are the decisions that support $\mathcal{H}_0$ and $\mathcal{H}_1$, respectively.
If $T_{\mathrm{w}_j}>\tau_j$, the warden makes a decision $\mathcal{D}_1$ that Alice performs the covert transmission, otherwise, the warden decides $\mathcal{D}_0$ that Alice does not perform the covert transmission. 
However, two types of errors may occur during the detection, which are defined as the false alarm and the missed detection. 
The former one represents the event that the warden makes a decision $\mathcal{D}_1$ in favor of $\mathcal{H}_1$ when $\mathcal{H}_0$ is true, while the latter one represents the event that the warden makes a decision $\mathcal{D}_0$ in favor of $\mathcal{H}_0$ when $\mathcal{H}_1$ is true. The probabilities of false alarm and missed detection at the $j^{\mathrm{th}}$ warden are defined as $\mathbb{P}_{\mathrm{F}_j} \triangleq \Pr\{\mathcal{D}_1 | \mathcal{H}_0\}$ and $\mathbb{P}_{\mathrm{M}_j} \triangleq \Pr\{\mathcal{D}_0 | \mathcal{H}_1\}$, respectively.

\vspace{-9pt}
\subsection{Covert Communication Scheme}
\vspace{-3pt}
This subsection presents the proposed uplink covert communication scheme. First, the CSI of the Alice-Bob channel and {the LoS CSI of the channels from Alice to each of the $n_\mathrm{w}$ wardens} are estimated.
Based on the estimated CSI, the beamforming vector for Alice-Bob covert transmission is designed. Finally, the covert uplink transmission is performed. The detailed covert communication scheme is given in {Algorithm~\ref{JBJ_scheme}}.

\vspace{-7pt}
\begin{algorithm}[h]
\label{JBJ_scheme}
\caption{Covert Communication Scheme}
\begin{footnotesize}
\KwIn{Transmitted signal from Alice\;}
\KwOut{Covert rate at Alice\;}
{To initiate communication, Bob transmits pilots to Alice, and then Alice estimates the channels from Bob to Alice (i.e., ${\bf H}_{\mathrm{a},\mathrm{b}}$)}\; 
{By utilizing localization technology, Alice obtains the LoS CSI of the channels from Alice to each warden (i.e., $\overline{\mathbf{H}}_{\mathrm{a},\mathrm{w}_j}$)}\;
{Based on the CSI of the Alice–Bob channel and the LoS CSI of the channels from Alice to each warden, the beamforming vector is designed for covert rate maximization}\;
{Based on the beamforming vector and the default antenna orientation, Alice transmits covert signals at a rate no greater than the covert rate.}\;
{After receiving the signal from Alice, Bob decodes the covert information}\; 
\end{footnotesize}
\end{algorithm}
\vspace{-15pt}

\subsection{Performance Metric}
We adopt three metrics to evaluate the system performance, i.e., {detection error probability}, {transmission outage probability}, and {covert rate}.

1) {\bf Detection Error Probability (DEP)} is used to measure the detection performance at the warden, which is defined as the sum of the probability of false alarm and the probability of missed detection. Thus, the DEP of $j^{\mathrm{th}}$ warden can be expressed as
\begin{equation}
\footnotesize
\label{error}
    \xi_{j} = \mathbb{P}_{\mathrm{F}_j}+\mathbb{P}_{\mathrm{M}_j}.
\end{equation}
Note that Alice generally cannot obtain the exact value of the detection threshold $\tau_{j}$ set at the $j^{\mathrm{th}}$ warden. To ensure the robustness of the covert communication, we consider the worst case {where} each warden sets {the optimal threshold $\tau_{j}^{*}$ that minimizes} the DEP $\xi_{j}$ (denoted as $\xi_{j}^*$).
Thus, covert communication can be achieved as long as the covertness requirement is satisfied, i.e., $\xi_{j}^{*}\geq 1-\epsilon_{\mathrm{w}}, \forall j \in \{1, 2, \dots, n_\mathrm{w}\}$, where $\epsilon_{\mathrm{w}}$ is an arbitrarily small positive constant.

2) {\bf Transmission Outage Probability (TOP)} measures the reliability of the covert uplink transmission, and is defined as the probability that the transmit rate exceeds the channel capacity. We use $\zeta$ to denote TOP, which can be given by
\begin{equation}
\label{eq:TOP_def}
\footnotesize
    \zeta = \; {\rm Pr}\{ R_{\mathrm{a}} > C_{\mathrm{a},\mathrm{b}}\},
\end{equation}
where $R_{\mathrm{a}}$ denotes the transmit rate at Alice, $C_{\mathrm{a},\mathrm{b}} = \log_2 \left ( 1+\mathrm{SNR}_{\mathrm{a},\mathrm{b}} \right)$ represents the channel capacity of the Alice-Bob link.
$\mathrm{SNR}_{\mathrm{a},\mathrm{b}}$ is the signal noise ratio (SNR) at Bob, which can be determined as follows. The received signal at Bob is given by
\begin{equation}
\footnotesize
{y}_{\mathrm{b}}[k]= 
    \sqrt{F_{\mathrm{a},\mathrm{b}} G_{\mathrm{a},\mathrm{b}}G_{\mathrm{b},\mathrm{a}}P_{\mathrm{a}}}\mathbf{v}_{\mathrm{b}}^{\dagger}\mathbf{H}_{\mathrm{a},\mathrm{b}}\mathbf{w}_{\mathrm{a}} f_{\mathrm{a}}[k] \\+ \mathbf{v}_{\mathrm{b}}^{\dagger}\mathbf{n}_{\mathrm{b}}[k], 
    \label{Bob_recieve}
\end{equation}
where $\mathbf{n}_{\mathrm{b}}[k]\in \mathbb{C}^{M_{\mathrm{sat}} \times 1}$ is the additive white Gaussian noise at Bob, the elements of $\mathbf{n}_{\mathrm{b}}[k]$ follow the i.i.d. circularly symmetric complex Gaussian random distribution with zero mean and variance $\sigma_{\mathrm{b}}^{2}$, i.e., $\mathbf{n}_{\mathrm{b}}[k] \sim \mathcal{C N}\left(\mathbf{0}, \sigma_{\mathrm{b}}^2 \mathbf{I}_{M_{\mathrm{sat}}}\right)$. $\mathbf{v}_{\mathrm{b}} \in  \mathbb{C}^{M_{\mathrm{sat}} \times 1}$ is the LoS-EGC vector of Bob,  and $\mathbf{v}_{\mathrm{b}}=\mathbf{g}_{\mathrm{b},\mathrm{a}}/\sqrt{M_{\mathrm{sat}}}$. 
{Let $S_{\mathrm b} \triangleq{F_{\mathrm{a},\mathrm{b}} G_{\mathrm{a},\mathrm{b}}G_{\mathrm{b},\mathrm{a}} P_{\mathrm{a}}|\mathbf{v}_{\mathrm{b}}^{\dagger}\mathbf{H}_{\mathrm{a},\mathrm{b}} \mathbf{w}_{\mathrm{a}}|^2}$ denote the received signal power at Bob.} Then, ${\mathrm{SNR}}_{\mathrm{a},\mathrm{b}}$ can be given by
\begin{equation}
\footnotesize
    {\mathrm{SNR}}_{\mathrm{a},\mathrm{b}} = \frac{S_{\mathrm b}}{\sigma_{\mathrm{b}}^2}.
\end{equation}
To ensure the desired quality of service (QoS), the TOP of the covert uplink transmission should satisfy a tolerance threshold $\epsilon_{\mathrm{b}}$, i.e., $\zeta \leq \epsilon_{\mathrm{b}}$.

3) {\bf Covert Rate (CR)} is used to measure the performance of covert transmission from Alice to Bob, defined as the maximum transmit rate subject to the covertness requirement and the reliability constraints, which is given by 

\vspace{-2mm}
\begin{footnotesize}
\begin{subequations}
\label{rab}
    \begin{align}
      &R_{\mathrm{a}}^*=\max\; R_{\mathrm{a}} \\
     \text {s.t.}\quad & \xi_{j}^{*} \geq 1-\epsilon_\mathrm{w},\; \forall j \in \{1, 2, \dots, n_\mathrm{w}\}, \\
     & \zeta \leq \epsilon_{\mathrm{b}}.
    \end{align}
\end{subequations}
\end{footnotesize}

\section{Covert Communication Scheme under\\ Perfect Channel Estimation}
\label{sec:covert_scheme_perfect}

In this section, we study the covert communication scheme under perfect channel estimation for the Alice–Bob channel and the LoS channels from Alice to the $n_\mathrm{w}$ wardens.
First, we provide the theoretical results for DEP and TOP, which are associated with the covertness and reliability requirements, respectively.
Then, we explore the optimal beamforming design as well as the joint optimal beamforming and antenna orientation design for CR maximization.

\vspace{-15pt}
\subsection{Covertness Requirement under Perfect Channel Estimation}
\vspace{-3pt}
According to the definition of DEP in~(\ref{error}), we first need to determine the probability of false alarm  $\mathbb{P}_{{ \mathrm{F}_j}}$ and the probability of missed detection $\mathbb{P}_{{\mathrm{M}_j}}$. Based on the definition of $\mathbb{P}_{{ \mathrm{F}_j}}$, we can have
\begin{equation}
    \footnotesize
    \mathbb{P}_{{\rm F}_{j}}
    ={\rm Pr}\{\sigma^2_{\mathrm{w}_{j}} \geq \tau_j |\mathcal{H}_0 \}. 
\end{equation}
Following the PDF of $\sigma^2_{\mathrm{w}_{j}}$ given in \eqref{eq:noise_PDF}, the probability of false alarm $\mathbb{P}_{{\rm F}_{j}}$ can be given by

\vspace{-3mm}
\begin{footnotesize}
\begin{align}
    \mathbb{P}_{{\rm F}_{j}}
    &= \begin{cases} 1, & \tau_j < \hat{\sigma}_{\mathrm{w}_j}^{\mathrm{lb}}, \\ 
    \frac{\log\left(\hat{\sigma}_{\mathrm{w}_j}^{\mathrm{ub}}\right)-\log(\tau_j)}{2\log(\rho)}, &\hat{\sigma}_{\mathrm{w}_j}^{\mathrm{lb}} \leq \tau_j \leq \hat{\sigma}_{\mathrm{w}_j}^{\mathrm{ub}},\\
    0, &  \hat{\sigma}_{\mathrm{w}_j}^{\mathrm{ub}}<\tau_j. \end{cases} \label{eq:PF}
\end{align}    
\end{footnotesize}

We proceed to derive the probability of missed detection $\mathbb{P}_{{\mathrm{M}_j}}$. Based on its definition, we have
\begin{equation}
\footnotesize
    \begin{aligned}
        \mathbb{P}_{{\rm M}_{j}}
        &={\rm Pr}\{ S_{\mathrm{w}_j} + \sigma^2_{\mathrm{w}_{j}} \leq \tau_j |\mathcal{H}_0 \}.\\
    \end{aligned}
    \label{eq:PM_origin}
\end{equation}
Note that $\mathbb{P}_{{\rm M}_{j}}$ depends on two independent random variables, i.e., $S_{\mathrm{w}_j}$ and $\sigma^2_{\mathrm{w}_{j}}$. Before solving~\eqref{eq:PM_origin}, 
we first derive the PDF of $S_{\mathrm{w}_j}$ which is determined by $\mathbf{v}_{\mathrm{w}_j}^{\dagger}\mathbf{H}_{\mathrm{a},\mathrm{w}_{j}} \mathbf{w}_\mathrm{a}$.
Following~\eqref{eq:Rician_channel}, we have
\begin{equation}
\footnotesize\hspace{-1.5mm}\mathbf{v}_{\mathrm{w}_j}^{\dagger}\mathbf{H}_{\mathrm{a},\mathrm{w}_{j}}\mathbf{w}_\mathrm{a}  = \sqrt{\frac{K}{K+1}} \mathbf{v}_{\mathrm{w}_j}^{\dagger}\overline{\mathbf{H}}_{\mathrm{a},\mathrm{w}_j}\mathbf{w}_\mathrm{a}+\sqrt{\frac{1}{K+1}} \mathbf{v}_{\mathrm{w}_j}^{\dagger}\widetilde{\mathbf{H}}_{\mathrm{a},\mathrm{w}_j}\mathbf{w}_\mathrm{a}.
\end{equation}
Note that the randomness of $\mathbf{v}_{\mathrm{w}_j}^{\dagger}\mathbf{H}_{\mathrm{a},\mathrm{w}_{j}}\mathbf{w}_\mathrm{a}$ comes from $\widetilde{\mathbf{H}}_{\mathrm{a},\mathrm{w}_j}$. Since the design of $\mathbf{w}_\mathrm{a}$ is related to $\mathbf{H}_{\mathrm{a},\mathrm{b}}$ and $\overline{\mathbf{H}}_{\mathrm{a},\mathrm{w}_j}$ but independent of $\widetilde{\mathbf{H}}_{\mathrm{a},\mathrm{w}_j}$, the term $\mathbf{v}_{\mathrm{w}_j}^{\dagger}\overline{\mathbf{H}}_{\mathrm{a},\mathrm{w}_j}\mathbf{w}_\mathrm{a}$ is deterministic, and $\mathbf{v}_{\mathrm{w}_j}^{\dagger}\widetilde{\mathbf{H}}_{\mathrm{a},\mathrm{w}_j}\mathbf{w}_\mathrm{a}$ still follows the circularly symmetric complex Gaussian distribution. 
Thus, we can know from~[\citenum{Yan2016}, P. 4] that $\sqrt{S_{\mathrm{w}_j}}=\sqrt{F_{\mathrm{a},\mathrm{w}_{j}} G_{\mathrm{a},\mathrm{w}_{j}}G_{\mathrm{w}_j,\mathrm{a}} P_\mathrm{a}}|\mathbf{v}_{\mathrm{w}_j}^{\dagger}\mathbf{H}_{\mathrm{a},\mathrm{w}_{j}}\mathbf{w}_\mathrm{a}|$ follows the Rician distribution with Rician factor $\kappa_j$ and total power $\omega_j$, where $\kappa_j = {K|\mathbf{v}_{\mathrm{w}_j}^{\dagger} \overline{\mathbf{H}}_{\mathrm{a},\mathrm{w}_j} \mathbf{w}_\mathrm{a}|^2}$ and $\omega_j ={{F_{\mathrm{a},\mathrm{w}_{j}} G_{\mathrm{a},\mathrm{w}_{j}}G_{\mathrm{w}_{j},\mathrm{a}} P_\mathrm{a} }(1+ \kappa_j)}/{(1+K)}$. Then, the PDF of $\sqrt{S_{\mathrm{w}_j}}$ can be given by

\vspace{-3.5mm}
\begin{footnotesize}
 \begin{align}
        f_{\sqrt{S_{\mathrm{w}_j}}}(x)=&\frac{2(\kappa_j+1) x}{\omega_j} \exp \left(-\kappa_j-\frac{(\kappa_j+1) x^2}{\omega_j}\right) \nonumber\\
        &\times I_0\left(2 \sqrt{\frac{\kappa_j(\kappa_j+1)}{\omega_j}}x\right),
\end{align}    
\end{footnotesize}

\hspace{-3.5mm}where $I_0(\cdot)$ is the first-kind zero-order modified Bessel function. Since $I_0(\cdot)$ is expressed in an integral form, deriving a closed-form expression for $\sqrt{S_{\mathrm{w}_j}}$ is analytically intractable. To facilitate analysis, we interpret the Rician distribution as a special case of the Nakagami-m distribution~[\citenum{Yan2016}, Eq. (16)]. Thus, the PDF of $\sqrt{S_{\mathrm{w}_j}}$ can be expressed as 
\begin{equation}
\footnotesize
f_{\sqrt{S_{\mathrm{w}_j}}}(x)=\left(\frac{m_j}{\omega_j}\right)^{m_j} \frac{2x^{2 m_j-1}}{\Gamma(m_j)} \exp \left(-\frac{m_j}{\omega_j} x^2\right),
\end{equation}
where $\Gamma(a)=\int_0^{\infty} t^{a-1}e^{-t}  d t$ is the Gamma function. The shape parameter $m_j$ is defined as $m_j = {(\kappa_j + 1)^2}/({2\kappa_j + 1})$, and satisfies $m_j \geq 1$.
Then, the PDF of $S_{\mathrm{w}_j}$ can be given by

\vspace{-2mm}
\begin{footnotesize}
\begin{equation}
\label{eq:pdf_sw_j_i}
\begin{aligned}
    f_{S_{\mathrm{w}_j}}(x)
    &=f_{\sqrt{S_{\mathrm{w}_j}}}\left(\sqrt{x}\right)\left(\sqrt{x}\right)'\\
    &=\left(\frac{m_j}{\omega_j}\right)^{m_j} \frac{x^{ m_j-1}}{\Gamma(m_j)} \exp \left(-\frac{m_j}{\omega_j} x\right).
\end{aligned}
\end{equation}    
\end{footnotesize}

We now determine the cumulative distribution function (CDF) of $S_{\mathrm{w}_{j}} +\sigma^2_{\mathrm{w}_{j}}$. Let $Z_{j} \triangleq S_{\mathrm{w}_{j}} +\sigma^2_{\mathrm{w}_{j}}$, the CDF of $Z_{j}$ can be given by
\begin{equation}
\footnotesize
  \begin{aligned}
    F_{Z_j}(z)
        =&\;\mathrm{Pr}\{S_{\mathrm{w}_{j}} + \sigma^2_{\mathrm{w}_{j}}\leq z\}\\
        =&\int_{0}^{z}\int_{0}^{z-S_{\mathrm{w}_{j}}}f_{\sigma^2_{\mathrm{w}_{j}}}(\sigma^2_{\mathrm{w}_{j}})f_{S_{\mathrm{w}_{j}}}(S_{\mathrm{w}_{j}})d\sigma^2_{\mathrm{w}_{j}}dS_{\mathrm{w}_{j}}\\
        =&\int_{0}^{z-\hat{\sigma}_{\mathrm{w}_j}^{\mathrm{lb}}}\frac{\log(z-S_{\mathrm{w}_{j}})-\log\left(\hat{\sigma}_{\mathrm{w}_j}^{\mathrm{lb}}\right)}{2\log(\rho)}  \\
        &\times  \left(\frac{m_j}{\omega_j}\right)^{m_j}\frac{S_{\mathrm{w}_j}^{m_j-1}}{\Gamma(m_j)} \exp \left(-\frac{m_j}{\omega_j} S_{\mathrm{w}_{j}}\right) \, dS_{\mathrm{w}_{j}}.\label{F_z_w} 
\end{aligned}  
\end{equation}
Based on \eqref{eq:PM_origin} and \eqref{F_z_w}, the probability of missed detection $\mathbb{P}_{{\rm M}_{j}}$ can be given by

\vspace{-2mm}
\begin{footnotesize}
    \begin{align}
        \mathbb{P}_{{\rm M}_{j}}
        &=\begin{cases} 0, & \tau_j < \hat{\sigma}_{\mathrm{w}_j}^{\mathrm{lb}}, \\F_{Z_j}(\tau_j),  &  \hat{\sigma}_{\mathrm{w}_j}^{\mathrm{lb}} \le \tau_j. \end{cases}
        \label{eq:PM}
    \end{align}
\end{footnotesize}

Substituting \eqref{eq:PF} and \eqref{eq:PM} into ~(\ref{error}), we can obtain the DEP $\xi_j$ for the $j^{\mathrm{th}}$ warden, which is expressed as

\vspace{-2mm}
\begin{footnotesize}
 \begin{align}
 \label{DEP}
        \xi_j 
        =& \begin{cases} 1, & \tau_j < \hat{\sigma}_{\mathrm{w}_j}^{\mathrm{lb}}, \\  
        \frac{\log\left(\hat{\sigma}_{\mathrm{w}_j}^{\mathrm{ub}}\right)-\log(\tau_j)}{2\log(\rho)} + F_{Z_j}(\tau_j), &\hat{\sigma}_{\mathrm{w}_j}^{\mathrm{lb}} \le \tau_j \le \hat{\sigma}_{\mathrm{w}_j}^{\mathrm{ub}},\\  F_{Z_j}(\tau_j), &\hat{\sigma}_{\mathrm{w}_j}^{\mathrm{ub}}<\tau_j. \end{cases}
\end{align}
\end{footnotesize}It is noted that $F_{Z_j}(\tau_j)$ does not admit a closed-form expression, and is therefore unsuitable for further analysis of the DEP. 
To ensure the robustness of the covert communication scheme, we derive and analyze the lower bound of $F_{Z_j}(\tau_j)$, which is provided in the following lemma.

\begin{lemma}
\label{Theorem:F_z_lb}
   {In the considered system, the CDF of $Z_j$ is lower-bounded by $F_{Z_j}(\tau_j)\geq \check{F}_{Z_j}(\tau_j)$. Here, $\check{F}_{Z_j}(\tau_j)$ is given~by}

    \vspace{-2mm}
    \begin{footnotesize}
    \begin{equation}
        \begin{aligned}
           \check{F}_{Z_j}(\tau_j)\triangleq  \frac{\ \omega_j\log \big(\frac{\tau_j}{\hat{\sigma}_{\mathrm{w}_j}^{\mathrm{lb}}}\big) \left( \gamma \left(m_j+1,\mu_j(\tau_j)\right)- \mu_j(\tau_j) \gamma \left(m_j, \mu_j(\tau_j)\right)\right)}{2 \log (\rho) \Gamma (m_j+1) (\hat{\sigma}_{\mathrm{w}_j}^{\mathrm{lb}}- \tau_j)},           
            \label{eq:F_zw_lower_bound}
        \end{aligned}
    \end{equation}  
    \end{footnotesize}where $\Gamma(a,x)=\int_x^\infty t^{a-1}e^{-t}dt$ is the upper incomplete Gamma function, $\gamma(a,x)=\int_0^x t^{a-1}e^{-t}dt$ is the lower incomplete Gamma function, and $\mu_j(x)\triangleq{m_j \left(x-\hat{\sigma}_{\mathrm{w}_j}^{\mathrm{lb}}\right)}/{\omega_j}$.
\end{lemma}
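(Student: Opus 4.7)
\emph{Plan.} The only obstruction to a closed form in \eqref{F_z_w} is the logarithmic factor $\log(\tau_j-S_{\mathrm{w}_j})-\log(\hat{\sigma}_{\mathrm{w}_j}^{\mathrm{lb}})$. My plan is to replace it by an affine minorant in $S_{\mathrm{w}_j}$, after which the remaining integral separates into two standard lower incomplete Gamma integrals. Since $\log(\cdot)$ is concave and its argument $\tau_j-S_{\mathrm{w}_j}$ sweeps the interval $[\hat{\sigma}_{\mathrm{w}_j}^{\mathrm{lb}},\tau_j]$ as $S_{\mathrm{w}_j}$ varies over the integration domain, the chord joining the two endpoints of $\log$ on this interval lies below the graph, yielding the linear minorant
\begin{equation*}
\log(\tau_j-S_{\mathrm{w}_j})-\log(\hat{\sigma}_{\mathrm{w}_j}^{\mathrm{lb}})\;\ge\;\frac{\log(\tau_j/\hat{\sigma}_{\mathrm{w}_j}^{\mathrm{lb}})}{\tau_j-\hat{\sigma}_{\mathrm{w}_j}^{\mathrm{lb}}}\bigl(\tau_j-\hat{\sigma}_{\mathrm{w}_j}^{\mathrm{lb}}-S_{\mathrm{w}_j}\bigr),
\end{equation*}
which is tight at both endpoints $S_{\mathrm{w}_j}=0$ and $S_{\mathrm{w}_j}=\tau_j-\hat{\sigma}_{\mathrm{w}_j}^{\mathrm{lb}}$. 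Substituting this into \eqref{F_z_w} preserves the inequality because the Nakagami-$m$ density factor is non-negative on the integration domain.

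After factoring the constant $\log(\tau_j/\hat{\sigma}_{\mathrm{w}_j}^{\mathrm{lb}})/[2\log(\rho)(\tau_j-\hat{\sigma}_{\mathrm{w}_j}^{\mathrm{lb}})]$ outside the integral, I will apply the normalization $u=m_j S_{\mathrm{w}_j}/\omega_j$, which maps the upper limit of integration precisely to $\mu_j(\tau_j)$ and cancels the prefactor $(m_j/\omega_j)^{m_j}/\Gamma(m_j)$. The affine weight then splits the integrand into a $u^{m_j-1}e^{-u}$ piece and a $u^{m_j}e^{-u}$ piece, which integrate to $\gamma(m_j,\mu_j(\tau_j))$ and $\gamma(m_j+1,\mu_j(\tau_j))$ by definition. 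The remaining work is algebraic repackaging: pull $\omega_j/m_j$ out of the $\gamma(m_j,\cdot)$ term to convert its coefficient $(\tau_j-\hat{\sigma}_{\mathrm{w}_j}^{\mathrm{lb}})$ into $\mu_j(\tau_j)$, use $\Gamma(m_j+1)=m_j\Gamma(m_j)$ to absorb the leftover $\omega_j/m_j$, and flip the overall sign by rewriting $\tau_j-\hat{\sigma}_{\mathrm{w}_j}^{\mathrm{lb}}=-(\hat{\sigma}_{\mathrm{w}_j}^{\mathrm{lb}}-\tau_j)$ in the denominator so that the expression matches \eqref{eq:F_zw_lower_bound} exactly.

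The main (and essentially only) obstacle is choosing the right minorant. A tangent line to $\log$ would yield a majorant rather than a minorant, while a quadratic or higher-order approximation would not integrate against the Gamma kernel in closed form. The two-point chord is the unique affine function tight at both ends of the integration interval, so it gives the sharpest affine lower bound that is still compatible with a closed-form expression in $\gamma(m_j,\cdot)$ and $\gamma(m_j+1,\cdot)$. The argument is implicitly restricted to $\tau_j\ge\hat{\sigma}_{\mathrm{w}_j}^{\mathrm{lb}}$, i.e., the only regime where $F_{Z_j}(\tau_j)$ is non-trivial and $\mu_j(\tau_j)\ge 0$ keeps the incomplete gamma expressions meaningful—exactly the regime relevant for the subsequent DEP analysis.
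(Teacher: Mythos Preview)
Your proposal is correct and follows essentially the same approach as the paper: the paper also replaces the logarithmic factor by the secant (chord) lower bound coming from concavity on $[0,\tau_j-\hat{\sigma}_{\mathrm{w}_j}^{\mathrm{lb}}]$, then integrates against the Nakagami-$m$ density to obtain the two lower incomplete Gamma terms. In fact, you spell out the substitution $u=m_jS_{\mathrm{w}_j}/\omega_j$ and the $\Gamma(m_j+1)=m_j\Gamma(m_j)$ repackaging more explicitly than the paper, which simply states the final expression after establishing the affine minorant.
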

\begin{proof}
    Please refer to Appendix \ref{Proof:F_z_lb}.
\end{proof}

According to Lemma~\ref{Theorem:F_z_lb}, the lower bound of the DEP $\xi_j$ for the $j^{\mathrm{th}}$ warden is denoted by $\check{\xi}_j$, and is given by

\vspace{-2mm}
\begin{footnotesize}
\begin{align}
    \xi_j \geq \check{\xi}_j& =  \begin{cases} 1, & \tau_j < \hat{\sigma}_{\mathrm{w}_j}^{\mathrm{lb}}, \\  
        \frac{\log\left(\hat{\sigma}_{\mathrm{w}_j}^{\mathrm{ub}}\right)-\log(\tau_j)}{2\log(\rho)} + \check{F}_{Z_j}(\tau_j), &\hat{\sigma}_{\mathrm{w}_j}^{\mathrm{lb}} \le \tau_j \le \hat{\sigma}_{\mathrm{w}_j}^{\mathrm{ub}},\\  \check{F}_{Z_j}(\tau_j), &\hat{\sigma}_{\mathrm{w}_j}^{\mathrm{ub}}<\tau_j. \end{cases}
\end{align}    
\end{footnotesize}Note that the detection threshold $\tau_j$ set by the $j^{\mathrm{th}}$ warden is unknown to Alice. To ensure the effectiveness of the covert communication scheme, we consider the worst-case scenario where the $j^{\mathrm{th}}$ warden adopts the optimal detection threshold $\tau_j^{*}$ to minimize $\check{\xi}_j$. To this end, we further identify $\tau_j^{*}$, which is given in the following lemma.


\begin{lemma}
\label{Theorem:Opt_tau}
{The optimal detection threshold of the $j^{\mathrm{th}}$ warden is determined as $\tau_j^* = \min(\tau_j',\hat{\sigma}_{\mathrm{w}_j}^{\mathrm{ub}})$, here $\tau_{j}^{'}$ satisfies the following condition}
\begin{equation}
\footnotesize
    \begin{aligned}
          \mu_j(\tau_j') \Gamma \left(m_j,\mu_j(\tau_j')\right)-  \gamma \left(m_j+1,\mu_j(\tau_j')\right) \nu_j(\tau_j')=0,
    \end{aligned}
    \label{lemma1_equation}
\end{equation}
where $\nu_j(x)\triangleq \frac{x \log \left({x}/{\hat{\sigma}_{\mathrm{w}_j}^{\mathrm{lb}}}\right)}{x-\hat{\sigma}_{\mathrm{w}_j}^{\mathrm{lb}}}-1$.


\end{lemma}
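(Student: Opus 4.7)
The plan is to minimize the piecewise function $\check{\xi}_j(\tau_j)$ over $\tau_j>0$ by analyzing each of its three pieces separately. On the first piece $\tau_j<\hat{\sigma}_{\mathrm{w}_j}^{\mathrm{lb}}$ we have $\check{\xi}_j\equiv 1$, which is the maximum possible DEP, so no minimizer lies there. The problem then reduces to comparing the infimum of the middle piece on $[\hat{\sigma}_{\mathrm{w}_j}^{\mathrm{lb}},\hat{\sigma}_{\mathrm{w}_j}^{\mathrm{ub}}]$ with the infimum of the tail piece on $(\hat{\sigma}_{\mathrm{w}_j}^{\mathrm{ub}},\infty)$.

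For the tail piece $\check{\xi}_j=\check{F}_{Z_j}(\tau_j)$, the object $\check{F}_{Z_j}$ is a lower bound on the CDF of the nonnegative random variable $Z_j=S_{\mathrm{w}_j}+\sigma^2_{\mathrm{w}_j}$ and therefore should be nondecreasing in $\tau_j$. I would verify this monotonicity either directly from the integral representation in (\ref{F_z_w}) or by differentiating the closed form (\ref{eq:F_zw_lower_bound}) and using the recurrence $\gamma(m_j+1,u)=m_j\gamma(m_j,u)-u^{m_j}e^{-u}$ to show that the derivative is nonnegative. Consequently the tail-region infimum is attained at the left endpoint $\hat{\sigma}_{\mathrm{w}_j}^{\mathrm{ub}}$, and the whole problem collapses to optimizing the middle-piece expression on $[\hat{\sigma}_{\mathrm{w}_j}^{\mathrm{lb}},\hat{\sigma}_{\mathrm{w}_j}^{\mathrm{ub}}]$.

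On the middle piece, I would differentiate $\check{\xi}_j$ with respect to $\tau_j$ and solve the first-order condition. The noise-uncertainty log-term contributes $-1/(2\log(\rho)\tau_j)$, while $\check{F}_{Z_j}$ requires the product/quotient rule over its three factors: $\log(\tau_j/\hat{\sigma}_{\mathrm{w}_j}^{\mathrm{lb}})$, $1/(\hat{\sigma}_{\mathrm{w}_j}^{\mathrm{lb}}-\tau_j)$, and the bracketed combination $\gamma(m_j+1,\mu_j(\tau_j))-\mu_j(\tau_j)\gamma(m_j,\mu_j(\tau_j))$. The chain rule on the bracketed term uses $\mu_j'(\tau_j)=m_j/\omega_j$ together with $\frac{d}{du}\gamma(a,u)=u^{a-1}e^{-u}$; the recurrence relating $\gamma(m_j+1,\cdot)$ to $\gamma(m_j,\cdot)$ then cancels several polynomial-exponential terms. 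Converting any surviving $\gamma(m_j,\mu_j)$ through $\gamma(m_j,u)=\Gamma(m_j)-\Gamma(m_j,u)$ produces the upper-incomplete term $\Gamma(m_j,\mu_j(\tau_j'))$ appearing in (\ref{lemma1_equation}), while $\nu_j(\tau_j)$ arises as the natural grouping of the logarithmic prefactor $\tau_j\log(\tau_j/\hat{\sigma}_{\mathrm{w}_j}^{\mathrm{lb}})/(\tau_j-\hat{\sigma}_{\mathrm{w}_j}^{\mathrm{lb}})$ against the $-1/(2\log(\rho)\tau_j)$ contribution. Clearing the common positive prefactor from the zero of the derivative yields exactly (\ref{lemma1_equation}).

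To conclude, I would verify that $\tau_j'$ is a minimum rather than a maximum by tracking boundary behavior: $\check{\xi}_j\to 1$ as $\tau_j\downarrow\hat{\sigma}_{\mathrm{w}_j}^{\mathrm{lb}}$, the derivative is negative immediately to the right of that endpoint, and the increasing contribution of $\check{F}_{Z_j}$ eventually overtakes the decreasing log-term, so any interior stationary point is a minimum. If $\tau_j'\le\hat{\sigma}_{\mathrm{w}_j}^{\mathrm{ub}}$ then $\tau_j^*=\tau_j'$; otherwise the middle piece is still strictly decreasing at $\hat{\sigma}_{\mathrm{w}_j}^{\mathrm{ub}}$ and the tail piece increases from there, so the overall minimizer is $\hat{\sigma}_{\mathrm{w}_j}^{\mathrm{ub}}$. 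The two cases combine into $\tau_j^*=\min(\tau_j',\hat{\sigma}_{\mathrm{w}_j}^{\mathrm{ub}})$. The main obstacle will be the algebraic bookkeeping inside the middle-piece derivative: coaxing the raw chain-rule output, which is cluttered with $\gamma$, $u^{m_j-1}e^{-u}$, and prefactor terms, into the compact bilinear form of (\ref{lemma1_equation}) hinges on disciplined use of the incomplete-gamma recurrence and the complementarity identity, and this is the step most likely to demand care.
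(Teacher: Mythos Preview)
Your overall strategy---trivialize the first piece, show the tail piece is increasing so its infimum sits at $\hat{\sigma}_{\mathrm{w}_j}^{\mathrm{ub}}$, then differentiate the middle piece and set the derivative to zero---matches the paper's approach exactly, and your description of how the incomplete-gamma recurrences collapse the raw derivative into the form~(\ref{lemma1_equation}) is correct in outline.

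The genuine gap is uniqueness of the interior stationary point $\tau_j'$. Your argument that ``the derivative is negative immediately to the right of $\hat{\sigma}_{\mathrm{w}_j}^{\mathrm{lb}}$, and the increasing contribution of $\check{F}_{Z_j}$ eventually overtakes the decreasing log-term, so any interior stationary point is a minimum'' does not rule out multiple stationary points: boundary behavior alone permits a min--max--min pattern, in which case not every stationary point is a minimum and the formula $\tau_j^*=\min(\tau_j',\hat{\sigma}_{\mathrm{w}_j}^{\mathrm{ub}})$ would be ill-defined. The paper closes this gap by rewriting the first-order condition as $h_{3,j}(\tau_j)=\nu_j(\tau_j)$ with $h_{3,j}(\tau_j)=\mu_j(\tau_j)\Gamma(m_j,\mu_j(\tau_j))/\gamma(m_j+1,\mu_j(\tau_j))$, and then proving that $h_{3,j}$ is strictly decreasing from $+\infty$ to $0$ while $\nu_j$ is strictly increasing from $0$ to $+\infty$, forcing a unique crossing. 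The monotonicity of $h_{3,j}$ is the nontrivial part: the paper pushes through two further auxiliary functions and ultimately invokes a sharp bound on $\Gamma(m_j,u)$ to get the sign. This is the missing idea in your proposal, and without it the case split at the end is not justified.

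A minor point: the heuristic ``$\check{F}_{Z_j}$ is a lower bound on a CDF and therefore should be nondecreasing'' is not a proof, since lower bounds on monotone functions need not be monotone; you would indeed have to carry out the differentiation you mention, which the paper does and which goes through easily once one notes that the tail-piece derivative has only positive summands.
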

\begin{proof}
    Please refer to Appendix \ref{Proof:Opt_tau}.
\end{proof}

By substituting $\tau_j^*$ into $\check{\xi}_{j}$, we can obtain the minimum $\check{\xi}_{j}$, denoted by $\check{\xi}_{j}^*$, which can be given by
\begin{equation}
    \footnotesize
    \check{\xi}_{j}^* = 1-\frac{\omega_j\log \big(\frac{ \tau_j^*}{\hat{\sigma}_{\mathrm{w}_j}^{\mathrm{lb}}}\big) \left(\mu_j(\tau_j^*) \Gamma \left(m_j,\mu_j(\tau_j^*)\right)+ \gamma \left(m_j+1,\mu_j\left(\tau_j^*\right)\right)\right)}{2  \log (\rho) \Gamma (m_j+1) ( \tau_j^*-\hat{\sigma}_{\mathrm{w}_j}^{\mathrm{lb}})}. \label{eq:xi_star}
\end{equation}
It can be observed from~\eqref{lemma1_equation} that a closed-form expression for $\tau_j^*$ is not available. As a result, the closed-form of $\check{\xi}_j^*$ is also intractable, {which poses substantial challenges in analyzing the covertness constraint and designing optimal beamforming.} To address this issue, we further derive a lower bound of $\check{\xi}_j^*$, which is given in the following theorem.
%

\begin{theorem}
\label{Theorem:DEP_lower_bound}
   { The lower bound of $\check{\xi}_{j}^*$ is decided as $\check{\xi}_{j}^*>\check{\xi}^{\mathrm{lb}}_{j}$, where} 
    \begin{equation}
\label{eq:DEP_lower_bound}
\footnotesize
      \check{\xi}^{\mathrm{lb}}_{j}\triangleq 1-\frac{ \omega_j }{2\log(\rho)\hat{\sigma}_{\mathrm{w}_j}^{\mathrm{lb}}}.
\end{equation}
\end{theorem}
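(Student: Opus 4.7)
The plan is to show that the correction term subtracted from $1$ in \eqref{eq:xi_star} is strictly less than $\omega_j/(2\log(\rho)\hat{\sigma}_{\mathrm{w}_j}^{\mathrm{lb}})$. Writing $\mu\equiv\mu_j(\tau_j^*)$ for brevity, the desired inequality $\check{\xi}_j^*>\check{\xi}_j^{\mathrm{lb}}$ is equivalent to
\begin{equation*}
\frac{\log(\tau_j^*/\hat{\sigma}_{\mathrm{w}_j}^{\mathrm{lb}})}{\tau_j^*-\hat{\sigma}_{\mathrm{w}_j}^{\mathrm{lb}}}\cdot\frac{\mu\,\Gamma(m_j,\mu)+\gamma(m_j+1,\mu)}{\Gamma(m_j+1)}<\frac{1}{\hat{\sigma}_{\mathrm{w}_j}^{\mathrm{lb}}}.
\end{equation*}
My strategy is to bound the two positive factors on the left-hand side separately, obtaining a strict bound $<1/\hat{\sigma}_{\mathrm{w}_j}^{\mathrm{lb}}$ on the first and a weak bound $\le 1$ on the second, so that their product is strictly below $1/\hat{\sigma}_{\mathrm{w}_j}^{\mathrm{lb}}$.

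For the first factor, I would invoke the elementary inequality $\log x<x-1$ for all $x>1$. One may assume $\tau_j^*>\hat{\sigma}_{\mathrm{w}_j}^{\mathrm{lb}}$ (otherwise $\check{\xi}_j^*=1$ and the bound holds trivially), so setting $x=\tau_j^*/\hat{\sigma}_{\mathrm{w}_j}^{\mathrm{lb}}>1$ yields $\log(\tau_j^*/\hat{\sigma}_{\mathrm{w}_j}^{\mathrm{lb}})<(\tau_j^*-\hat{\sigma}_{\mathrm{w}_j}^{\mathrm{lb}})/\hat{\sigma}_{\mathrm{w}_j}^{\mathrm{lb}}$, which gives the strict bound on the first factor. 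For the second factor, I would rewrite $\gamma(m_j+1,\mu)=\Gamma(m_j+1)-\Gamma(m_j+1,\mu)$, which reduces the desired $\le 1$ bound to the single inequality $\mu\,\Gamma(m_j,\mu)\le\Gamma(m_j+1,\mu)$. This in turn follows from the one-line integral comparison
\begin{equation*}
\Gamma(m_j+1,\mu)-\mu\,\Gamma(m_j,\mu)=\int_\mu^\infty t^{m_j-1}(t-\mu)\,e^{-t}\,dt\ge 0,
\end{equation*}
since $t\ge\mu$ throughout the integration range. Multiplying the two bounds yields the desired strict inequality, and subtracting from $1$ gives $\check{\xi}_j^*>\check{\xi}_j^{\mathrm{lb}}$.

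The main obstacle I anticipate is identifying the right factorization of the correction term so that the logarithmic and gamma-function pieces decouple cleanly; once that separation is made, each piece succumbs to a one-line estimate (concavity of $\log$ for the first factor, pointwise sign of $t-\mu$ for the second). Since the argument never invokes the defining equation~\eqref{lemma1_equation} of $\tau_j'$, it is agnostic to which branch of Lemma~\ref{Theorem:Opt_tau} realizes $\tau_j^*$, so the interior case $\tau_j^*=\tau_j'$ and the boundary case $\tau_j^*=\hat{\sigma}_{\mathrm{w}_j}^{\mathrm{ub}}$ are handled uniformly.
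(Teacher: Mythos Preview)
Your proposal is correct and follows essentially the same two-step decomposition as the paper: bound the gamma-function factor by showing $\mu\,\Gamma(m_j,\mu)\le\Gamma(m_j+1,\mu)$, then bound the logarithmic factor by $\frac{\log(\tau_j^*/\hat{\sigma}_{\mathrm{w}_j}^{\mathrm{lb}})}{\tau_j^*-\hat{\sigma}_{\mathrm{w}_j}^{\mathrm{lb}}}<\frac{1}{\hat{\sigma}_{\mathrm{w}_j}^{\mathrm{lb}}}$. The only differences are in the elementary tools used for each step: for the gamma bound the paper defines $h_{6,j}(x)=x\Gamma(m_j,x)-\Gamma(m_j+1,x)$, computes $h_{6,j}'(x)=\Gamma(m_j,x)>0$, and uses the endpoint values $h_{6,j}(0)=-\Gamma(m_j+1)$ and $\lim_{x\to\infty}h_{6,j}(x)=0$ to conclude $h_{6,j}<0$, whereas your one-line integral $\int_\mu^\infty t^{m_j-1}(t-\mu)e^{-t}\,dt\ge 0$ is more direct; for the logarithmic bound the paper invokes the mean value theorem on $f(x)=\log x$ to produce $c\in(\hat{\sigma}_{\mathrm{w}_j}^{\mathrm{lb}},\tau_j^*)$ with $1/c<1/\hat{\sigma}_{\mathrm{w}_j}^{\mathrm{lb}}$, which is equivalent to your use of $\log x<x-1$.
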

\begin{proof}
   Please refer to Appendix \ref{Proof:DEP_lower_bound}.
\end{proof}
We consider $\check{\xi}^{\mathrm{lb}}_{j}\geq 1-\epsilon_\mathrm{w},\, \forall j \in \{1, 2, \dots, n_\mathrm{w}\}$ as the covertness requirement, which is the constraint that must be satisfied for achieving robust covert communication.

\vspace{-7pt}
\subsection{Reliability Requirement under Perfect Channel Estimation}
\vspace{-2pt}
Due to noise uncertainty at Bob, transmission outages may occur on the Alice–Bob link.
According to the definition of TOP in  \eqref{eq:TOP_def},  we have
\begin{equation}
\footnotesize
 \begin{aligned}
    \zeta =& \; {\rm Pr}\{ R_{\mathrm{a}} > C_{\mathrm{a},\mathrm{b}}\}  \\
    =& \begin{cases}
         0, & \hat{\sigma}_{\mathrm{b}}^{\mathrm{ub}} \leq \frac{S_{\mathrm b}}{(2^{R_{\mathrm{a}}} -1)},\\
         \frac{\log\left({\hat{\sigma}_{\mathrm{b}}^{\mathrm{ub}}}\right)+\log(2^{R_{\mathrm{a}}} -1)-\log(S_{\mathrm b})}{2\log(\rho)}, &  \hat{\sigma}_{\mathrm{b}}^{\mathrm{ub}} >  \frac{S_{\mathrm b}}{(2^{R_{\mathrm{a}}} -1)}.
        \end{cases}\label{eq:TOP}
\end{aligned}     
\end{equation}

The reliability requirement is guaranteed by limiting the TOP $\zeta$ to stay below a tolerable threshold $\epsilon_{\mathrm{b}}$, i.e., $\zeta \leq \epsilon_{\mathrm{b}}$, thereby ensuring QoS of the Alice–Bob transmission.

\vspace{-7pt}
\subsection{Beamforming Design under Perfect Channel Estimation}
\vspace{-2pt}
We now investigate the optimal beamforming design for CR maximization under perfect channel estimation.
Our goal is to jointly optimize the beamforming vector, transmit power and transmit rate of Alice to achieve the maximum CR of the satellite-terrestrial covert uplink transmission. 
Such an optimization problem can be mathematically formulated as

\vspace{-3mm}
\begin{footnotesize}
\begin{subequations}
    \label{Prob:Beamforming_perfect}
    \begin{align}
        & \max _{ \mathbf{w}_{\mathrm{a}}, P_\mathrm{a}, R_{\mathrm{a}}}\; R_{\mathrm{a}} \label{JBJ_opt}\\
       \text { s.t. }
        &\, \left \|\mathbf{w}_{\mathrm{a}}\right \|^2 =1,\label{wa_normal_cons} \\
        & \;P_\mathrm{a} \leq P_{\mathrm{a},\max}, \label{alice_power_cons}\\
        & \;\check{\xi}^{\mathrm{lb}}_{j} \geq 1-\epsilon_\mathrm{w},\, \forall j \in \{1, 2, \dots, n_\mathrm{w}\}, \label{covert_req}\\
        &\; \zeta \leq \epsilon_{\mathrm{b}} \label{cons:Pout_req},
    \end{align}
\end{subequations}
\end{footnotesize}where $P_{\mathrm{a},\max}$ is the maximum transmit power at Alice.
Constraints \eqref{covert_req} and \eqref{cons:Pout_req} represent the covertness and reliability requirements under perfect channel estimation, respectively.

Following \eqref{eq:DEP_lower_bound} and \eqref{eq:TOP}, problem~\eqref{Prob:Beamforming_perfect} can be reformulated~as 

\vspace{-3mm}
\begin{subequations}
    \label{Prob:Beamforming_perfect_simplify}
    \footnotesize
    \begin{align}
        & \max _{ \mathbf{w}_{\mathrm{a}}, P_\mathrm{a}, r_{\mathrm{a}}}\; r_{\mathrm{a}} \label{JBJ_opt}\\
       \text { s.t. }
        & \;{ G_{\mathrm{a},\mathrm{w}_j}P_\mathrm{a}(1 + K|\mathbf{v}_{\mathrm{w}_j}^{\dagger} \overline{\mathbf{H}}_{\mathrm{a},\mathrm{w}_j} \mathbf{w}_{\mathrm{a}}|^2})  \leq \eta_{\mathrm{w}_j},\nonumber\\&\; \forall j \in \{1, 2, \dots, n_\mathrm{w}\}, \label{covert_req_simplify}\\
        &\; { P_\mathrm{a}\eta_\mathrm{b}|\mathbf{v}_{\mathrm{b}}^{\dagger}\mathbf{H}_{\mathrm{a},\mathrm{b}} \mathbf{w}_\mathrm{a}|^2}   \geq r_{\mathrm{a}} \label{cons:Pout_req_simplify},\\
        &\; \text{\eqref{wa_normal_cons}, \eqref{alice_power_cons},}
    \end{align}
\end{subequations}
where $r_{\mathrm{a}} = 2^{R_{\mathrm{a}}}-1$, $\eta_{\mathrm{w}_j} =\; $\scalebox{0.92}{${2 \epsilon_\mathrm{w}\log(\rho)(1+K)\hat{\sigma}_{\mathrm{w}_j}^{\mathrm{lb}}}/{( F_{\mathrm{a},\mathrm{w}_j}G_{\mathrm{w}_j,\mathrm{a}}})$} and $\eta_\mathrm{b} = {\rho^{2 \epsilon_{\mathrm{b}}-1}F_{\mathrm{a},\mathrm{b}} G_{\mathrm{a},\mathrm{b}}G_{\mathrm{b},\mathrm{a}} }/{\hat {\sigma}_{\mathrm{b}}}$.
Let $\widehat{\mathbf{W}}_{\mathrm{a}} =P_\mathrm{a}\mathbf{w}_{\mathrm{a}} \mathbf{w}_{\mathrm{a}}^{\dagger}$, $\mathbf{V}_{\mathrm{b}}=\mathbf{v}_{\mathrm{b}}\mathbf{v}_{\mathrm{b}}^{\dagger}$ and $\mathbf{V}_{\mathrm{w}_j}=\mathbf{v}_{\mathrm{w}_j}\mathbf{v}_{\mathrm{w}_j}^{\dagger}$, problem \eqref{Prob:Beamforming_perfect_simplify} can be reformulated as

\vspace{-1mm}
\begin{footnotesize}
\begin{subequations}

    \label{Prob:Beamforming_perfect_SDR}
    \begin{align}
        & \max _{ \widehat{\mathbf{W}}_{\mathrm{a}}}\;\mathrm{Tr}(\mathbf{V}_{\mathrm{b}}\mathbf{H}_{\mathrm{a},\mathrm{b}}\widehat{\mathbf{W}}_{\mathrm{a}}\mathbf{H}_{\mathrm{a},\mathrm{b}}^\dagger) \\
       \text { s.t. }
        & \; G_{\mathrm{a},\mathrm{w}_j}\left(\mathrm{Tr}(\widehat{\mathbf{W}}_{\mathrm{a}})+K\,\mathrm{Tr}(\mathbf{V}_{\mathrm{w}_j}\overline{\mathbf{H}}_{\mathrm{a},\mathrm{w}_j}\widehat{\mathbf{W}}_{\mathrm{a}}\overline{\mathbf{H}}_{\mathrm{a},\mathrm{w}_j}^{\dagger})\right)\leq \eta_{\mathrm{w}_j}, \nonumber \\ & \; \forall j \in \{1, 2, \dots, n_\mathrm{w}\}, \label{covert_req_simplfy}\\
        &\; \mathrm{Tr}\left(\widehat{\mathbf{W}}_{\mathrm{a}}\right) \leq P_{\mathrm{a},\max}, \; \label{wa_power_cons_SDR}\\
        &\; \widehat{\mathbf{W}}_{\mathrm{a}} \succeq \mathbf{0},  \label{SD_cons}\\
        & \; \operatorname{Rank}\left(\widehat{\mathbf{W}}_{\mathrm{a}}\right)=1. \; \label{SD_cons2}
    \end{align}
\end{subequations}
\end{footnotesize}Note that constraint~(\ref{SD_cons2}) is non-convex, making problem \eqref{Prob:Beamforming_perfect_SDR} difficult to solve. To address this issue, we apply the semi-definite relaxation (SDR) technique \cite{Luo2010} to relax constraint (\ref{SD_cons2}) and  transform problem~\eqref{Prob:Beamforming_perfect_SDR} into a semi-definite programming (SDP) problem, which can be given by

\begin{subequations}
\begin{footnotesize}
    \label{Prob:Beamforming_perfect_SDP}
    \begin{align}
        & \max _{ \widehat{\mathbf{W}}_{\mathrm{a}}}\;\mathrm{Tr}(\mathbf{V}_{\mathrm{b}}\mathbf{H}_{\mathrm{a},\mathrm{b}}\widehat{\mathbf{W}}_{\mathrm{a}}\mathbf{H}_{\mathrm{a},\mathrm{b}}^\dagger) \\
       \text { s.t. }
        &\;\text{\eqref{covert_req_simplfy}, \eqref{wa_power_cons_SDR}, \eqref{SD_cons}.}
    \end{align}
\end{footnotesize}
\end{subequations}Therefore, problem~\eqref{Prob:Beamforming_perfect_SDP} can be solved using a standard convex optimization solver, and the optimal solution is denoted by $\widehat{\mathbf{W}}_{\mathrm{a}}^{*}$.
Due to the relaxation of constraint \eqref{SD_cons2}, the optimal solution $\widehat{\mathbf{W}}_{\mathrm{a}}^{*}$ is generally not a rank-one matrix. 
Therefore, after first acquiring the solution $\widehat{\mathbf{W}}_{\mathrm{a}}^{*}$ by solving problem~\eqref{Prob:Beamforming_perfect_SDP}, the high-quality beamforming vector $\hat{\mathbf{w}}_{\mathrm{a}}^{*}$ can be obtained via the widely employed Gaussian randomization procedure~\cite{Luo2010,Xiang2012,Ma2021}.
Given $\hat{\mathbf{w}}_{\mathrm{a}}^{*}$, the optimal beamforming vector and optimal transmit power of Alice can be obtained by $ \mathbf{w}_{\mathrm{a}}^* = \hat{\mathbf{w}}_{\mathrm{a}}^{*}/\|\hat{\mathbf{w}}_{\mathrm{a}}^{*}\|$ and $P^*_{\mathrm{a}} = \|\hat{\mathbf{w}}_{\mathrm{a}}^{*}\|^2$, respectively.
The overall procedure for solving problem~\eqref{Prob:Beamforming_perfect} is summarized in Algorithm~\ref{algorithm:Opt-JBJ}.

\begin{algorithm}[ht]
\begin{footnotesize}
\label{algorithm:Opt-JBJ}
\caption{Optimal Beamforming Design}
\KwIn{Channel coefficient matrices and maximum transmit power\;}
\KwOut{Optimal beamforming vector $\mathbf{w}^*_{\mathrm{a}}$, \\ \hspace{11mm}optimal transmit power $P_\mathrm{a}^*$, maximum CR $R_{\mathrm{a}}^*$  \;}
{Obtain $\widehat{\mathbf{W}}_{\mathrm{a}}^{*}$ by solving SDP problem \eqref{Prob:Beamforming_perfect_SDP}}\;
{Using the Gaussian randomization procedure to obtain $\hat{\mathbf{w}}_{\mathrm{a}}^{*}$}\;
{Obtain optimal beamforming vector ${\mathbf{w}}_{\mathrm{a}}^*\leftarrow\hat{\mathbf{w}}_{\mathrm{a}}^{*}/\|\hat{\mathbf{w}}_{\mathrm{a}}^{*}\|$}\;
{Obtain optimal transmit power ${P}_{\mathrm{a}}^*\leftarrow\|\hat{\mathbf{w}}_{\mathrm{a}}^{*}\|^2$}\;
{Obtain maximum CR $R_{\mathrm{a}}^*$ based on \eqref{cons:Pout_req_simplify}}\;
\Return {$\mathbf{w}^*_{\mathrm{a}}$, $P_\mathrm{a}^*$, \rm {and} $R_{\mathrm{a}}^*$ };
\end{footnotesize}
\end{algorithm}

\subsection{Joint Beamforming and Antenna Orientation Design under Perfect Channel Estimation }

It is worth noting that the optimal antenna orientation configuration can significantly enhance the CR~\cite{Forouzesh2020a}.
Therefore, this subsection further investigates the joint design of beamforming and antenna orientation to maximize the CR under the perfect channel estimation.

\vspace{-1mm}
\begin{footnotesize}
\begin{subequations}
    \label{Prob:Beamforming_antenna_perfect}
    \begin{align}
        & \max _{ \mathbf{o}_{\mathrm{a}}, \mathbf{w}_{\mathrm{a}}, P_\mathrm{a}, R_{\mathrm{a}}}\; R_{\mathrm{a}} \label{JBJ-B_opt}\\
       \text { s.t. }
        &\;\text{\eqref{wa_normal_cons}, \eqref{alice_power_cons}, \eqref{covert_req}, \eqref{cons:Pout_req}.}
    \end{align}
\end{subequations}
\end{footnotesize}Similar to problem~\eqref{Prob:Beamforming_perfect}, we first simplify problem~\eqref{Prob:Beamforming_antenna_perfect} and then apply the SDR technique to reformulate it as an SDP problem. By introducing $\mathbf{W}_{\mathrm{a}} = \mathbf{w}_{\mathrm{a}} \mathbf{w}_{\mathrm{a}}^\dagger$, problem~\eqref{Prob:Beamforming_antenna_perfect} can be

\newpage
\hspace{-3.5mm} rewritten as

\begin{subequations}
\begin{footnotesize}
    \label{Prob:antenna_perfect}
    \begin{align}
        & \max _{ \mathbf{o}_{\mathrm{a}}, \mathbf{W}_\mathrm{a}, P_\mathrm{a}}\; G_{\mathrm{a},\mathrm{b}} P_\mathrm{a} \mathrm{Tr}(\mathbf{V}_{\mathrm{b}}\mathbf{H}_{\mathrm{a},\mathrm{b}}\mathbf{W}_\mathrm{a}\mathbf{H}_{\mathrm{a},\mathrm{b}}^\dagger) \label{obj:antenna_perfect}\\
       \text { s.t. }
       & \;{G_{\mathrm{a},\mathrm{w}_j}P_\mathrm{a}\left(1+K\,\mathrm{Tr}(\mathbf{V}_{\mathrm{w}_j}\overline{\mathbf{H}}_{\mathrm{a},\mathrm{w}_j}\mathbf{W}_\mathrm{a}\overline{\mathbf{H}}_{\mathrm{a},\mathrm{w}_j}^{\dagger})\right)}\leq \eta_{\mathrm{w}_j}, \nonumber \\ & \; \forall j \in \{1, 2, \dots, n_\mathrm{w}\}, \label{cons:covert_req_antenna}\\
       &\; \mathrm{Tr}\left(\mathbf{W}_{\mathrm{a}}\right) =1, \label{cons:Tr_Wa_1}\\
       &\; \mathbf{W}_\mathrm{a} \succeq \mathbf{0},  \label{SD_cons_Wa}\\
       &\;\text{\eqref{alice_power_cons}.}
    \end{align}
    \end{footnotesize}
\end{subequations}Note that problem \eqref{Prob:antenna_perfect} involves strong coupling among optimization variables $\mathbf{o}_{\mathrm{a}}$, $\mathbf{W}_{\mathrm{a}}$, and $P_\mathrm{a}$, which significantly increases the solution complexity.
To efficiently solve this optimization problem, we decompose problem \eqref{Prob:antenna_perfect} into two subproblems, and then utilize an alternating optimization approach to obtain the local optimal solution of problem \eqref{Prob:antenna_perfect}.
Particularly, the first subproblem deals with the optimization of the beamforming matrix $\mathbf{W}_{\mathrm{a}}$ under the given transmit power $P_\mathrm{a}$ and antenna boresight vector $\mathbf{o}_{\mathrm{a}}$, while the second subproblem concerns the joint optimization of $\mathbf{o}_{\mathrm{a}}$ and $P_\mathrm{a}$ under the given $\mathbf{W}_{\mathrm{a}}$.
The first subproblem is consequently formulated as

\vspace{-3.5mm}
\begin{footnotesize}
\begin{subequations}
    \label{Prob:antenna_perfect_wa}
    \begin{align}
        & \max_{\mathbf{W}_\mathrm{a}}\;  \mathrm{Tr}(\mathbf{V}_{\mathrm{b}}\mathbf{H}_{\mathrm{a},\mathrm{b}}\mathbf{W}_\mathrm{a}\mathbf{H}_{\mathrm{a},\mathrm{b}}^\dagger) \\
       \text { s.t. }
        &\;\text{ \eqref{alice_power_cons}, \eqref{cons:covert_req_antenna}, \eqref{cons:Tr_Wa_1}, \eqref{SD_cons_Wa}.}
    \end{align}
\end{subequations}
\end{footnotesize}Similar to problem \eqref{Prob:Beamforming_perfect_SDR}, problem \eqref{Prob:antenna_perfect_wa} is an SDP problem and can be efficiently solved using a standard convex optimization solver, with the optimal beamforming matrix denoted by $\mathbf{W}_\mathrm{a}^*$.

With the given $\mathbf{W}_\mathrm{a}^*$, the second subproblem can be formulated as

\vspace{-4.5mm}
\begin{footnotesize}
\begin{subequations}
    \label{Prob:antenna_perfect_Gab}
    \begin{align}
        & \max _{ \mathbf{o}_{\mathrm{a}}, P_\mathrm{a}}\; G_{\mathrm{a},\mathrm{b}} P_\mathrm{a} \label{obj:Gab_Pa_wa} \\
       \text { s.t. }
        &\text{ \eqref{alice_power_cons}, \eqref{cons:covert_req_antenna}.}
    \end{align}
\end{subequations}
\end{footnotesize}Since the variables $\mathbf{o}_{\mathrm{a}}$ and $P_\mathrm{a}$ are still coupled, it is also difficult to solve the problem \eqref{Prob:antenna_perfect_Gab}. To solve this issue, we first explore the necessary condition that the optimal antenna boresight vector $\mathbf{o}_{\mathrm{a}}^*$ should satisfy, which is summarized in the following theorem.

\begin{theorem}
\vspace{-1mm}
\label{Theorem:minimum_psi_ab}
    The necessary condition of the optimal antenna boresight vector $\mathbf{o}_{\mathrm{a}}^*$ is that the off-boresight angle of Bob should satisfy $\vartheta_{\mathrm{a},\mathrm{b}} = \vartheta_0$. 
\end{theorem}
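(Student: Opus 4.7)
The plan is to establish this necessary condition via a small-perturbation argument: assume for contradiction that the optimal antenna orientation $\mathbf{o}_\mathrm{a}^*$ satisfies $\vartheta_{\mathrm{a},\mathrm{b}}\neq\vartheta_0$, and construct a feasible perturbation $\mathbf{o}_\mathrm{a}'$ that strictly increases the objective $G_{\mathrm{a},\mathrm{b}}P_\mathrm{a}$ of problem~\eqref{Prob:antenna_perfect_Gab}. I would first reduce to the non-trivial situation in which the covertness constraint~\eqref{cons:covert_req_antenna} is binding at the optimum for at least one warden $j^{*}$; otherwise $P_\mathrm{a}^{*}=P_{\mathrm{a},\max}$ and the objective depends on $\mathbf{o}_\mathrm{a}$ only through $G_{\mathrm{a},\mathrm{b}}$, which already attains its maximum $G_{\mathrm{a},\max}$ throughout the plateau $\vartheta_{\mathrm{a},\mathrm{b}}\leq\vartheta_0$ (so $\vartheta_{\mathrm{a},\mathrm{b}}=\vartheta_0$ is consistent with optimality). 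In the binding case, solving~\eqref{cons:covert_req_antenna} for $P_\mathrm{a}$ at $j^{*}$ reduces the objective, up to positive constants, to the ratio $G_{\mathrm{a},\mathrm{b}}/G_{\mathrm{a},\mathrm{w}_{j^{*}}}$.

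I would then split into two cases. In the \emph{plateau} case $\vartheta_{\mathrm{a},\mathrm{b}}^{*}<\vartheta_0$, $G_{\mathrm{a},\mathrm{b}}$ is locally constant at $G_{\mathrm{a},\max}$. Using Rodrigues' rotation formula, I would rotate $\mathbf{o}_\mathrm{a}^{*}$ by a small angle in the plane spanned by $\mathbf{o}_\mathrm{a}^{*}$ and $\mathbf{u}_{\mathrm{a},\mathrm{w}_{j^{*}}}$ in the direction that moves the boresight \emph{away} from the $j^{*}$-th warden. For sufficiently small perturbations, continuity preserves the plateau condition $\vartheta_{\mathrm{a},\mathrm{b}}<\vartheta_0$ and keeps all non-binding constraints satisfied, while $\vartheta_{\mathrm{a},\mathrm{w}_{j^{*}}}$ strictly increases, so $G_{\mathrm{a},\mathrm{w}_{j^{*}}}$ strictly decreases, the admissible $P_\mathrm{a}$ strictly grows, and the objective strictly increases. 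In the \emph{roll-off} case $\vartheta_{\mathrm{a},\mathrm{b}}^{*}>\vartheta_0$, both $G_{\mathrm{a},\mathrm{b}}$ and $G_{\mathrm{a},\mathrm{w}_{j^{*}}}$ follow the $-25\log_{10}(\cdot)$ law in~\eqref{antenna_gain_alice}, so the objective reduces (up to constants) to $\bigl(\vartheta_{\mathrm{a},\mathrm{w}_{j^{*}}}/\vartheta_{\mathrm{a},\mathrm{b}}\bigr)^{2.5}$. Rotating $\mathbf{o}_\mathrm{a}^{*}$ along the geodesic in the plane spanned by $\mathbf{u}_{\mathrm{a},\mathrm{b}}$ and $\mathbf{u}_{\mathrm{a},\mathrm{w}_{j^{*}}}$ toward $\mathbf{u}_{\mathrm{a},\mathrm{b}}$ and invoking elementary monotonicity of $f(\theta)=(\theta+c)^{2.5}/\theta^{2.5}$ on $(\vartheta_0,48^{\circ}]$ then yields strict improvement, contradicting optimality. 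Hence $\vartheta_{\mathrm{a},\mathrm{b}}^{*}=\vartheta_0$ is forced.

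The main obstacle will be the roll-off case, specifically guaranteeing that the rotation direction reduces $\vartheta_{\mathrm{a},\mathrm{b}}$ without making $\vartheta_{\mathrm{a},\mathrm{w}_{j^{*}}}$ collapse faster. I would treat the geometry on the unit sphere: when $\mathbf{u}_{\mathrm{a},\mathrm{b}}$ and $\mathbf{u}_{\mathrm{a},\mathrm{w}_{j^{*}}}$ lie on opposite sides of $\mathbf{o}_\mathrm{a}^{*}$, rotating toward Bob simultaneously increases $\vartheta_{\mathrm{a},\mathrm{w}_{j^{*}}}$ and the ratio grows immediately; when they lie on the same side, a careful first-order expansion together with the strict monotonicity of $f$ on the relevant interval shows that the numerator's proportional decrease is strictly slower than the denominator's, so the ratio still increases. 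A secondary difficulty arises when several wardens' constraints bind simultaneously, which I would handle by choosing the rotation in the null space of the Jacobian of the binding-constraint map; since $\mathbf{o}_\mathrm{a}$ has two geometric degrees of freedom on the sphere, such a feasible perturbation direction generically exists. Combining both cases completes the contradiction and establishes $\vartheta_{\mathrm{a},\mathrm{b}}^{*}=\vartheta_0$.
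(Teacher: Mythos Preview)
Your proposal is correct and follows essentially the same approach as the paper: a two-case perturbation argument splitting the plateau region $\vartheta_{\mathrm{a},\mathrm{b}}<\vartheta_0$ from the roll-off region $\vartheta_{\mathrm{a},\mathrm{b}}>\vartheta_0$, after reducing the binding-covertness objective to (equivalently) $G_{\mathrm{a},\mathrm{b}}/G_{\mathrm{a},\mathrm{w}_{j^*}}$. The paper treats the roll-off case via a dBi-derivative comparison under the worst-case assumption $\cos\varTheta_{j^*}=1$ (your ``same side'' configuration) together with $\vartheta_{\mathrm{a},\mathrm{b}}<\vartheta_{\mathrm{a},\mathrm{w}_{j^*}}$, which is exactly your monotonicity-of-$f(\theta)=(\theta+c)^{2.5}/\theta^{2.5}$ argument in additive-log form.
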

\begin{proof}
\vspace{-1mm}
    Please refer to Appendix \ref{Proof:minimum_psi_ab}.
\end{proof}
\vspace{-0.5mm}

We then explore the optimal antenna boresight vector based on Theorem \ref{Theorem:minimum_psi_ab} and Rodrigues' rotation formula.
Accordingly, $\mathbf{o}_{\mathrm{a}}$ can be formulated by
\begin{equation}
\label{oa}
\footnotesize
    \mathbf{o}_{\mathrm{a}} =  \mathbf{v} \cos \theta+(\mathbf{u} \times \mathbf{v}) \sin \theta+\mathbf{u}(\mathbf{u} \cdot \mathbf{v})(1-\cos \theta),
\end{equation}
where $\theta \in [0, 2\pi]$ denotes the angle between the vector $\mathbf{o}_{\mathrm{a},\mathrm{w}_j} \triangleq \mathbf{o}_{\mathrm{a}} - \mathbf{u}_{\mathrm{a},\mathrm{w}_j}$ and the $x$–$y$ plane of the Cartesian coordinate system, $\mathbf{u} = \frac{\mathbf{u}_{\mathrm{a},\mathrm{b}}}{\|\mathbf{u}_{\mathrm{a},\mathrm{b}}\|}$, $\mathbf{v} =\cos \left(\vartheta_0\right) \mathbf{u}+\sin \left(\vartheta_0\right) \mathbf{v}_{\perp}$, $\mathbf{v}_{\perp}=\frac{\mathbf{e}-(\mathbf{e} \cdot \mathbf{u}) \mathbf{u}}{\|\mathbf{e}-(\mathbf{e} \cdot \mathbf{u}) \mathbf{u}\|}$, and $\mathbf{e}=[0,0,1]^T$.
Note that for a given $\mathbf{o}_{\mathrm{a}}$, we can obtain a $P_\mathrm{a}$ by solving problem~\eqref{Prob:antenna_perfect_Gab}. To determine the optimal solution $(\mathbf{o}_{\mathrm{a}}, P_{\mathrm{a}}^*)$ that maximizes the objective function~\eqref{obj:Gab_Pa_wa}, we first employ the 1-D search method to generate a set of $\mathbf{o}_{\mathrm{a}}$ values based on~\eqref{oa}. Subsequently, each $\mathbf{o}_{\mathrm{a}}$ is substituted into problem~\eqref{Prob:antenna_perfect_Gab} to compute the corresponding $P_{\mathrm{a}}$ that maximizes~\eqref{obj:Gab_Pa_wa}. By exhaustively evaluating all candidate $\mathbf{o}_{\mathrm{a}}$, the optimal solution $(\mathbf{o}_{\mathrm{a}}, P_{\mathrm{a}}^*)$ and the corresponding maximum value of~\eqref{obj:Gab_Pa_wa} can be obtained.
To obtain the optimal solution and the maximum $R_{\mathrm{a}}^*$ of problem~\eqref{Prob:Beamforming_antenna_perfect}, an alternating optimization approach is employed between~\eqref{Prob:antenna_perfect_wa} and~\eqref{Prob:antenna_perfect_Gab} until convergence. The details for identifying the optimal solution to problem \eqref{Prob:Beamforming_antenna_perfect} are summarized in Algorithm \ref{overall_algorithm_P1}.

\vspace{-7pt}
\begin{algorithm}[hbt]
\label{overall_algorithm_P1}
\begin{footnotesize}
\caption{Joint Optimal Design of Beamforming and Antenna Orientation }
\KwIn{Channel coefficient matrices and maximum transmit power\;}
\KwOut{Optimal beamforming vectors $\mathbf{w}^*_{\mathrm{a}}$, optimal antenna boresight vector $\mathbf{o}_{\mathrm{a}}^*$, optimal transmit power $P_\mathrm{a}^*$,\\ \hspace{9.8mm} maximum CR $R_{\mathrm{a}}^*$\;}
{Set the maximum number of iterations $I$, the maximum tolerance $\varepsilon$ and the iteration index $i \gets 1$, initialize $\mathbf{o}_{\mathrm{a}}^{(0)}=\mathbf{u}_{\mathrm{a},\mathrm{b}}$, $ P_\mathrm{a}^{(0)}=0$}\;
\Repeat{ $\left |g\left(\mathbf{o}_{\mathrm{a}}^{(i)},\mathbf{W}^{*,(i)}_{\mathrm{a}},P_\mathrm{a}^{(i)}\right)-g\left(\mathbf{o}_{\mathrm{a}}^{(i-1)},\mathbf{W}^{*,(i-1)}_{\mathrm{a}},P_\mathrm{a}^{(i-1)}\right)\right |\leq \varepsilon$ or $i\ge I$}
{{Update $\mathbf{W}_{\mathrm{a}}^{*,(i)}$ by solving SDP problem \eqref{Prob:antenna_perfect_wa}}\;
{ Update $\mathbf{o}_{\mathrm{a}}^{(i)}$ and $P_\mathrm{a}^{(i)}$ by solving problem \eqref{Prob:antenna_perfect_Gab}\;
  $i \gets i+1$\;}
}
{$\mathbf{W}_{\mathrm{a}}^* \leftarrow \mathbf{W}_{\mathrm{a}}^{*,(i)}$, $\mathbf{o}_{\mathrm{a}}^* \leftarrow \mathbf{o}_{\mathrm{a}}^{(i)}$, $P_\mathrm{a}^* \leftarrow P_\mathrm{a}^{(i)}$}\;
{Apply the Gaussian randomization procedure to obtain $\mathbf{w}_{\mathrm{a}}^{*}$}\;
{Obtain maximum CR $R_{\mathrm{a}}^*$ based on \eqref{cons:Pout_req_simplify}}\;
\Return {$\mathbf{w}^*_{\mathrm{a}}$, $\mathbf{o}_{\mathrm{a}}^{*}$, $P_\mathrm{a}^*$, $R_{\mathrm{a}}^*$}\;
\end{footnotesize}
\end{algorithm}
\vspace{-7pt}

\vspace{-9pt}
\section{Covert Communication Scheme under Imperfect Channel Estimation}
\label{sec:covert_scheme_imp}
\vspace{-2pt}
In this section, we study the covert communication scheme where the channel from Alice to Bob and the LoS channels from Alice to the $n_\mathrm{w}$ wardens are imperfectly estimated by Alice. 
We first provide the channel estimation error model, and then analyze the covertness and reliability requirements under imperfect channel estimation. Subsequently, CR maximization problems are explored to identify the optimal beamforming design as well as the joint optimal beamforming and antenna orientation design under imperfect channel estimation.

\vspace{-10pt}
\subsection{Channel Estimation Error Model}
\vspace{-2pt}
{Due to the effects of imperfect synchronization, signal delays, and frequency offsets during channel estimation, it is challenging for Alice to perfectly estimate the channel from Alice to Bob and the LoS channels from Alice to each of $n_\mathrm{w}$ wardens~\cite{Li2022}. {Considering a worst-case scenario for robust beamforming designs,} we adopt the bounded channel estimation error model to characterize the channel estimation errors in these channels~[\citenum{Ma2021}, P. 4], which can be given by}

\vspace{-3.5mm}
\begin{footnotesize}
\begin{align}
    \mathbf{H}_{\mathrm{a}, \mathrm{b}} &= {\mathbf{H}}_{\mathrm{a}, \mathrm{b}}' + \Delta \mathbf{H}_{\mathrm{a}, \mathrm{b}}, \label{imperfect_channel_ab}\\
    \overline{\mathbf{H}}_{\mathrm{a},\mathrm{w}_j} &=  \overline{\mathbf{H}}_{\mathrm{a},\mathrm{w}_{j}}' + \Delta \overline{\mathbf{H}}_{\mathrm{a},\mathrm{w}_j}, \;  j \in \{1, 2, \dots, n_\mathrm{w}\}.
\end{align}   
\end{footnotesize}Here, ${\mathbf{H}}_{\mathrm{a},\mathrm{b}}'$ and $\overline{\mathbf{H}}_{\mathrm{a},\mathrm{w}_{j}}'$ denote the estimated CSI of the Alice-Bob channel and that of the LoS channel from Alice to the $j^{\mathrm{th}}$ warden, respectively. $\Delta \mathbf{H}_{\mathrm{a},\mathrm{b}}$ and $\Delta \overline{\mathbf{H}}_{\mathrm{a},\mathrm{w}_j}$ represent the corresponding channel estimation error matrices, which are characterized by ellipsoidal regions and can be given by

\vspace{-3.5mm}
\begin{footnotesize}
 \begin{align}
    \Omega_{\mathrm{a},\mathrm{b}} \triangleq & \left\{\Delta \mathbf{H}_{\mathrm{a},\mathrm{b}}| \; \|\Delta \mathbf{H}_{\mathrm{a},\mathrm{b}}\right\|^2\le  \delta_{\mathrm{b}}\}, \label{imperfect_channel_ab_space} \\
    \Omega_{\mathrm{a},\mathrm{w}_j} \triangleq & \left\{\Delta \overline{\mathbf{H}}_{\mathrm{a},\mathrm{w}_j}|\; \|\Delta \overline{\mathbf{H}}_{\mathrm{a},\mathrm{w}_j} \|^2\le  \delta_{\mathrm{w}_j}\right\},
    \; j \in \{1, 2, \dots, n_\mathrm{w}\},
\end{align}   
\end{footnotesize}where $\delta_{\mathrm{b}}$ and $\delta_{\mathrm{w}_j}$ are the radii of the uncertainty region.

\vspace{-9pt}
\subsection{\hspace{-5pt}Covertness \hspace{-0.5pt}Requirement \hspace{-0.5pt}under \hspace{-0.5pt}Imperfect \hspace{-0.5pt}Channel \hspace{-0.5pt}Estimation}
\vspace{-2pt}
To ensure the robustness of covert communication under imperfect channel estimation, we consider that the covertness requirement~\eqref{covert_req_simplify} can be satisfied under all possible channel errors. Thus, the covertness requirement under imperfect channel estimation is given by
\begin{equation}
    \footnotesize 
    \begin{aligned}
            &\max_{\Delta \overline{\mathbf{H}}_{\mathrm{a},\mathrm{w}_j} \in \Omega_{\mathrm{a},\mathrm{w}_j}} P_\mathrm{a}(1 + K|\mathbf{v}_{\mathrm{w}_j}^{\dagger} \overline{\mathbf{H}}_{\mathrm{a},\mathrm{w}_j} \mathbf{w}_{\mathrm{a}}|^2)\leq \eta_{\mathrm{w}_j}, \forall j \in \{1, 2, \dots, n_\mathrm{w}\}. 
    \end{aligned}
    \label{cons:imperfect_DEP_1}
\end{equation}
Note that \eqref{cons:imperfect_DEP_1} is a semi-infinite constraint (SIC), which is intractable.
By using the triangle inequality and the Cauchy-Schwarz inequality [\citenum{Xiang2012}, P. 2], we can have

\vspace{-3.5mm}
\begin{footnotesize}
\begin{align}
    &|\mathbf{v}_{\mathrm{w}_j}^{\dagger} (\overline{\mathbf{H}}_{\mathrm{a},\mathrm{w}_{j}} + \Delta \overline{\mathbf{H}}_{\mathrm{a},\mathrm{w}_j}) \mathbf{w}_{\mathrm{a}}|\nonumber\\ &\le  |\mathbf{v}_{\mathrm{w}_j}^{\dagger} \overline{\mathbf{H}}_{\mathrm{a},\mathrm{w}_{j}}  \mathbf{w}_{\mathrm{a}}|+|\mathbf{v}_{\mathrm{w}_j}^{\dagger} \Delta \overline{\mathbf{H}}_{\mathrm{a},\mathrm{w}_j}  \mathbf{w}_{\mathrm{a}}|\nonumber\\
    &\le  |\mathbf{v}_{\mathrm{w}_j}^{\dagger} \overline{\mathbf{H}}_{\mathrm{a},\mathrm{w}_{j}}  \mathbf{w}_{\mathrm{a}}|+\|\mathbf{v}_{\mathrm{w}_j}\| \cdot \|\Delta \overline{\mathbf{H}}_{\mathrm{a},\mathrm{w}_j}\|\cdot \|  \mathbf{w}_{\mathrm{a}}\|\nonumber\\
    &\le |\mathbf{v}_{\mathrm{w}_j}^{\dagger} \overline{\mathbf{H}}_{\mathrm{a},\mathrm{w}_{j}}  \mathbf{w}_{\mathrm{a}}|+ \delta_{\mathrm{w}_j}. \label{eq:DEP_inequality}
\end{align}    
\end{footnotesize}Thus, \eqref{cons:imperfect_DEP_1} can be reformulated into the following form

\vspace{-2mm}
\begin{footnotesize}
    \begin{align}
     & \;{ P_\mathrm{a}(1 + K(|\mathbf{v}_{\mathrm{w}_j}^{\dagger} \overline{\mathbf{H}}_{\mathrm{a},\mathrm{w}_{j}}  \mathbf{w}_{\mathrm{a}}|+ \delta_{\mathrm{w}_j})^2})  \leq \eta_{\mathrm{w}_j},\nonumber\\&\quad \forall j \in \{1, 2, \dots, n_\mathrm{w}\}. \label{cons:DEP_imperfect_rewritten}
\end{align}
\end{footnotesize}Note that \eqref{cons:DEP_imperfect_rewritten} is still intractable. By applying the inequality $(x + y)^2 \leq 2(x^2 + y^2)$, we can obtain an upper bound on the left-hand side (LHS) of~\eqref{cons:DEP_imperfect_rewritten}, which is expressed~as

\vspace{-3mm}
\begin{footnotesize}
 \begin{align}
    &P_\mathrm{a}(1 + K(|\mathbf{v}_{\mathrm{w}_j}^{\dagger} \overline{\mathbf{H}}_{\mathrm{a},\mathrm{w}_{j}}  \mathbf{w}_{\mathrm{a}}|+ \delta_{\mathrm{w}_j})^2)\nonumber\\
    &\leq P_\mathrm{a} + 2KP_\mathrm{a}|\mathbf{v}_{\mathrm{w}_j}^{\dagger} \overline{\mathbf{H}}_{\mathrm{a},\mathrm{w}_{j}}  \mathbf{w}_{\mathrm{a}}|^2 + 2KP_\mathrm{a}\delta_{\mathrm{w}_j}^2.
\end{align}   
\end{footnotesize}Then, the covertness requirement under imperfect channel estimation can be given by

\vspace{-2mm}
\begin{footnotesize}
\begin{align}
     &P_\mathrm{a} + 2KP_\mathrm{a}|\mathbf{v}_{\mathrm{w}_j}^{\dagger} \overline{\mathbf{H}}_{\mathrm{a},\mathrm{w}_{j}}  \mathbf{w}_{\mathrm{a}}|^2 + 2KP_\mathrm{a}\delta_{\mathrm{w}_j}^2 \leq \eta_{\mathrm{w}_j}, \nonumber \\
     & \quad \forall j \in \{1, 2, \dots, n_\mathrm{w}\}.\label{cons:imperfect_DEP}
\end{align}      
\end{footnotesize}

\vspace{-5pt}
\subsection{\hspace{-5pt}Reliability Requirement \hspace{-0.5pt}under \hspace{-0.5pt}Imperfect \hspace{-0.5pt}Channel \hspace{-0.5pt}Estimation}
Regarding the reliability requirement under imperfect channel estimation, we also consider that the reliability requirement~\eqref{cons:Pout_req_simplify}, which is derived under perfect channel estimation, can be satisfied under all possible channel errors. Thus, the reliability requirement under imperfect channel estimation is given by
\begin{equation}
\footnotesize
    \min_{\Delta \mathbf{H}_{\mathrm{a},\mathrm{b}} \in \Omega_{\mathrm{a},\mathrm{b}}}P_\mathrm{a}\eta_\mathrm{b}|\mathbf{v}_{\mathrm{b}}^{\dagger}\mathbf{H}_{\mathrm{a},\mathrm{b}} \mathbf{w}_\mathrm{a}|^2 \geq r_{\mathrm{a}}.\label{cons:imperfect_TOP_1}
\end{equation}
Similar to \eqref{eq:DEP_inequality}, as for~\eqref{cons:imperfect_TOP_1}, we have

\vspace{-2mm}
\begin{footnotesize}
    \begin{align}
        &|\mathbf{v}_{\mathrm{b}}^{\dagger}({\mathbf{H}}_{\mathrm{a}, \mathrm{b}} + \Delta \mathbf{H}_{\mathrm{a}, \mathrm{b}}) \mathbf{w}_\mathrm{a}|\nonumber\\&\geq |\mathbf{v}_{\mathrm{b}}^{\dagger}{\mathbf{H}}_{\mathrm{a}, \mathrm{b}}\mathbf{w}_\mathrm{a}|-|\mathbf{v}_{\mathrm{b}}^{\dagger} \Delta \mathbf{H}_{\mathrm{a}, \mathrm{b}} \mathbf{w}_\mathrm{a}|\nonumber\\
        &\geq |\mathbf{v}_{\mathrm{b}}^{\dagger}{\mathbf{H}}_{\mathrm{a}, \mathrm{b}}\mathbf{w}_\mathrm{a}|-\|\mathbf{v}_{\mathrm{b}}\| \cdot \|\Delta \mathbf{H}_{\mathrm{a}, \mathrm{b}} \| \cdot\|\mathbf{w}_\mathrm{a}\|\nonumber\\
        &\geq |\mathbf{v}_{\mathrm{b}}^{\dagger}{\mathbf{H}}_{\mathrm{a}, \mathrm{b}}\mathbf{w}_\mathrm{a}|-\delta_{\mathrm{b}}. \label{cons:TOP_imperfect_rewritten}
    \end{align}
\end{footnotesize}Thus, \eqref{cons:imperfect_TOP_1} can be reformulated as

\vspace{-2mm}
\begin{footnotesize}
\begin{align}
    &P_\mathrm{a}\eta_\mathrm{b}(|\mathbf{v}_{\mathrm{b}}^{\dagger}{\mathbf{H}}_{\mathrm{a}, \mathrm{b}}\mathbf{w}_\mathrm{a}|^2-2\delta_{\mathrm{b}}|\mathbf{v}_{\mathrm{b}}^{\dagger}{\mathbf{H}}_{\mathrm{a}, \mathrm{b}}\mathbf{w}_\mathrm{a}|+\delta_{\mathrm{b}}^2)   \geq {r_{\mathrm{a}}}.\label{cons:TOP_imperfect}
\end{align}    
\end{footnotesize}Note that \eqref{cons:TOP_imperfect} remains non-convex. To address this, we provide a lower bound on the LHS of \eqref{cons:TOP_imperfect}, 
which can be expressed as

\vspace{-2mm}
\begin{footnotesize}
\begin{align}
    &P_\mathrm{a}\eta_\mathrm{b}(|\mathbf{v}_{\mathrm{b}}^{\dagger}{\mathbf{H}}_{\mathrm{a}, \mathrm{b}}\mathbf{w}_\mathrm{a}|^2-2\delta_{\mathrm{b}}|\mathbf{v}_{\mathrm{b}}^{\dagger}{\mathbf{H}}_{\mathrm{a}, \mathrm{b}}\mathbf{w}_\mathrm{a}|+\delta_{\mathrm{b}}^2)\nonumber\\ &   \geq P_\mathrm{a}\eta_\mathrm{b}(|\mathbf{v}_{\mathrm{b}}^{\dagger}{\mathbf{H}}_{\mathrm{a}, \mathrm{b}}\mathbf{w}_\mathrm{a}|^2-2\delta_{\mathrm{b}}\max\left(|\mathbf{v}_{\mathrm{b}}^{\dagger}{\mathbf{H}}_{\mathrm{a}, \mathrm{b}}\mathbf{w}_\mathrm{a}|^2,1\right)+\delta_{\mathrm{b}}^2). \label{cons:TOP_imperfect_relax}
\end{align}    
\end{footnotesize}Then, the reliability requirement under imperfect channel estimation can be expressed as 
\begin{equation}
\footnotesize
     P_\mathrm{a}\eta_\mathrm{b}(|\mathbf{v}_{\mathrm{b}}^{\dagger}{\mathbf{H}}_{\mathrm{a}, \mathrm{b}}\mathbf{w}_\mathrm{a}|^2-2\delta_{\mathrm{b}}\max\left(|\mathbf{v}_{\mathrm{b}}^{\dagger}{\mathbf{H}}_{\mathrm{a}, \mathrm{b}}\mathbf{w}_\mathrm{a}|^2,1\right)+\delta_{\mathrm{b}}^2) \geq {r_{\mathrm{a}}}. \label{cons:imperfect_TOP}
\end{equation}

\vspace{-3pt}
\subsection{Beamforming Design under Imperfect Channel Estimation}

We now investigate the optimal beamforming design for CR maximization under imperfect channel estimation. Following \eqref{Prob:Beamforming_perfect_SDR}, \eqref{cons:imperfect_DEP} and \eqref{cons:imperfect_TOP}, such an optimization problem can be 

\hspace{-2.9mm}formulated as

\vspace{-4mm}
\begin{subequations}
\begin{footnotesize}
    \label{Prob:Opt_Beamforming_imperfect}
    \begin{align}
        & \max _{ \mathbf{w}_{\mathrm{a}}, P_\mathrm{a}, r_{\mathrm{a}}}\; r_{\mathrm{a}} \\
       \text { s.t. }
        &\;\text{\eqref{wa_normal_cons}, \eqref{alice_power_cons}, \eqref{cons:imperfect_DEP}, \eqref{cons:imperfect_TOP}.}
    \end{align}
    \end{footnotesize}
\end{subequations}

{Same as problem \eqref{Prob:Beamforming_perfect_simplify}}, problem~\eqref{Prob:Opt_Beamforming_imperfect} can be reformulated~as

\vspace{-3mm}
\begin{footnotesize}
\begin{subequations}
    \label{Prob:Beamforming_imperfect_SDR_LMI}
    \begin{align}
        & \max _{ \widehat{\mathbf{W}}_{\mathrm{a}}, r_{\mathrm{a}}}\; r_{\mathrm{a}} \\
       \text { s.t. }
        &\; (1+2K\delta_{\mathrm{w}_j}^2)\mathrm{Tr}(\widehat{\mathbf{W}}_{\mathrm{a}})+2K\,\mathrm{Tr}(\mathbf{V}_{\mathrm{w}_j}\overline{\mathbf{H}}_{\mathrm{a},\mathrm{w}_{j}}\widehat{\mathbf{W}}_{\mathrm{a}}\overline{\mathbf{H}}_{\mathrm{a},\mathrm{w}_{j}}^{\dagger})\leq \eta_{\mathrm{w}_j},\nonumber\\
        & \; \forall j \in \{1, 2, \dots, n_\mathrm{w}\},\label{cons:DEP_imperfect_1}\\
        &\; \mathrm{Tr}(\mathbf{V}_{\mathrm{b}}{\mathbf{H}}_{\mathrm{a},\mathrm{b}}\widehat{\mathbf{W}}_{\mathrm{a}}{\mathbf{H}}_{\mathrm{a},\mathrm{b}}^{\dagger})+\delta_{\mathrm{b}}^2\mathrm{Tr}(\widehat{\mathbf{W}}_{\mathrm{a}})\nonumber\\
        &-2\delta_{\mathrm{b}}\max\left(\mathrm{Tr}(\mathbf{V}_{\mathrm{b}}{\mathbf{H}}_{\mathrm{a},\mathrm{b}}\widehat{\mathbf{W}}_{\mathrm{a}}{\mathbf{H}}_{\mathrm{a},\mathrm{b}}^{\dagger}),\mathrm{Tr}(\widehat{\mathbf{W}}_{\mathrm{a}})\right) \geq {r_{\mathrm{a}}}/{\eta_\mathrm{b}},\label{cons:TOP_imperfect_1}\\
        &\;\text{\eqref{wa_power_cons_SDR}, \eqref{SD_cons}, \eqref{SD_cons2}}.
    \end{align}
\end{subequations}
\end{footnotesize}Due to the constraint \eqref{SD_cons2} being non-convex, problem~\eqref{Prob:Beamforming_imperfect_SDR_LMI} is difficult to solve. Similar to problem~\eqref{Prob:Beamforming_perfect_SDR}, we can first utilize the SDR technique to relax \eqref{SD_cons2} and reformulate problem~\eqref{Prob:Beamforming_imperfect_SDR_LMI} into an SDP problem. Then, the standard convex optimization solver is used to solve the SDP problem, and the Gaussian randomization procedure is used to generate high-quality rank-one solutions. The optimal solutions and the corresponding CR $R_{\mathrm{a}}^*$ can also be obtained using Algorithm~\ref{algorithm:Opt-JBJ}.

\vspace{-8pt}
\subsection{Joint Beamforming and Antenna Orientation Design under Imperfect Channel Estimation }
\vspace{-3pt}

In this subsection, we explore the joint beamforming and antenna orientation design to maximize CR under imperfect channel estimation.
Following \eqref{Prob:Beamforming_antenna_perfect}, \eqref{cons:imperfect_DEP} and \eqref{cons:imperfect_TOP}, the CR maximization problem can be formulated as

\vspace{-4.5mm}
\begin{footnotesize}
\begin{subequations}
    \label{Prob:Beamforming_antenna_imperfect}
    \begin{align}
        & \max _{\mathbf{o}_{\mathrm{a}}, \mathbf{w}_{\mathrm{a}}, P_\mathrm{a}, r_{\mathrm{a}}}\; r_{\mathrm{a}} \\
       \text { s.t. }
        &\;\text{\eqref{wa_normal_cons}, \eqref{alice_power_cons}, \eqref{cons:imperfect_DEP}, \eqref{cons:imperfect_TOP}.}
    \end{align}
\end{subequations}
\end{footnotesize}Similar to problem \eqref{Prob:antenna_perfect}, we also convert \eqref{Prob:Beamforming_antenna_imperfect} into two optimization subproblems.
The first one concerns the optimization of $\mathbf{W}_{\mathrm{a}}$ and $r_{\mathrm{a}}$ under the given $P_\mathrm{a}$ and $\mathbf{o}_{\mathrm{a}}$, while the second one deals with the optimization of the $\mathbf{o}_{\mathrm{a}}$ and $P_\mathrm{a}$ under the given optimal beamforming vector $\mathbf{W}_{\mathrm{a}}$.
The first subproblem can be formulated as 

\vspace{-4.5mm}
\begin{footnotesize}
\begin{subequations}
    \label{Prob:O-BA_sub_problem_1_imp}
    \begin{align}
        & \max _{ {\mathbf{W}}_{\mathrm{a}}, r_{\mathrm{a}}}\; r_{\mathrm{a}} \\
       \text { s.t. }
        & P_\mathrm{a}(1 + 2K(\mathrm{Tr}(\mathbf{V}_{\mathrm{w}_j}\overline{\mathbf{H}}_{\mathrm{a},\mathrm{w}_{j}}\mathbf{W}_\mathrm{a}\overline{\mathbf{H}}_{\mathrm{a},\mathrm{w}_{j}}^{\dagger}) + \delta_{\mathrm{w}_j}^2))\leq \eta_{\mathrm{w}_j},\nonumber\\
       &\; \forall j \in \{1, 2, \dots, n_\mathrm{w}\}, \label{cons:imperfect_DEP_simp}\\
       &\; \mathrm{Tr}(\mathbf{V}_{\mathrm{b}}{\mathbf{H}}_{\mathrm{a},\mathrm{b}}{\mathbf{W}}_{\mathrm{a}}{\mathbf{H}}_{\mathrm{a},\mathrm{b}}^{\dagger})-2\delta_{\mathrm{b}}\max\left(\mathrm{Tr}(\mathbf{V}_{\mathrm{b}}{\mathbf{H}}_{\mathrm{a},\mathrm{b}}{\mathbf{W}}_{\mathrm{a}}{\mathbf{H}}_{\mathrm{a},\mathrm{b}}^{\dagger}),1\right)+\delta_{\mathrm{b}}^2\nonumber\\
        & \geq {r_{\mathrm{a}}}/{(P_\mathrm{a}\eta_\mathrm{b})},\label{cons:imperfect_TOP_simp}\\
        &\;\text{\eqref{cons:Tr_Wa_1}, \eqref{SD_cons_Wa}}.
        \vspace{-2mm}
    \end{align}
\end{subequations}
\end{footnotesize}By using a standard convex optimization solver, the optimal beamforming matrix ${\mathbf{W}}_{\mathrm{a}}^*$ of the problem \eqref{Prob:O-BA_sub_problem_1_imp} can be obtained.

With the optimal $\mathbf{W}_\mathrm{a}^*$, the second subproblem can be formulated as

\vspace{-4.5mm}
\begin{footnotesize}
\begin{subequations}
    \label{Prob:antenna_imperfect_Gab}
    \begin{align}
        & \max _{ \mathbf{o}_{\mathrm{a}}, P_\mathrm{a}}\; G_{\mathrm{a},\mathrm{b}} P_\mathrm{a}  \\
       \text { s.t. }
        &\text{ \eqref{alice_power_cons}, \eqref{cons:imperfect_DEP_simp}, \eqref{cons:imperfect_TOP_simp}.}
    \end{align}
\end{subequations}
\end{footnotesize}With Theorem~\ref{Theorem:minimum_psi_ab}, problem~\eqref{Prob:antenna_imperfect_Gab} can also be simplified.
Therefore, by replacing subproblems~\eqref{Prob:antenna_perfect_wa} and~\eqref{Prob:antenna_perfect_Gab} with~\eqref{Prob:O-BA_sub_problem_1_imp} and~\eqref{Prob:antenna_imperfect_Gab} in Algorithm~\ref{overall_algorithm_P1}, the same algorithm can be applied to solve problem~\eqref{Prob:Beamforming_antenna_imperfect}, just as it is used for solving problem~\eqref{Prob:Beamforming_antenna_perfect}.

\vspace{-7pt}
\section{Simulation and Numerical Results}
\vspace{-3pt}
\label{sec:Simu}
In this section, we provide extensive simulation and numerical results to illustrate the performance of the optimal beamforming (OB) design and the joint optimal beamforming and antenna orientation (JO-BA) design under both perfect and imperfect channel estimations.
{Moreover, we compare the CR performance of the proposed OB/JO-BA designs with several benchmark covert beamforming schemes, including zero-forcing (ZF)~\cite{Ma2021} and maximum-ratio transmission (MRT)~\cite{Forouzesh2020a}.}
Unless otherwise specified, the setting of system parameters is summarized in Table~\ref{table1}.

\vspace{-9pt}
\begin{table}[htbp]
	\centering  
	\caption{System Parameters}  
        \vspace{-4pt}
	\label{table1}  
        \renewcommand{\arraystretch}{1.2}
	\begin{tabular}{|@{\hskip 3pt}p{0.258\textwidth}@{\hskip 3pt}|@{\hskip 3pt}p{0.2\textwidth}@{\hskip 3pt}|}  
		\hline  
		\textbf{Parameters} & \textbf{Values} \\  
		\hline
            Carrier frequency & $18$ \text{GHz} (Ku) \\
		\hline
            Near-in side-lobe level $L_S$ & $-6.75$ \text{dBi} \\
		\hline
            Far-out side-lobe level $L_F$ & $0$ \text{dBi}\\
            \hline
            $3$dB beamwidth $\psi_{3\mathrm{dB}}$ & $0.4^{\circ}$ \\
            \hline
            Minimum off-boresight angel $\theta_0$ & $1^{\circ}$ \\
            \hline
            Maximum antenna gain of Earth station and satellites $G_{\mathrm{a},\mathrm{max}}^{\text{dBi}}$, $G_{\mathrm{sat},\mathrm{max}}^{\text{dBi}}$ & $32$ \text{dBi} \\
            \hline
            Maximum transmit power $P_\mathrm{a,max}$ & $50$ \text{dBm} \\
            \hline
            Number of antennas at Alice $M_{\mathrm{a}}$ & $8\times8=64$ \\
		\hline
            Number of antennas at satellites $M_{\mathrm{sat}}$ & $4\times4=16$ \\
		\hline
            Noise power at Bob and wardens $\sigma_{\mathrm{b}}^2,\sigma_\mathrm{w}^2$ & $-90$ \text{dBm}  \\
            \hline
            Noise uncertainty at Bob and wardens $\rho$ & $1.5$\\
            \hline
            Rician factor $K$ & $7$ \\
            \hline
            Radius of Earth $d_{\mathrm{a}}$ &$6,378$ km\\
            \hline
            Satellite altitude $h$ & $36,000$ km \\
            \hline
            Location of Alice $\mathbf{q}_\mathrm{a}$ & $\left[0,\,d_{\mathrm{a}},\,0 \right]^T$ \\
            \hline
            Location of Bob $\mathbf{q}_\mathrm{b}$ & $d_{\mathrm{sat}}[ \cos90^{\circ},\,\sin90^{\circ},\, 0]^T$  \\
            \hline
            Location of wardens $\mathbf{q}_{\mathrm{w}_1}$,$\mathbf{q}_{\mathrm{w}_2}$,$\mathbf{q}_{\mathrm{w}_3}$,$\mathbf{q}_{\mathrm{w}_4}$ &$d_{\mathrm{sat}}[\cos92^{\circ}, \sin92^{\circ}, 0]^T$,\\
            &$d_{\mathrm{sat}}[\cos91^{\circ}, \sin91^{\circ}, 0]^T$,\\
            &$d_{\mathrm{sat}}[\cos89^{\circ}, \sin89^{\circ}, 0]^T$,\\
            &$d_{\mathrm{sat}}[\cos88^{\circ}, \sin88^{\circ}, 0]^T$\\
            \hline
	\end{tabular}
\vspace{-7pt}
\end{table}

\begin{figure}[bh]
    \vspace{-10pt}
    \centering
    \includegraphics[width=2.1in]{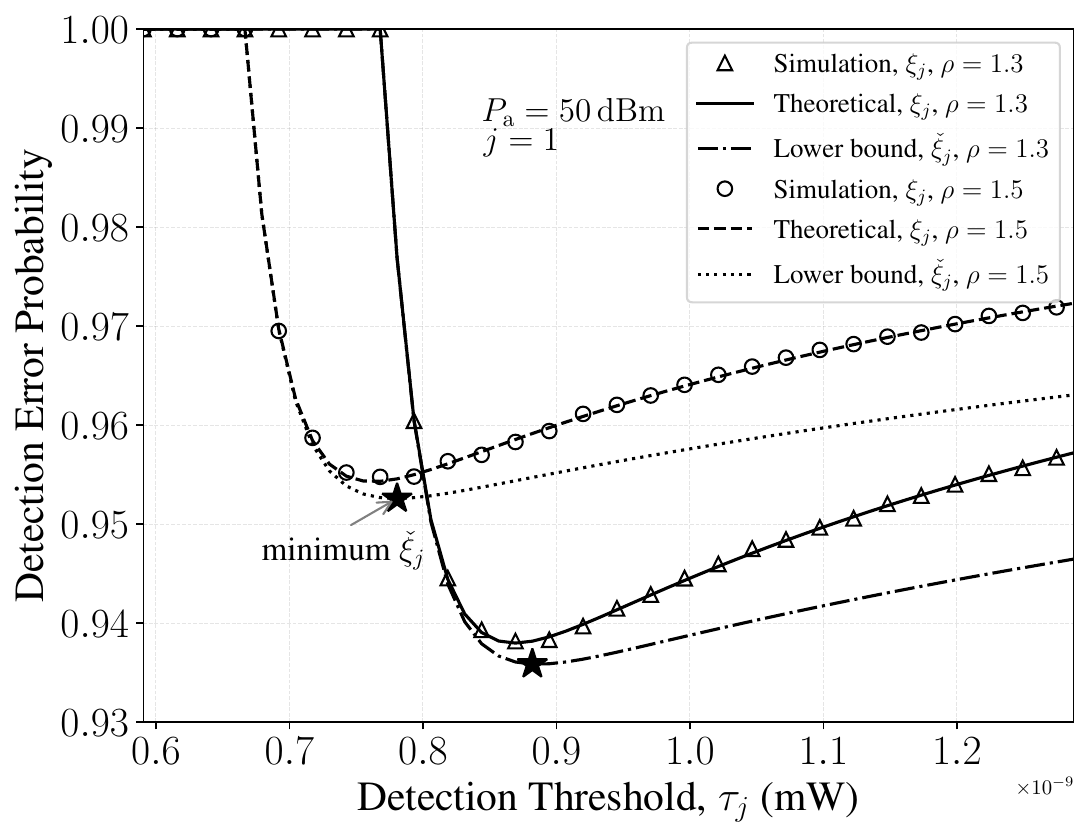}
    \vspace{-12pt}
    \caption{{Detection error probability vs. detection threshold $\tau_j$.}}
    \vspace{-13pt}
    \label{fig:DEP_vs_threshold}
\end{figure}

\vspace{-8pt}
\subsection{Model Validation}
\vspace{-3pt}
To explore the impact of the detection threshold $\tau_j$ on the DEP $\xi_{j}$ as well as its lower bound $\check{\xi}_{j}$, we conduct extensive simulations over $10^6$ independent channel realizations and summarize in Fig.~\ref{fig:DEP_vs_threshold} how $\xi_{j}$ and $\check{\xi}_{j}$ vary with $\tau_j$ for the setting of $j=1$, $P_{\mathrm{a}}=50\,\mathrm{dBm}$ and $\rho =\{1.2,\,1.5\}$. 
It can be seen from Fig.~\ref{fig:DEP_vs_threshold} that the theoretical results of $\xi_{j}$ well match the simulation ones, which validates the correctness of our theoretical modeling for $\xi_{j}$. 
Another observation from Fig.~\ref{fig:DEP_vs_threshold} is that $\check{\xi}_{j}$ is consistently lower than $\xi_{j}$ for a given $\rho$, and that there exists an optimal detection threshold $\tau_j^{*}$ minimizing $\check{\xi}_{j}$, which is consistent with Lemma~\ref{Theorem:F_z_lb} and Theorem~\ref{Theorem:Opt_tau}.

{
\begin{figure}[ht]
        \vspace{-7pt}
	\centering
        \hspace{-10pt}
	\subfigure[Minimum DEP vs. $P_\mathrm{a}$]{
            \vspace{-15pt}
		\includegraphics[width=0.495\linewidth]{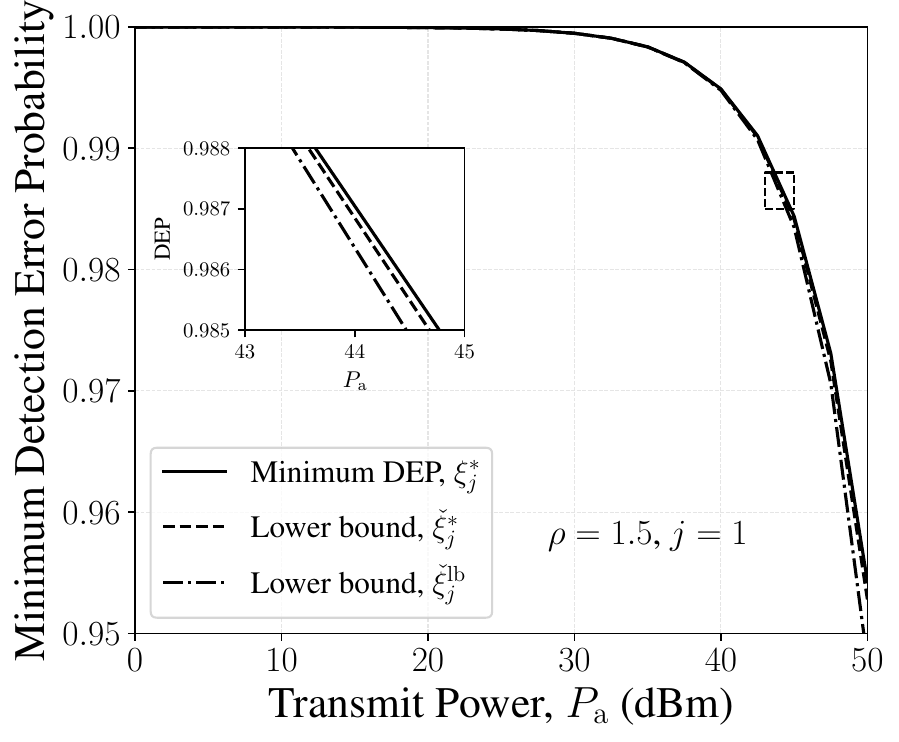}
		\label{fig:DEP_vs_pa}
	}
        \hspace{-15pt}
	\subfigure[Minimum DEP vs. $\rho$]{
            \vspace{-15pt}
		\includegraphics[width=0.495\linewidth]{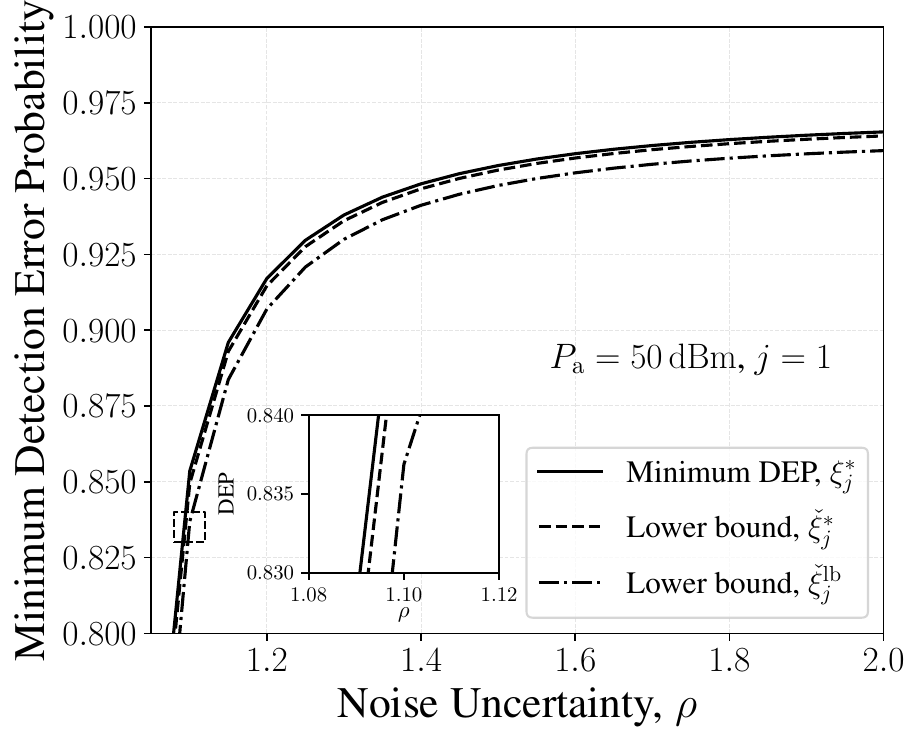}
		\label{fig:DEP_vs_rho}
	}
        \hspace{-15pt}
        \vspace{-7pt}
	\caption{{Validation of minimum DEP under different parameter settings.}}
        \label{fig:DEP_vs_pa_rho}
        \vspace{-15pt}
\end{figure}
}

To validate the effectiveness of adopting the lower bound of DEP as the covertness requirement, we compare the approximations $\check{\xi}_{j}^*$, $\check{\xi}^{\mathrm{lb}}_{j}$ with the minimum value of DEP $\xi_j$, denoted by $\xi_j^*$, which is obtained via the exhaustive search.
Under the setting of $P_{\mathrm{a}}=50\, \mathrm{dBm}$ and $\rho=1.5$,
the impacts of $P_{\mathrm{a}}$ and $\rho$ on $\xi_{j}^*$,  $\check{\xi}_{j}^*$, and $\check{\xi}^{\mathrm{lb}}_{j}$ are summarized in Fig.~\ref{fig:DEP_vs_pa} and Fig.~\ref{fig:DEP_vs_rho}, respectively. 
We can see from Fig.~\ref{fig:DEP_vs_pa_rho} that both $\check{\xi}^{*}_{j}$ and $\check{\xi}^{\mathrm{lb}}_{j}$ closely approximate $\xi_{j}^{*}$. Furthermore, the discrepancy between $\xi_{j}^*$ and its approximations $\check{\xi}_{j}$, $\check{\xi}^{\mathrm{lb}}_{j}$ decreases monotonically with decreasing $P_{\mathrm{a}}$ and $\rho$.
These phenomena indicate the tightness of our covertness requirement and are consistent with Lemma~\ref{Theorem:F_z_lb} and Theorem~\ref{Theorem:DEP_lower_bound}.

\begin{figure*}[hb]
\vspace{-12pt}
    \centering
    \begin{minipage}[t]{0.3\textwidth}
    \centering
    \raisebox{0pt}{\includegraphics[width=\textwidth]{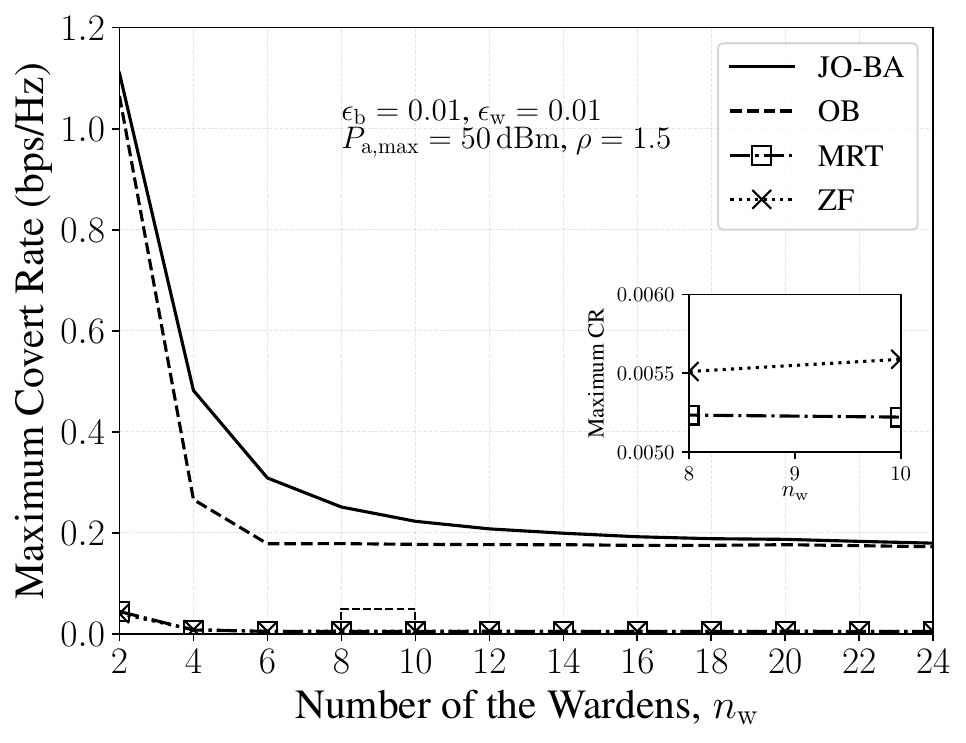}}
    \vspace{-23pt}
    \caption{{Maximum CR vs. number of the wardens $n_{\mathrm{w}}$.}}
    \vspace{-12pt}
    \label{fig:CR_vs_nw}
    \end{minipage}
    \hfill
    \begin{minipage}[t]{0.3\textwidth}
        \centering
        \raisebox{3pt}{\includegraphics[width=\textwidth]{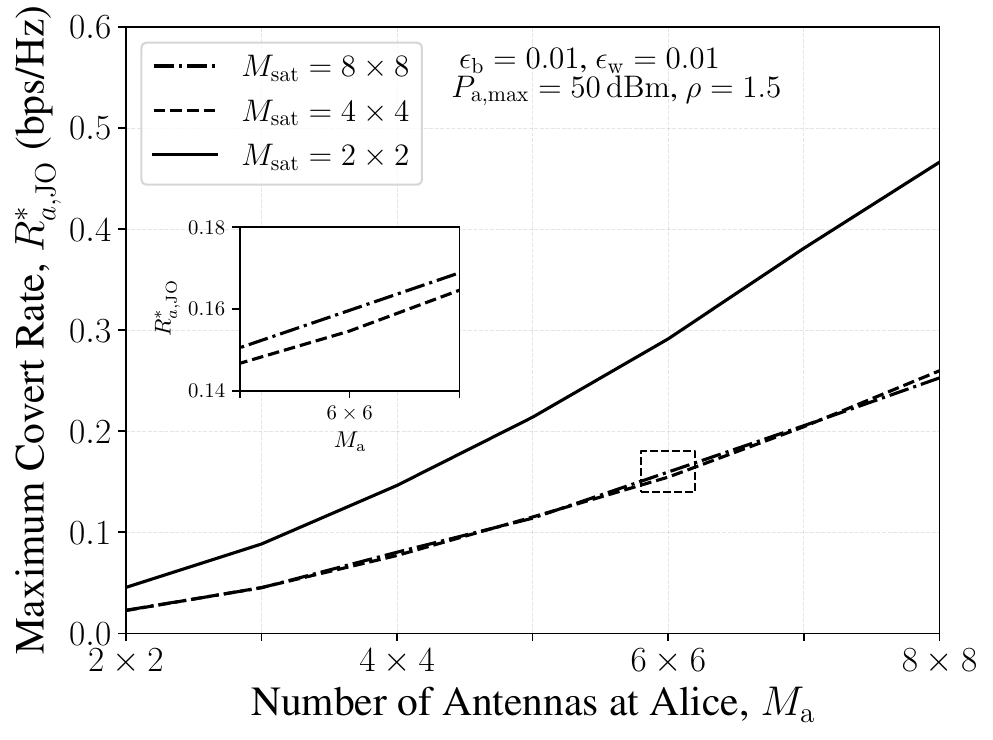}}
        \vspace{-25pt}
        \caption{{Maximum CR achieved by JO-BA vs. number of antennas at Alice $M_\mathrm{a}$.}}
        \vspace{-12pt}
        \label{fig:CR_vs_antenna}
    \end{minipage}
    \hfill
    \begin{minipage}[t]{0.3\textwidth}
        \centering
        \raisebox{2pt}{\includegraphics[width=\textwidth]{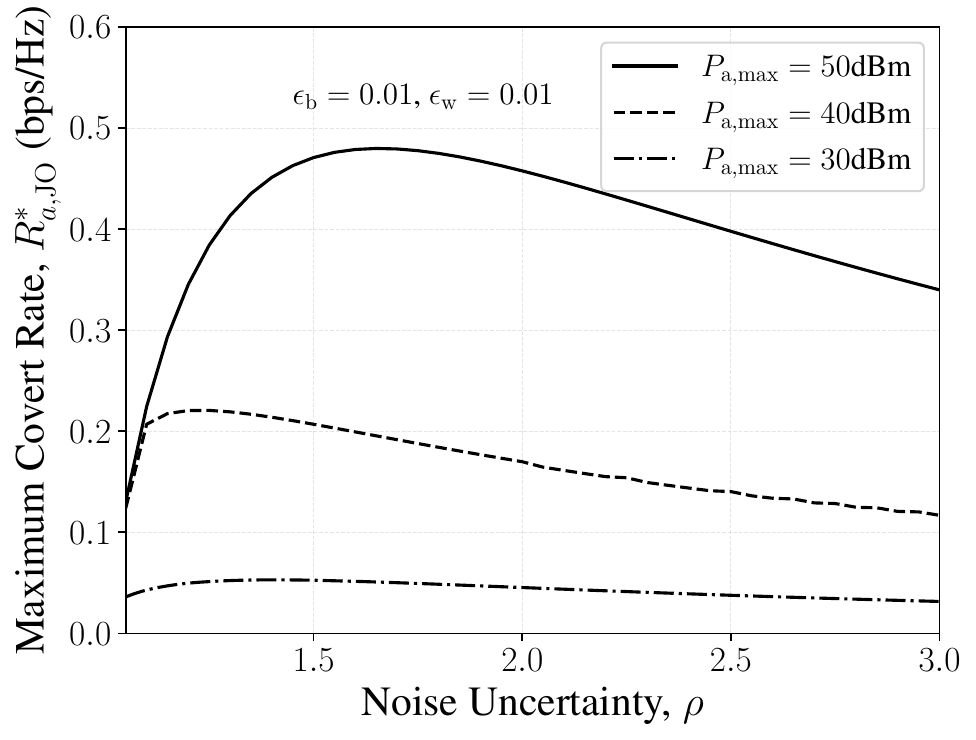}}
        \vspace{-25pt}
        \caption{{Maximum CR achieved by JO-BA vs. noise uncertainty $\rho$.}}
        \vspace{-12pt}
        \label{fig:CR_vs_rho}
    \end{minipage}
\end{figure*}

\vspace{-10pt}
\subsection{CR Performance under Perfect Channel Estimation}
\vspace{-2pt}

We now investigate the impact of the number of wardens $n_\mathrm{w}$ on the maximum CR under perfect channel estimation. Fig.~\ref{fig:CR_vs_nw} presents how the maximum CR varies with $n_\mathrm{w}$ for the setting of $P_\mathrm{a,max}=50$~dBm, $\rho=1.5$, and $\epsilon_{\mathrm{b}} = \epsilon_\mathrm{w} = 0.01$.
In this simulation, the $n_\mathrm{w}$ wardens are co-orbital with Bob and uniformly distributed around Bob within Alice's observable range, where the azimuth angles $\psi_{\mathrm{w}_j}$ of the wardens follow a uniform distribution over \([88^\circ, 92^\circ]\).
{Fig.~\ref{fig:CR_vs_nw} also shows that the proposed OB and JO-BA schemes outperform the MRT and ZF schemes in terms of CR performance. 
The reasons can be explained as follows. 
Since the angular separation between Bob and the wardens is small, the Alice-Bob link is similar to the Alice-warden links. When the MRT scheme maximizes the beamforming gain toward Bob, it simultaneously enhances beamforming gains toward wardens, which forces Alice reduce transmit power, leading to a low CR performance. Similarly, when the ZF scheme forces the beamforming gain toward all wardens to zero, the gain toward Bob also approaches zero, resulting in a low CR performance.}
We can also observe from Fig.~\ref{fig:CR_vs_nw} that when $n_{\mathrm{w}}$ is either sufficiently small or large, the difference of the maximum CR achieved by OB and JO-BA is not significant. 
This is because the distance between Bob and the wardens is either too large or too small when $n_{\mathrm{w}}$ is either sufficiently small or large, making the impact of antenna orientation optimization negligible.

The impact of the antenna array sizes at Alice and the satellites on the maximum CR obtained by JO-BA is further investigated under the perfect channel estimation.
Under each setting of $M_{\mathrm{sat}}=\{2\times2,\,4\times4,\,8\times8\}$, $P_\mathrm{a,max}=50$~dBm, $\rho=1.5$, and $\epsilon_{\mathrm{b}} =\epsilon_\mathrm{w} = 0.01$, Fig. \ref{fig:CR_vs_antenna} illustrates how the maximum CR denoted as $R_{\mathrm{a},\mathrm{JO}}^*$ varies with $M_{\mathrm{a}}$ for different numbers of antennas at the satellites $M_{\mathrm{sat}}$.
We can observe from Fig.~\ref{fig:CR_vs_antenna} that $R_{\mathrm{a},\mathrm{JO}}^*$ increases with $M_{\mathrm{a}}$.
{This is because increasing $M_{\mathrm{a}}$ enables more
precise beam pattern steering toward Bob to achieve higher beamforming gain, and simultaneously results in less signal leakage to the wardens. We can also see from Fig.~\ref{fig:CR_vs_antenna} that a higher $M_{\mathrm{sat}}$ results in a lower $R_{\mathrm{a},\mathrm{JO}}^*$. 
This is because increasing $M_{\mathrm{sat}}$ enhances the receive beamforming gain for both Bob and the wardens, which improves the SNR at Bob while simultaneously reducing the DEP at the wardens. However, the reduced DEP forces Alice to lower its transmit power. Thus, the covert rate decreases, and this power reduction effect ultimately dominates the covert rate performance.}

To explore the effects of the noise uncertainty $\rho$ on the maximum CR $R_{\mathrm{a},\mathrm{JO}}^*$ under perfect channel estimation, we summarize in Fig.~\ref{fig:CR_vs_rho} how $R_{\mathrm{a},\mathrm{JO}}^*$ varies with $\rho$ for each setting of $P_{\mathrm{a},\max} = \{30,40,50\}$~dBm, and $\epsilon_{\mathrm{b}}=\epsilon_\mathrm{w} = 0.01$.
We can see from Fig.~\ref{fig:CR_vs_rho} that as $\rho$ increases, $R_{\mathrm{a},\mathrm{JO}}^*$ first increases and then decreases.
This can be explained as follows. 
Note that the DEP of the $j^{\mathrm{th}}$ warden and the TOP of the Alice–Bob link both increase as $\rho$ increases.
When $\rho$ is relatively small, the increase of DEP dominates that of TOP, which allows Alice to increase transmit power $P_{\mathrm{a}}$ up to some threshold under the constraint of the covertness requirement, thus $R_{\mathrm{a},\mathrm{JO}}^*$ can increase to the maximum value. 
As $\rho$ continues to increase, the increase of TOP dominates that of DEP, requiring a reduction in the transmit rate to meet the constraint of the reliability requirement, thereby leading to a decrease in $R_{\mathrm{a},\mathrm{JO}}^*$.

\begin{figure*}[thb]
    \centering
    \begin{minipage}[t]{0.3\textwidth}
        \centering
        \raisebox{3pt}{\includegraphics[width=\textwidth]{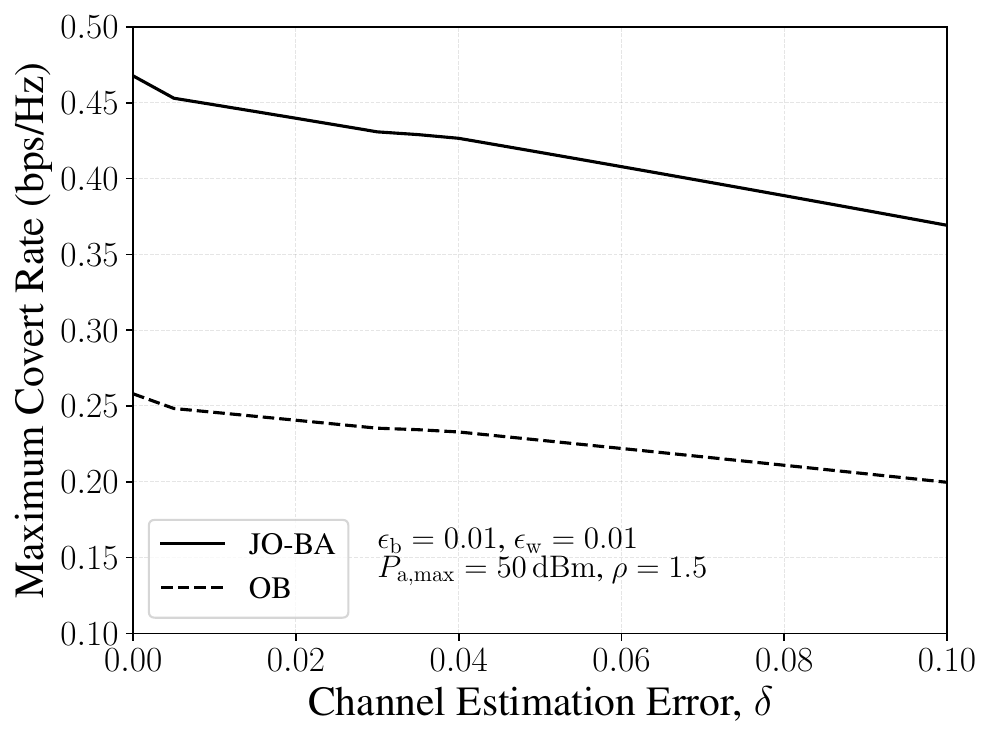}}
        \vspace{-23pt}
        \caption{{Maximum CR vs. channel estimation error $\delta$.}}
        \vspace{-15pt}
        \label{fig:CR_vs_delta}
    \end{minipage}
    \hfill
    \begin{minipage}[t]{0.3\textwidth}
        \centering
        \raisebox{3pt}{\includegraphics[width=\textwidth]{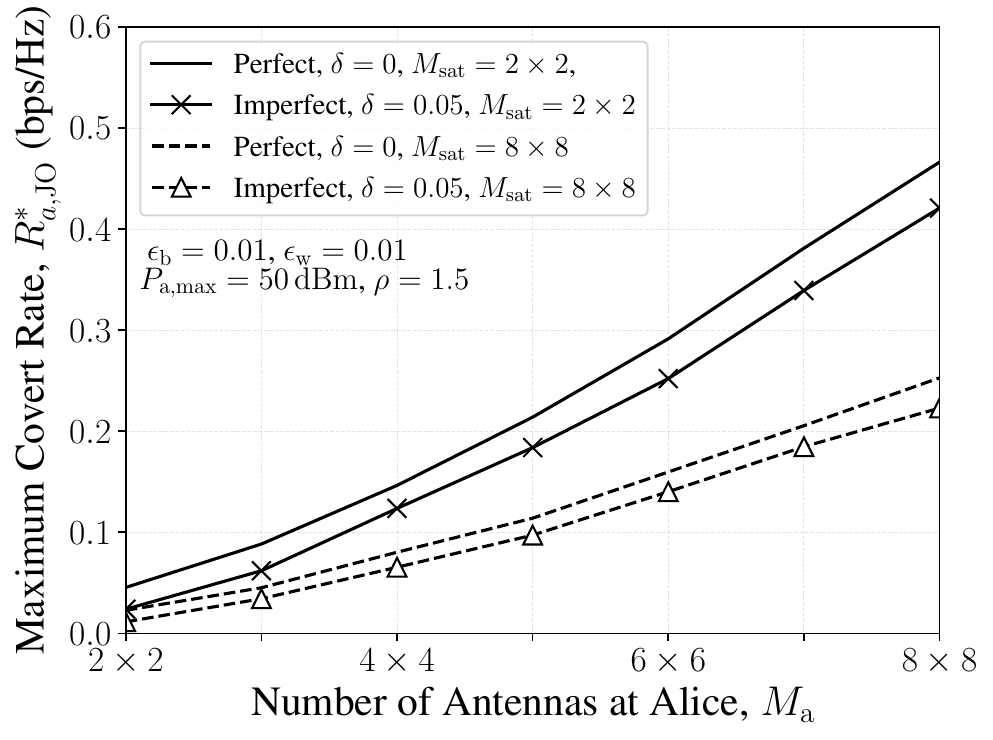}}
        \vspace{-25pt}
        \caption{{Maximum CR achieved by JO-BA vs. $M_{\mathrm{a}}$ under perfect/imperfect channel estimation.}}
        \vspace{-15pt}
        \label{fig:CR_vs_antenna_imp}
    \end{minipage}
    \hfill
    \begin{minipage}[t]{0.3\textwidth}
        \centering
        \raisebox{2pt}{\includegraphics[width=\textwidth]{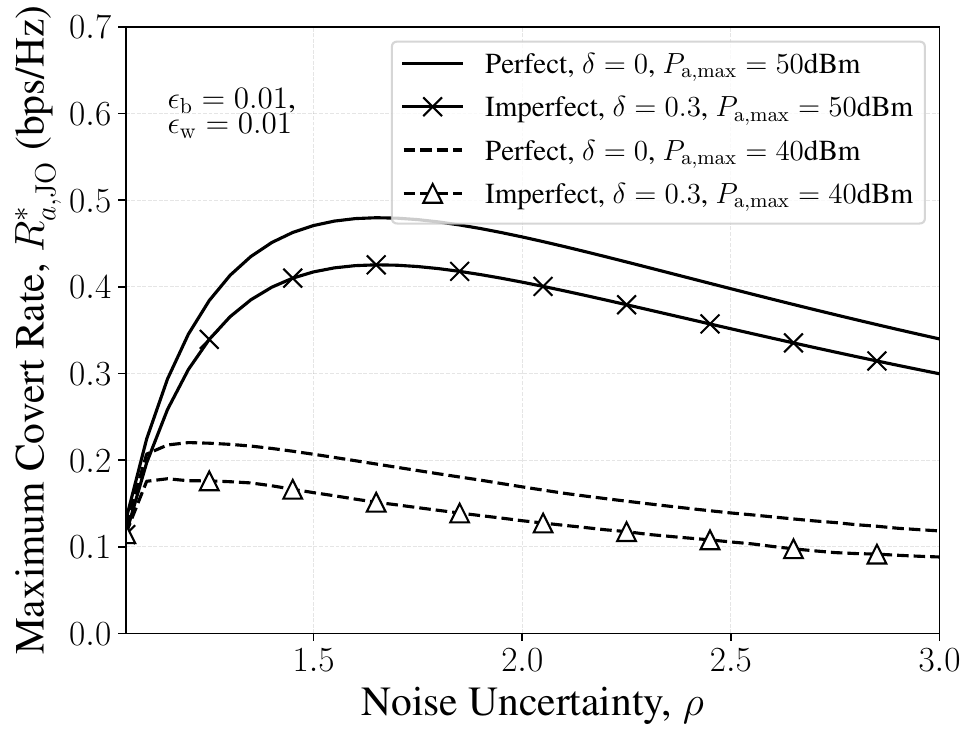}}
        \vspace{-25pt}
        \caption{{Maximum CR achieved by JO-BA vs. $\rho$ under perfect/imperfect channel estimation.}}
        \vspace{-15pt}
        \label{fig:CR_vs_rho_imp}
    \end{minipage}
\end{figure*}

\vspace{-10pt}
\subsection{Impact of Channel Estimation Error} 
\vspace{-2pt}

To explore the effect of the channel estimation error on the maximum CR under OB and JO-BA, we summarize in Fig.~\ref{fig:CR_vs_delta} how the maximum CR varies with $\delta=\delta_{\mathrm b}=\delta_{\mathrm{w}_j}$ for the setting of $P_\mathrm{a,max} = 50$~dBm, $\rho=1.5$, and $\epsilon_{\mathrm{b}}=\epsilon_\mathrm{w} = 0.01$.
It can be seen from Fig.~\ref{fig:CR_vs_delta} that the maximum CR under the two beamforming designs decreases with the increase of $\delta$. 
The reasons behind this phenomenon can be explained as follows.
As $\delta$ increases, the LHS of the constraint \eqref{cons:imperfect_DEP_simp} increases while that of the constraint \eqref{cons:imperfect_TOP_simp} decreases. To satisfy \eqref{cons:imperfect_DEP_simp}, Alice must reduce the covert transmit power $P_{\mathrm{a}}$. Based on this, Alice must reduce the transmit rate $R_{\mathrm{a}}$ to satisfy \eqref{cons:imperfect_TOP_simp}. Thus, the reduced $P_{\mathrm{a}}$ and $R_{\mathrm{a}}$ result in a decrease of the maximum CR.

To explore the effects of the number of antennas at Alice $M_{\mathrm{a}}$ on the maximum CR $R_{\mathrm{a},\mathrm{JO}}^*$ achieved by JO-BA under perfect and imperfect channel estimations, we summarize in Fig.~\ref{fig:CR_vs_antenna_imp} how $R_{\mathrm{a},\mathrm{JO}}^*$ varies with $M_{\mathrm{a}}$ for each setting of $\delta=\{0,\,0.05\}$, $M_{\mathrm{sat}}=\{2\times2,\,8\times8\}$, $P_{\mathrm{a},\max} = 50$ dBm, $\rho=1.5$, and $\epsilon_{\mathrm{b}} = \epsilon_\mathrm{w} = 0.01$.
We can see from Fig.~\ref{fig:CR_vs_antenna_imp} that the maximum CR gap between perfect and imperfect channel estimation becomes more pronounced as $M_{\mathrm{a}}$ or $M_{\mathrm{sat}}$ increases.
The reason behind this phenomenon can be explained as follows.
As $M_{\mathrm{a}}$ increases, the beamforming gain at Bob improves, which significantly increases the negative effects of channel estimation errors according to the LHS of \eqref{cons:imperfect_TOP_simp}, consequently, leading to the maximum CR gap becoming more pronounced.

The impact of the noise uncertainty $\rho$ on the maximum CR $R_{\mathrm{a},\mathrm{JO}}^*$ is further investigated under perfect and imperfect channel estimations.
Under each setting of $\delta=\{0,\,0.05\}$, $P_{\mathrm{a},\max} = \{30,50\}$ dBm, and  $\epsilon_{\mathrm{b}}=\epsilon_{\mathrm{w}}=0.01$, Fig.~\ref{fig:CR_vs_rho_imp} illustrates how $R_{\mathrm{a},\mathrm{JO}}^*$ varies with $\rho$.
We can see from Fig.~\ref{fig:CR_vs_rho_imp} that as $\rho$ increases, the maximum CR gap between perfect and imperfect channel estimation first increases and then remains unchanged.
The reason behind this phenomenon can be explained as follows.
As $\rho$ increases, Alice can adopt a larger transmit power $P_{\mathrm{a}}$, which leads to more significant negative effects of channel estimation according to the LHS of \eqref{cons:imperfect_DEP_simp}, such that the maximum CR gap increases.
When $P_{\mathrm{a}}$ equals $P_{\mathrm{a},\max}$, the negative effects of channel estimation errors remain unchanged, resulting in the maximum CR gap remaining unchanged.

\vspace{-2pt}
\section{Conclusion}
\label{sec:Conclusion}
\vspace{-2pt}
This paper investigated the uplink covert communication in a MIMO GEO satellite-terrestrial system, proposed the optimal beamforming (OB) design as well as the joint optimal beamforming and antenna orientation (JO-BA) design for covert performance enhancement, and conducted the related theoretical modeling for detection error probability (DEP), transmission outage probability (TOP) and covert rate (CR) under both perfect and imperfect channel estimations.
The results in this paper indicate that under both the perfect and imperfect channel estimations, the CR performance can be improved by applying the OB design, and such improvement can be further enhanced by adopting the JO-BA design, especially under the scenario with a moderate number of wardens. In addition, we can have an insight that the CR performance in general improves as a more accurate channel estimation is adopted, and the improvement becomes more significant when a large-scale antenna array is deployed.

\vspace{-2pt}
\appendix
\vspace{-3pt}
\subsection{Proof of Lemma \ref{Theorem:F_z_lb}} \label{Proof:F_z_lb}
\vspace{-2pt}
We first define the following function
\vspace{-1mm}
\begin{equation}
\vspace{-1mm}
\footnotesize
   h_{1,j}(x) \triangleq \frac{\log(\tau_j-x)-\log\left(\hat{\sigma}_{\mathrm{w}_j}^{\mathrm{lb}}\right)}{2\log(\rho)}.
\end{equation}
To analyze the concavity of $h_{1,j}(x)$, we then take the second derivative of $h_{1,j}(x)$ w.r.t. $x$, which is given by
\begin{equation}
\label{eq:partial_g_1}
\footnotesize
\vspace{-1mm}
   \frac{\partial^2 h_{1,j}(x)}{\partial x^2}  = -\frac{1}{2\log(\rho)(x-\tau_j)^2}<0.
\end{equation}
Consequently, $h_{1,j}(x)$ is a concave function and thus satisfies the following inequality. 
\begin{equation}
\vspace{-1mm}
\label{eq:g1_lb}
    \footnotesize
    h_{1,j}(x) \geq \frac{x_2-x}{x_2-x_1} h_{1,j}\left(x_1\right)+\frac{x-x_1}{x_2-x_1} h_{1,j}\left(x_2\right), x \in [x_1,x_2].
\end{equation}
According to \eqref{F_z_w}, we note that $S_{\mathrm{w}_j}\in[0, \tau_j - \hat{\sigma}_{\mathrm{w}_j}^{\mathrm{lb}}]$. 
Let $x_1 = 0$ and $x_2 = \tau_j - \hat{\sigma}_{\mathrm{w}_j}^{\mathrm{lb}}$, the following inequality can be satisfied.

\vspace{-3mm}
\begin{footnotesize}
\begin{align}
    h_{1,j}(S_{\mathrm{w}_j}) 
    \geq \frac{(S_{\mathrm{w}_j}-\tau_j+\hat{\sigma}_{\mathrm{w}_j}^{\mathrm{lb}}) \left(\log (\tau_j)-\log \left(\hat{\sigma}_{\mathrm{w}_j}^{\mathrm{lb}}\right)\right)}{2 \log (\rho) (\hat{\sigma}_{\mathrm{w}_j}^{\mathrm{lb}}- \tau_j)}. \label{eq:F_zw_lower_bound_proof} 
\end{align}    
\end{footnotesize}Based on \eqref{eq:F_zw_lower_bound_proof}, the lower bound of $F_{Z_j}(\tau_j)$ is presented in~\eqref{eq:F_zw_lower_bound}.

We further consider the tightness of the lower bound. Generally, covert communication considers powerful wardens who can conduct an accurate estimation of the noise power, i.e., $\rho$ approaches $1$ \cite{He2017}. Based on this, the interval $(\hat{\sigma}_{s}^{\mathrm{lb}},  \hat{\sigma}_{s}^{\mathrm{ub}})$ is sufficiently narrow, such that $\tau_j \in (\hat{\sigma}_{s}^{\mathrm{lb}}, \hat{\sigma}_{s}^{\mathrm{ub}})$ approaches $\hat{\sigma}_{s}^{\mathrm{lb}}$. Moreover, both $\tau_j - \hat{\sigma}_{\mathrm{w}_j}^{\mathrm{lb}}$ and $S_{\mathrm{w}_j}$ approach 0. Therefore, the equality condition of the \eqref{eq:F_zw_lower_bound_proof} holds in this case, indicating that the lower bound in \eqref{eq:F_zw_lower_bound} is tight.

\vspace{-7pt}
\subsection{Proof of Lemma \ref{Theorem:Opt_tau}} \label{Proof:Opt_tau}
\vspace{-2pt}

To find the optimal detection threshold $\tau_j^*$ that minimizes $\check{\xi}_j$, we take the derivative of $\check{\xi}_j$ w.r.t. $\tau_j$, which is given by
\begin{equation}
\vspace{-1mm}
\footnotesize
    \frac{\partial \check{\xi}_j}{\partial\tau_j} \hspace{-3pt}= \hspace{-3pt}\begin{cases} 0, & \hspace{-2.1mm} \tau_j \hspace{-2pt}<\hspace{-2pt} \hat{\sigma}_{\mathrm{w}_j}^{\mathrm{lb}}, \\  
        \frac{\omega_j \left( \mu_j(\tau_j) \Gamma \left(m_j,\mu_j(\tau_j)\right)-  \gamma \left(m_j+1,\mu_j(\tau_j)\right) \nu_j(\tau_j) )\right)}{2  \tau_j\log(\rho)\Gamma(m_j+1)(\tau_j-\hat{\sigma}_{\mathrm{w}_j}^{\mathrm{lb}})}, & \hspace{-2.1mm}\hat{\sigma}_{\mathrm{w}_j}^{\mathrm{lb}} \hspace{-2pt}\le\hspace{-2pt} \tau_j \le \hat{\sigma}_{\mathrm{w}_j}^{\mathrm{ub}},\\  \frac{\omega_j \left( \mu_j(\tau_j) \gamma \left(m_j,\mu_j(\tau_j)\right)+  \gamma \left(m_j+1,\mu_j(\tau_j)\right) \nu_j(\tau_j) )\right)}{2\tau_j\log(\rho)\Gamma(m_j+1)(\tau_j-\hat{\sigma}_{\mathrm{w}_j}^{\mathrm{lb}})}, & \hspace{-2.1mm}\hat{\sigma}_{\mathrm{w}_j}^{\mathrm{ub}}\hspace{-2pt}<\hspace{-2pt}\tau_j. \end{cases}
\end{equation}

Note that $\mu_j(\tau_j)>0$ and $\nu_j(\tau_j)>0$, thus $\frac{\partial \check{\xi}_j}{\partial\tau_j} >0$ when $\tau_j>\hat{\sigma}_{\mathrm{w}_j}^{\mathrm{ub}}$, indicating that $\check{\xi}_j$ monotonically increases with $\tau_j$.

Next, we explore the monotonicity of $\check{\xi}_j$ within the interval $\hat{\sigma}_{\mathrm{w}_j}^{\mathrm{lb}} \le \tau_j \le \hat{\sigma}_{\mathrm{w}_j}^{\mathrm{ub}}$. Define $h_{2,j}(\tau_j) \triangleq \mu_j(\tau_j) \Gamma \left(m_j,\mu_j(\tau_j)\right)- \gamma \left(m_j+1,\mu_j(\tau_j)\right) \nu_j(\tau_j)$,
we have $\frac{\partial \check{\xi}_j}{\partial\tau_j}=\frac{\omega_j h_{2,j}(\tau_j)}{2\tau_j\log(\rho)\Gamma(m_j+1)(\tau_j-\hat{\sigma}_{\mathrm{w}_j}^{\mathrm{lb}})}$. Setting $\frac{\partial \check{\xi}_j}{\partial\tau_j}=0$, i.e., $h_{2,j}(\tau_j)=0$, we can have
\begin{equation}
\label{eq:tau_zero_point}
\footnotesize
    \frac{\mu_j(\tau_j) \Gamma \left(m_j,\mu_j(\tau_j)\right)}{\gamma \left(m_j+1,\mu_j(\tau_j)\right)}= \nu_j(\tau_j).
\end{equation}

To determine the solutions of \eqref{eq:tau_zero_point}, we first analyze the monotonicity of the LHS and right-hand side (RHS) of \eqref{eq:tau_zero_point}, separately. Let $h_{3,j}(\tau_j)\triangleq\frac{\mu_j(\tau_j) \Gamma \left(m_j,\mu_j(\tau_j)\right)}{\gamma \left(m_j+1,\mu_j(\tau_j)\right)}$, the first derivative of $h_{3,j}(\tau_j)$ w.r.t. $\tau_j$ is given by
\begin{equation}
\footnotesize
    \frac{\partial h_{3,j}(\tau_j)}{\partial \tau_j}=\frac{\partial h_{3,j}(\tau_j)}{\partial \mu_j(\tau_j)}\times\frac{\partial \mu_j(\tau_j)}{\partial \tau_j},
\end{equation}
where 

\vspace{-3.5mm}
\begin{footnotesize}
\begin{align}
    \frac{\partial h_{3,j}(\tau_j)}{\partial\mu_j(\tau_j)}=&\frac{h_{4,j}(\tau_j)\Gamma(m_j,\mu_j(\tau_j)) }{\gamma \left(m_j+1,\mu_j(\tau_j)\right)^2},\\
    h_{4,j}(\tau_j) \triangleq&-\left(\mu_j(\tau_j)\Gamma(m_j,\mu_j(\tau_j))+\gamma(m_j+1,\mu_j(\tau_j))\right)\nonumber\\
    & \times \frac{\exp(-\mu_j(\tau_j))\mu_j(\tau_j)^{m_j}}{\Gamma(m_j,\mu_j(\tau_j))}+\gamma(m_j+1,\mu_j(\tau_j)),    \\
    \frac{\partial \mu_j(\tau_j)}{\partial \tau_j} =& \frac{m_j}{\omega_j}>0. \label{eq:mu_monotonic}
\end{align}    
\end{footnotesize}

Note that the sign of $\frac{\partial h_{3,j}(\tau_j)}{\partial\mu_j(\tau_j)}$ depends on that of $h_{4,j}(\tau_j)$, which remains difficult to determine directly. Thus, we further take the first derivative of $h_{4,j}(\tau_j)$ w.r.t. $\mu_j(\tau_j)$, which can be given by
\begin{equation}
    \footnotesize
    \hspace{-2.5mm}\frac{\partial h_{4,j}(\tau_j)}{\partial\mu_j(\tau_j)}=\frac{h_{5,j}(\tau_j)(\mu_j(\tau_j)\Gamma(m_j,\mu_j(\tau_j))+\gamma(m_j+1,\mu_j(\tau_j))) }{\exp(2\mu_j(\tau_j))\mu_j(\tau_j)^{1-2m_j}\Gamma \left(m_j,\mu_j(\tau_j)\right)^2},\hspace{-1mm}\label{eq:partial_h3}
\end{equation}
where
\begin{equation}
\footnotesize
    h_{5,j}(\tau_j) \triangleq\mu_j(\tau_j)^{-m_j}\exp(\mu_j(\tau_j))(\mu_j(\tau_j)-m_j)\Gamma(m_j,\mu_j(\tau_j))-1. \label{eq:partial_h5}
\end{equation}
Following \eqref{eq:partial_h3}, the sign of ${\partial h_{4,j}(\tau_j)}/{\partial\mu_j(\tau_j)}$ depends on that of $h_{5,j}(\tau_j)$. When $\mu_j(\tau_j)\leq m_j$, $h_{5,j}(\tau_j)<0$. When $\mu_j(\tau_j)>m_j$, according to the inequality in [\citenum{Borwein2009}, Eq. (2.1)], we have

\vspace{-3.5mm}
\begin{footnotesize}
\begin{align}
    h_{5,j}(\tau_j) = & \left(\mu\left(\tau_j\right)-m_j\right)\int_0^\infty\exp\left(-\mu\left(\tau_j\right)t\right)\left(1+t\right)^{m_j-1}dt  -1   \nonumber \\  \leq& \frac{\mu_j(\tau_j)-m_j}{\mu_j(\tau_j)-m_j+1}-1 <0. 
\end{align}    
\end{footnotesize}Consequently, $h_{5,j}(\tau_j)<0$ always holds leading to ${\partial h_{4,j}(\tau_j)}/{\partial\mu_j(\tau_j)}<0$, which indicates that $h_{4,j}(\tau_j)$ monotonically decreases with $\mu_j(\tau_j)$. 
Note that $\lim_{\mu_j(\tau_j)\to0} h_{4,j}(\tau_j)=0$, thus $h_{4,j}(\tau_j)<0$ for all $\mu_j(\tau_j)>0$, which results in ${\partial h_{3,j}(\tau_j)}/{\partial\mu_j(\tau_j)}<0$. 
Therefore, $h_{3,j}(\tau_j)$ is a monotonically decreasing function of $\tau_j$, with $\lim_{\tau_j\to\hat{\sigma}_{\mathrm{w}_j}^{\mathrm{lb}}} h_{3,j}(\tau_j)=+\infty$ and $\lim_{\tau_j\to+\infty} h_{3,j}(\tau_j)=0$.

Next, we analyze the monotonicity of the RHS of \eqref{eq:tau_zero_point}. The first derivative of $\nu_j(\tau_j)$ w.r.t. $\tau_j$ is given by
\begin{equation}
\footnotesize
    \frac{\partial \nu_j(\tau_j)}{\partial\tau_j} = \frac{\tau_j/\hat{\sigma}_{\mathrm{w}_j}^{\mathrm{lb}}-1-\log(\tau_j/\hat{\sigma}_{\mathrm{w}_j}^{\mathrm{lb}})}{(\tau_j/\hat{\sigma}_{\mathrm{w}_j}^{\mathrm{lb}}-1)^2}.
\end{equation}
According to the logarithmic inequality $x-1\geq\log(x), \forall x\geq0$, we have ${\partial \nu_j(\tau_j)}/{\partial\tau_j}>0$. Therefore, $\nu_j(\tau_j)$ is a monotonically increasing function of $\tau_j$. In addition, we have $\lim_{\tau_j\to\hat{\sigma}_{\mathrm{w}_j}^{\mathrm{lb}}} \nu_j(\tau_j)=0$ and $\lim_{\tau_j\to+\infty} \nu_j(\tau_j)=+\infty$.

According to the above analysis about LHS and RHS of \eqref{eq:tau_zero_point}, we know that there is a unique solution for \eqref{eq:tau_zero_point} over $[\hat{\sigma}_{\mathrm{w}_j}^{\mathrm{lb}},+\infty)$, which is denoted by $\tau'_j$.
If $\tau_j' > \hat{\sigma}_{\mathrm{w}_j}^{\mathrm{ub}}$, $\check{\xi}_j$ is monotonically decreasing over $[\hat{\sigma}_{\mathrm{w}_j}^{\mathrm{lb}} ,\hat{\sigma}_{\mathrm{w}_j}^{\mathrm{ub}}]$, thus the optimal detection threshold $\tau_j^*=\hat{\sigma}_{\mathrm{w}_j}^{\mathrm{ub}}$.  
If $\hat{\sigma}_{\mathrm{w}_j}^{\mathrm{lb}} \le \tau_j' \le \hat{\sigma}_{\mathrm{w}_j}^{\mathrm{ub}}$, $\check{\xi}_j$ is monotonically decreasing over $[\hat{\sigma}_{\mathrm{w}_j}^{\mathrm{lb}}, \tau_j']$ and monotonically increasing over $(\tau_j', \hat{\sigma}_{\mathrm{w}_j}^{\mathrm{ub}}]$, therefore $\tau_j^*=\tau_j'$.

\vspace{-7pt}
\subsection{Proof of Theorem \ref{Theorem:DEP_lower_bound}} \label{Proof:DEP_lower_bound}
\vspace{-2pt}
Let $h_{6,j}(x) \triangleq x \Gamma(m_j,x)-\Gamma(m_j+1,x)$, \eqref{eq:xi_star} can be rewritten as

\vspace{-4mm}
\begin{footnotesize}
\begin{align}
\label{eq:DEP_lower_proof_1}
    \check{\xi}_j^* =
     & 1-\frac{ \omega_j\log \left({ \tau_j^*}/{\hat{\sigma}_{\mathrm{w}_j}^{\mathrm{lb}}}\right) \left(h_{6,j}(\mu_j(\tau_j^*)) + \Gamma \left(m_j+1\right)\right)}{2 \log (\rho) \Gamma (m_j+1) (\tau_j^*-\hat{\sigma}_{\mathrm{w}_j}^{\mathrm{lb}})}. 
\end{align}    
\end{footnotesize}Taking the first derivative of $h_{6,j}(x)$ w.r.t. $x$, we can have
\begin{equation}
\footnotesize
    \frac{\partial h_{6,j}(x)}{\partial x} = \Gamma(m_j,x) >0,
\end{equation}
indicating that $h_{6,j}(x)$ monotonically increases with $x$. Additionally, we have ${ h_{6,j}(0)}=-\Gamma(m_j+1)$ and $\lim_{_{x\to \infty}}{h_{6,j}(x)}=0$. Then, we can conclude that $h_{6,j}(x)<0$ for all $x\geq0$. 
By applying $h_{6,j}(x)=0$ to \eqref{eq:DEP_lower_proof_1}, we can have 

\vspace{-3mm}
\begin{footnotesize}
\begin{align}
\label{eq:DEP_lower_proof_2}
    \check{\xi}_j^*
    > & 1-\frac{\omega_j \left(\log \left({ \tau_j^*}\right)-\log(\hat{\sigma}_{\mathrm{w}_j}^{\mathrm{lb}})\right)}{2 \log (\rho) ( \tau_j^*-\hat{\sigma}_{\mathrm{w}_j}^{\mathrm{lb}})}. 
\end{align}    
\end{footnotesize}

Let $f(x)=\log x$. According to the mean value theorem, there exists at least one point $c\in(\hat{\sigma}_{\mathrm{w}_j}^{\mathrm{lb}},\tau_j^*)$ such that $f'(c)=\frac{1}{c}=\frac{f(\tau_j^*)-f(\hat{\sigma}_{\mathrm{w}_j}^{\mathrm{lb}})}{\tau_j^*-\hat{\sigma}_{\mathrm{w}_j}^{\mathrm{lb}}}$.
Since $\frac{1}{c}<\frac{1}{\hat{\sigma}_{\mathrm{w}_j}^{\mathrm{lb}}}$, it follows that  $\check{\xi}_j^*>1-\frac{\omega_j}{2\log(\rho)\hat{\sigma}_{\mathrm{w}_j}^{\mathrm{lb}}}$, which completes the proof.

Following this, we continue to discuss the tightness of the lower bound $\check{\xi}^{\mathrm{lb}}_{j}$.
Firstly, the tightness of \eqref{eq:DEP_lower_proof_2} is considered. To achieve covert communication, the second term of the RHS of \eqref{eq:DEP_lower_proof_1} must approach $0$, which causes DEP to approach $1$. Accordingly, $\omega_j$ in \eqref{eq:DEP_lower_proof_1} is required to approach $0$.
Based on this, as $\omega_j \to 0$,  we have $\lim_{_{\omega_j\to 0}}\mu_j(\tau_j^*)=\infty$ and $\lim_{_{\omega_j\to 0}}{h_{6,j}(\mu_j(\tau_j^*))}=0$. Thus, the lower bound in \eqref{eq:DEP_lower_proof_2} is tight.
Next, we explore the tightness of \eqref{eq:DEP_lower_bound}. 
Generally, covert communication assumes powerful wardens who have a strong capability to estimate noise power, i.e., $\rho$ is small \cite{He2017}. 
Consequently, the interval $(\hat{\sigma}_{\mathrm{w}_j}^{\mathrm{lb}},\tau_j^*)$ is sufficiently narrow, such that $1/c \to 1/\hat{\sigma}_{\mathrm{w}_j}^{\mathrm{lb}}$, indicating that the lower bound in \eqref{eq:DEP_lower_bound} is also tight.

\vspace{-7pt}
\subsection{Proof of Theorem \ref{Theorem:minimum_psi_ab}} \label{Proof:minimum_psi_ab}
\vspace{-2pt}
Following problem~\eqref{Prob:antenna_perfect_Gab}, we can obtain the optimal power can be expressed as $P_\mathrm{a}^* =\min( P_{\mathrm{a},\max}, P_{\mathrm{a}}^{\mathrm{ub}})$, where $P_{\mathrm{a}}^{\mathrm{ub}}=\min_j$\scalebox{0.85}{$\big({\eta_{\mathrm{w}_j}}/{(G_{\mathrm{a},\mathrm{w}_j}{(1+K\,\mathrm{Tr}(\mathbf{V}_{\mathrm{w}_j}\overline{\mathbf{H}}_{\mathrm{a},\mathrm{w}_j}\widehat{\mathbf{W}}_{\mathrm{a}}\overline{\mathbf{H}}_{\mathrm{a},\mathrm{w}_j}^{\dagger}))})} \big)$}.
To obtain the optimal antenna boresight vector, we consider the following two cases: $P_\mathrm{a}^* = P_{\mathrm{a},\max} \neq P_{\mathrm{a}}^{\mathrm{ub}}$ and $P_\mathrm{a}^* = P_{\mathrm{a}}^{\mathrm{ub}}$.

When $P_\mathrm{a}^* = P_{\mathrm{a},\max} \neq P_{\mathrm{a}}^{\mathrm{ub}}$, it implies that $P_{\mathrm{a},\max} < P_{\mathrm{a}}^{\mathrm{ub}}$. In this case, the constraint~\eqref{cons:covert_req_antenna} is always satisfied. Accordingly, Alice can adopt the maximum transmit power $P_{\mathrm{a},\max}$ and maximum antenna gain $G_{\mathrm{a},\max}^{\text{dBi}}$ to transmit covert signals, i.e., $G_{\mathrm{a},\mathrm{b}}^{\text{dBi}} = G_{\mathrm{a},\max}^{\text{dBi}}$. According to \eqref{antenna_gain_alice}, we can have that the optimal antenna boresight vector $\mathbf{o}_{\mathrm{a}}^*$ should satisfy $\vartheta_{\mathrm{a},\mathrm{b}} \le \vartheta_0$.

When $P_\mathrm{a}^* = P_{\mathrm{a}}^{\mathrm{ub}}$, it implies that $P_{\mathrm{a},\max} \geq P_{\mathrm{a}}^{\mathrm{ub}}$. Substituting $P_\mathrm{a}^*$ into the objective function \eqref{obj:Gab_Pa_wa}, problem~\eqref{Prob:antenna_perfect_Gab} can be simplified as 
\vspace{-1pt}
\begin{equation}
\footnotesize
    \label{Prob:antenna_perfect_reduce}
     \max_{ \mathbf{o}_{\mathrm{a}}}\;   G_{\mathrm{a},\mathrm{b}}^{\text{dBi}}-G_{\mathrm{a},\mathrm{w}_{j^*}}^{\text{dBi}}.
\end{equation}
where $j^* = \arg\min_j$\scalebox{0.85}{$\big({\eta_{\mathrm{w}_j}}/{G_{\mathrm{a},\mathrm{w}_j}{(1+K\,\mathrm{Tr}(\mathbf{V}_{\mathrm{w}_j}\overline{\mathbf{H}}_{\mathrm{a},\mathrm{w}_j}\widehat{\mathbf{W}}_{\mathrm{a}}\overline{\mathbf{H}}_{\mathrm{a},\mathrm{w}_j}^{\dagger}))}} \big)$}.
To solve problem \eqref{Prob:antenna_perfect_reduce}, the optimal antenna orientation should maximize the difference between $G_{\mathrm{a},\mathrm{b}}^{\text{dBi}}$ and $G_{\mathrm{a},\mathrm{w}_{j^*}}^{\text{dBi}}$. In this case, the default antenna orientation $\mathbf{o}_{\mathrm{a}} = \mathbf{q}_{\mathrm{b}} - \mathbf{q}_{\mathrm{a}}$ is a feasible solution to problem~\eqref{Prob:antenna_perfect_reduce}, and it satisfies $G_{\mathrm{a},\mathrm{b}}^{\text{dBi}} = G_{\mathrm{a},\max}^{\text{dBi}}$, where $G_{\mathrm{a},\max}^{\text{dBi}} > G_{\mathrm{a},\mathrm{w}_{j^*}}^{\text{dBi}}$.
Thus, the optimal antenna orientation $\mathbf{o}_{\mathrm{a}}^*$ should satisfy $G_{\mathrm{a},\mathrm{b}}^{\text{dBi}} > G_{\mathrm{a},\mathrm{w}_{j^*}}^{\text{dBi}} $, i.e., $\vartheta_{\mathrm{a},\mathrm{b}}<\vartheta_{\mathrm{a},\mathrm{w}_{j^*}}$. 

To further prove the optimal antenna orientation $\mathbf{o}_{\mathrm{a}}^*$ satisfies Theorem \ref{Theorem:minimum_psi_ab} under the condition $P_\mathrm{a}^* = P_{\mathrm{a}}^{\mathrm{ub}}$, we proceed by contradiction. In particular, we assume the antenna boresight vector $\mathbf{o}_{\mathrm{a}}^*$ satisfying $\vartheta_{\mathrm{a},\mathrm{b}} \neq \vartheta_0$, which can be discussed in the following two cases: $\vartheta_{\mathrm{a},\mathrm{b}} > \vartheta_0$ and $\vartheta_{\mathrm{a},\mathrm{b}} < \vartheta_0$.

Under the assumption that $\mathbf{o}_{\mathrm{a}}^*$ satisfies $\vartheta_{\mathrm{a},\mathrm{b}} > \vartheta_0$, we can adjust $\mathbf{o}_{\mathrm{a}}^*$ leading to $\vartheta_{\mathrm{a},\mathrm{b}}$ decrease by an arbitrarily small value $\Delta \vartheta_{\mathrm{a},\mathrm{b}}$, i.e., $\Delta \vartheta_{\mathrm{a},\mathrm{b}}<0$. 
As $\mathbf{o}_{\mathrm{a}}^*$ is adjusted, the off-boresight angle of the warden $j^*$ also varies. The corresponding variation is denoted by $\Delta \vartheta_{\mathrm{a},\mathrm{w}_{j^*}}$ and can be given by $\Delta \vartheta_{\mathrm{a},\mathrm{w}_{j^*}}=\cos\varTheta_{j^*} \Delta \vartheta_{\mathrm{a},\mathrm{b}}$, where $\cos\varTheta_{j^*} =\frac{(\mathbf{o}_{\mathrm{a}}^* \times \mathbf{u}_{\mathrm{a},\mathrm{b}})\cdot(\mathbf{o}_{\mathrm{a}}^* \times \mathbf{u}_{\mathrm{a},\mathrm{w}_{j^*}})}{\left \|\mathbf{o}_{\mathrm{a}}^* \times \mathbf{u}_{\mathrm{a},\mathrm{b}}\right \|\cdot\|\mathbf{o}_{\mathrm{a}}^* \times \mathbf{u}_{\mathrm{a},\mathrm{w}_{j^*}} \|} $~[\citenum{Chauvenet1876}, Eq. (285)].
According to \eqref{antenna_gain_alice}, the variation in $G_{\mathrm{a},\mathrm{b}}^{\text{dBi}}$ and $G_{\mathrm{a},\mathrm{w}_{j^*}}^{\text{dBi}}$ can be expressed as
\begin{equation}
    \footnotesize
    \hspace{-9mm}\Delta G_{\mathrm{a},\mathrm{b}}^{\text{dBi}}  = \frac{\partial G_{\mathrm{a},\mathrm{b}}^{\text{dBi}}}{\partial \vartheta_{\mathrm{a},\mathrm{b}}} \Delta \vartheta_{\mathrm{a},\mathrm{b}}=\begin{cases} \frac{-25\Delta \vartheta_{\mathrm{a},\mathrm{b}}}{\vartheta_{\mathrm{a},\mathrm{b}}\log(10)},&\hspace{-2mm} \vartheta_0\leq\vartheta_{\mathrm{a},\mathrm{b}}\leq 48^{\circ},\\
        0, & \hspace{-2mm}\text{otherwise},\end{cases} \label{eq:partial_Gab}
\end{equation}
\begin{equation}
    \footnotesize
    \hspace{0mm}\Delta G_{\mathrm{a},\mathrm{w}_{j^*}}^{\text{dBi}} \hspace{-2pt}= \frac{\partial G_{\mathrm{a},\mathrm{w}_{j^*}}^{\text{dBi}}}{\partial \vartheta_{\mathrm{a},\mathrm{w}_{j^*}}} \Delta \vartheta_{\mathrm{a},\mathrm{w}_{j^*}} \hspace{-2pt}= \begin{cases}
        \frac{-25\Delta \vartheta_{\mathrm{a},\mathrm{w}_{j^*}}}{\vartheta_{\mathrm{a},\mathrm{w}_{j^*}}\log(10)}, & \hspace{-2.5mm}\vartheta_0\leq\vartheta_{\mathrm{a},\mathrm{w}_{j^*}}\leq 48^{\circ}, \label{eq:partial_Gaw} \\
        0, &\hspace{-2.5mm} \text{otherwise}.
    \end{cases}
\end{equation}
Consider the worst case that the adjustment of $\mathbf{o}_{\mathrm{a}}^*$ leads to the maximum $\Delta G_{\mathrm{a},\mathrm{w}_{j^*}}^{\text{dBi}}$, i.e., $\cos\varTheta_{j^*}=1$. Thus, $\Delta \vartheta_{\mathrm{a},\mathrm{w}_{j^*}}=\Delta \vartheta_{\mathrm{a},\mathrm{b}}$. 
Since $\vartheta_{\mathrm{a},\mathrm{b}}<\vartheta_{\mathrm{a},\mathrm{w}_{j^*}}$, it follows that $\Delta G_{\mathrm{a},\mathrm{b}}^{\text{dBi}} > \Delta G_{\mathrm{a},\mathrm{w}_{j^*}}^{\text{dBi}}\geq 0$.
In this case, an adjustment to $\mathbf{o}_{\mathrm{a}}^*$ that reduces $\vartheta_{\mathrm{a},\mathrm{b}}$ until $\vartheta_{\mathrm{a},\mathrm{b}} \leq \vartheta_0$ leads to $\Delta G_{\mathrm{a},\mathrm{b}}^{\text{dBi}} - \Delta G_{\mathrm{a},\mathrm{w}{j^*}}^{\text{dBi}} > 0$. This implies that $\mathbf{o}_{\mathrm{a}}^*$ can be further adjusted to increase the objective function until $\vartheta_{\mathrm{a},\mathrm{b}} \leq \vartheta_0$. Thereby, the results contradict the initial assumption that the optimal antenna boresight vector $\mathbf{o}_{\mathrm{a}}^*$ satisfies $\vartheta_{\mathrm{a},\mathrm{b}} > \vartheta_0$.

Under the assumption that the optimal $\mathbf{o}_{\mathrm{a}}^*$ satisfies $\vartheta_{\mathrm{a},\mathrm{b}}<\vartheta_0$, 
we can adjust $\mathbf{o}_a^*$ leading to $\vartheta_{\mathrm{a},\mathrm{w}_{j^*}}$ increasing with an arbitrarily small positive value $\Delta \vartheta_{\mathrm{a},\mathrm{w}_{j^*}}>0$, as well as leading to $ \vartheta_{\mathrm{a},\mathrm{b}}$ varying with a value $\Delta \vartheta_{\mathrm{a},\mathrm{b}}$ and $\Delta \vartheta_{\mathrm{a},\mathrm{b}}=\cos\varTheta_{j^*}\Delta\vartheta_{\mathrm{a},\mathrm{w}_{j^*}}$.
In this case, we can have $\Delta G_{\mathrm{a},\mathrm{b}}^{\text{dBi}}=0$ and $\Delta G_{\mathrm{a},\mathrm{w}_{j^*}}^{\text{dBi}}\leq 0$. Consequently, $\Delta G_{\mathrm{a},\mathrm{w}_{j^*}}^{\text{dBi}}\leq \Delta G_{\mathrm{a},\mathrm{b}}^{\text{dBi}}$, and thus $\Delta G_{\mathrm{a},\mathrm{b}}^{\text{dBi}} - \Delta G_{\mathrm{a},\mathrm{w}{j^*}}^{\text{dBi}} \geq 0$, which implies that $\mathbf{o}_a^*$ can be further adjusted to improve the objective function until $\vartheta_{\mathrm{a},\mathrm{b}}\geq\vartheta_0$.
Therefore, the above results contradict the initial assumption that the antenna boresight vector $\mathbf{o}_{\mathrm{a}}^*$ satisfies $\vartheta_{\mathrm{a},\mathrm{b}} < \vartheta_0$.

Based on the above results, the optimal antenna boresight vector $\mathbf{o}_{\mathrm{a}}^*$ should satisfy $\vartheta_{\mathrm{a},\mathrm{b}} = \vartheta_0$ under both $P_\mathrm{a}^* = P_{\mathrm{a},\max} \neq P_{\mathrm{a}}^{\mathrm{ub}}$ and $P_\mathrm{a}^* = P_{\mathrm{a}}^{\mathrm{ub}}$.

\vspace{-9pt}

\bibliographystyle{IEEEtran}
\bibliography{satellite_covert_uplink.bib}

\end{document}